\documentclass[11pt]{article}
\usepackage[utf8]{inputenc}

\usepackage[margin=1in]{geometry}
\usepackage{amsmath,amsfonts,amssymb,amsthm}
\usepackage{aliascnt}
\usepackage{mathtools}
\mathtoolsset{centercolon}
\usepackage{mathrsfs}
\usepackage{bm}

\usepackage{float}

\usepackage{tikz}
\usetikzlibrary{arrows.meta,positioning}

\usepackage{enumitem}
\usepackage{booktabs}
\usepackage{xcolor}
\usepackage[colorlinks]{hyperref}
\usepackage[capitalize]{cleveref}

\allowdisplaybreaks
\numberwithin{equation}{section}

\newtheorem{theorem}{Theorem}[section]

\newaliascnt{lemma}{theorem}
\newtheorem{lemma}[lemma]{Lemma}
\aliascntresetthe{lemma}

\newaliascnt{corollary}{theorem}
\newtheorem{corollary}[corollary]{Corollary}
\aliascntresetthe{corollary}

\newaliascnt{proposition}{theorem}
\newtheorem{proposition}[proposition]{Proposition}
\aliascntresetthe{proposition}

\newaliascnt{assumption}{theorem}

\aliascntresetthe{assumption}

\theoremstyle{definition}

\newaliascnt{definition}{theorem}
\newtheorem{definition}[definition]{Definition}
\aliascntresetthe{definition}

\newaliascnt{notation}{theorem}

\aliascntresetthe{notation}

\newaliascnt{remark}{theorem}
\newtheorem{remark}[remark]{Remark}
\aliascntresetthe{remark}

\newaliascnt{example}{theorem}

\aliascntresetthe{example}

\crefname{problem}{problem}{problems}
\Crefname{problem}{Problem}{Problems}
\crefname{theorem}{theorem}{theorems}
\Crefname{theorem}{Theorem}{Theorems}
\crefname{lemma}{lemma}{lemmas}
\Crefname{lemma}{Lemma}{Lemmas}
\crefname{corollary}{corollary}{corollaries}
\Crefname{corollary}{Corollary}{Corollaries}
\crefname{proposition}{proposition}{propositions}
\Crefname{proposition}{Proposition}{Propositions}
\crefname{assumption}{assumption}{assumptions}
\Crefname{assumption}{Assumption}{Assumptions}
\crefname{definition}{definition}{definitions}
\Crefname{definition}{Definition}{Definitions}
\crefname{notation}{notation}{notations}
\Crefname{notation}{Notation}{Notations}
\crefname{remark}{remark}{remarks}
\Crefname{remark}{Remark}{Remarks}
\crefname{example}{example}{examples}
\Crefname{example}{Example}{Examples}

\newcommand{\C}{\mathbb{C}}
\newcommand{\R}{\mathbb{R}}
\newcommand{\I}{\mathrm{I}}
\newcommand{\spec}{\operatorname{spec}}

\DeclareMathOperator{\diag}{diag}

\newcommand{\abs}[1]{\left\lvert #1 \right\rvert}
\newcommand{\norm}[1]{\left\lVert #1 \right\rVert}

\newcommand{\RePart}{\operatorname{Re}}
\newcommand{\ImPart}{\operatorname{Im}}

\newcommand{\ket}[1]{\lvert #1 \rangle}
\newcommand{\bra}[1]{\langle #1 \rvert}

\newcommand{\secref}[1]{\hyperref[sec:#1]{Section~\ref*{sec:#1}}}
\newcommand{\appref}[1]{\hyperref[app:#1]{Appendix~\ref*{app:#1}}}
\newcommand{\thmref}[1]{\hyperref[thm:#1]{Theorem~\ref*{thm:#1}}}
\newcommand{\lemref}[1]{\hyperref[lem:#1]{Lemma~\ref*{lem:#1}}}
\newcommand{\corref}[1]{\hyperref[cor:#1]{Corollary~\ref*{cor:#1}}}
\newcommand{\fig}[1]{\hyperref[fig:#1]{Figure~\ref*{fig:#1}}}

\newcommand{\Disk}{\mathbb{D}}
\newcommand{\Id}{\I}
\newcommand{\ii}{\mathrm{i}}
\newcommand{\ee}{\mathrm{e}}
\newcommand{\dd}{\mathrm{d}}
\newcommand{\Ran}{\operatorname{Ran}}
\newcommand{\Pcal}{\mathcal{P}}
\newcommand{\DiskAlg}{\mathcal D(\Disk)}
\DeclareMathOperator{\Log}{Log}
\DeclareMathOperator{\Sign}{sign}
\DeclareMathOperator{\ReLU}{ReLU}
\DeclareMathOperator*{\Res}{Res}
\newcommand{\ip}[2]{\left\langle #1,#2\right\rangle}
\newcommand{\Projector}[1]{\ket{#1}\!\bra{#1}}
\newcommand{\Outer}[2]{\ket{#1}\!\bra{#2}}

\usepackage{tabularx}
\usepackage{ragged2e}
\usepackage{makecell}
\usepackage{placeins}
\newcolumntype{Y}{>{\RaggedRight\arraybackslash}X}

\usepackage{array,threeparttable,multirow,pifont}

\usepackage{graphicx}

\title{Optimal Quantum Eigenvalue Transformation via\\Linear Combinations of Hermitian Matrices}
\author{Yanqiao Wang$^{1,2,3}$, Yixuan Liang$^{1,2}$, Hongjia Chen$^{4}$, Jin-Peng Liu$^{1,5,6,\thanks{Corresponding author: liujinpeng@tsinghua.edu.cn}}$ \\
\footnotesize $^{1}$ Yau Mathematical Sciences Center, Tsinghua University\\
\footnotesize $^{2}$ Qiuzhen College, Tsinghua University\\
\footnotesize $^{3}$ Institute for AI Industry Research (AIR), Tsinghua University, Beijing, China \\
\footnotesize $^{4}$ School of Mathematical Sciences, Peking University\\
\footnotesize $^{5}$ Institute for Applied Mathematics, Tsinghua University\\
\footnotesize $^{6}$ Beijing Institute of Mathematical Sciences and Applications}
\date{}

\begin{document}
\maketitle

\begin{abstract}
We discover two complementary linear-combination-of-Hermitian-matrices (LCHM) formulations to achieve a general non-normal matrix eigenvalue transformation $g(A)$. Firstly, for
$A=L+\ii H$ with Hermitian $L$ and $H$, the vanilla LCHM formula represents $g(A)$ as a kernel integral of $g(\ii(H+kL))$, and it contains linear-combination-of-Hamiltonian-simulation (LCHS) [An,~Liu,~Lin, Phys.~Rev.~Lett.~2023] as the special case for matrix exponentials. Secondly, for the angular Hermitian $X_\theta = \cos\theta\,L+\sin\theta\,H$, the Weyl LCHM formula expresses $g(A)$ via integrating $g(\ee^{\ii\theta}
(X_\theta\pm\ii(\Id-X_\theta^2)^{1/2}))$ with minimal weights.
For the matrix power instance $g(A)=A^m$, the Fourier projection of Weyl LCHM gives
\[
A^m=\frac{2}{\pi}\int_0^\pi \text{e}^{\text{i} m\theta}T_m(X_\theta)\,\text{d}\theta = \frac{2}{N}\sum_{j=0}^{N-1}
\text{e}^{\text{i} m\theta_j}T_m(X_{\theta_j}),\quad\theta_j=\frac{\pi j}{N},\quad \text{for every}\ N>m
\]
with Chebyshev polynomial of Hermitian $T_m(X_\theta)$ and $N$ samples. The discrete formula is exact, introduces no truncation and angular quadrature error, and offers $\mathcal{O}(1)$ post-selection weights.  

LCHM formulas lead to new quantum eigenvalue transformation (QET) algorithms. As an exact and compact representation for a degree-$d$ polynomial transform $p_d(A)$ on $|\psi\rangle$, our QET algorithm can achieve optimal $\Theta(d)$ matrix-query circuit depth and optimal $\mathcal{O}(||p_d||_{\infty}/||p_d(A)|\psi\rangle||)$ post-selection repetitions. LCHM-based QETs unify various quantum linear algebraic problems with near-optimal $\mathcal{\widetilde O}(d\log(d/\epsilon))$ Clifford$+T$ gates, including matrix exponentials and driven ODEs (reduced to standard LCHS), affine iterative methods, first- and higher-order resolvents, $\Log(\Id+A)$, shifted fractional powers $(\lambda\Id+A)^\nu$, Sign and ReLU transforms, and Faber approximation on noncircular domains. Besides optimal asymptotic scaling, Weyl LCHM with numerical-radius rescaling of weighted-shift matrices can exponentially reduce the prefactor in query complexity over prior QET methods. %Vanilla and Weyl LCHM formulas may provide new mathematical insights as well as practical algorithmic design principles towards fault-tolerant quantum computing.
\end{abstract}

\tableofcontents

\section{Introduction}\label{sec:introduction}

\subsection{Matrix functions and quantum eigenvalue transformation}

Functions of non-normal matrices occur in differential equations, control, stability theory, non-Hermitian physics, matrix equations, and scientific computing.  Their behavior cannot in general be inferred from eigenvalues alone: Jordan derivatives, the numerical range, and resolvent growth can all matter.  Classical matrix-function theory defines $g(A)$ through the eigenvalues and Jordan structure of $A$~\cite{Higham2008}.  This functional calculus is generally distinct from the singular-value transformation implemented by QSVT, which is defined through the singular-value decomposition of $A$~\cite{GilyenEtAl2019}.

Quantum computing offers the prospect of exponential or superpolynomial improvements for several foundational scientific-computing primitives.  Representative developments include universal, LCU/Taylor-series, and QSP/qubitization algorithms for Hamiltonian simulation~\cite{Lloyd1996,ChildsWiebe2012,BerryEtAl2015,LowChuang2017,LowChuang2019}; HHL, high-precision LCU/QSVT, and optimal-scaling approaches to quantum linear systems and matrix arithmetic~\cite{HarrowEtAl2009,ChildsKothariSomma2017,GilyenEtAl2019,CostaEtAl2022}; and high-order, history-state, spectral, time-marching, and Carleman-linearization algorithms for differential equations~\cite{Berry2014,BerryChildsOstranderWang2017,ChildsLiu2020,Krovi2023,FangLinTong2023}.  A central unifying mechanism is quantum signal processing (QSP), which encodes a degree-$d$ scalar polynomial through a length-$d$ sequence of signal uses and single-qubit rotations.  Quantum singular value transformation (QSVT) lifts this mechanism to block-encoded matrices and implements polynomial transformations of their singular values with degree-optimal query complexity~\cite{GilyenEtAl2019}.

Ordinary quantum eigenvalue transformation (QET) for a general non-normal matrix is substantially harder than QSVT.  QSVT naturally alternates the signal with its adjoint and therefore acts on singular values, becoming an ordinary eigenvalue transformation only in special cases such as Hermitian inputs.  By contrast, $p(A)$ depends on the ordered powers of the principal block itself, including its Jordan structure.  Leakage out of and back into the block-encoding success subspace means that powers of a generic signal unitary do not directly produce powers of its principal block.  Generalized quantum signal processing (GQSP) implements bounded complex polynomials of a unitary signal~\cite{MotlaghWiebe2024}, but this block-encoding obstruction remains for a general matrix input.

Several recent methods address this difficulty.  Quantum eigenvalue processing (QEP) constructs Chebyshev and Faber history states for polynomial transforms of non-normal matrices~\cite{LowSu2026}.  Regular block encoding uses a counter register to prevent failed branches from returning to the success subspace, after which GQSP implements the desired polynomial, including for defective matrices~\cite{GutierrezEtAl2026}.  These methods realize ordinary matrix transformations through a history-state or regularization layer.

A second family of methods represents matrix functions as linear combinations of simpler unitary or inverse primitives.  The original linear combination of Hamiltonian simulation (LCHS) represents dissipative matrix exponentials $\ee^{-tA}$, for an accretive matrix $A=L+\ii H$ with $L\succeq0$, as an integral of unitary evolutions generated by $H+kL$ and achieves optimal state-preparation dependence~\cite{AnLiuLin2023}.  An, Childs, and Lin developed an LCHS algorithm for general linear nonunitary dynamics with optimal state-preparation cost and near-optimal matrix-query dependence, while Low and Somma obtained optimal dependence on all parameters for time-independent linear nonunitary dynamics~\cite{AnChildsLin2026Dynamics,LowSomma2025}.  Laplace-LCHS extends this route to functions with a suitable integrable inverse-Laplace transform, including shifted inverse powers and mass-matrix differential equations~\cite{AnChildsLinYing2026}.  Fourier and Weyl-calculus variants improve kernels and extend the construction to broader analytic symbols and discrete LCHS measures~\cite{HuangAn2025,NiYing2026}.  These methods are particularly effective when the transformed Hermitian matrices enter through Hamiltonian exponentials.

Schr\"odingerization maps general linear differential equations to Schr\"odinger dynamics through a warped-phase transformation in an auxiliary dimension~\cite{JinLiuYu2024,JinLiuYu2023}.  It also applies to continuous-time formulations of discrete linear systems~\cite{JinLiu2024}.

Contour-integral algorithms use Cauchy's formula to reduce $g(A)$ to shifted resolvents and then apply quantum linear-system primitives~\cite{HaleHighamTrefethen2008,TakahiraEtAl2020,TakahiraEtAl2022,JiangAn2026}.  Their function class is broad, and they can be advantageous for strongly stable differential equations, but their cost depends on contour geometry, shifted-resolvent conditioning, and quadrature.  Sign-embedding algorithms first compress structured matrix equations or matrix functions into a half-plane matrix sign and then use a canonical real-line resolvent representation~\cite{WangLiu2026}.  Classical Faber approximation gives another way to adapt the polynomial degree to a noncircular numerical range~\cite{MoretNovati2001,Trefethen2019}.

We develop two LCHM formulas that reduce non-normal matrix functions to Hermitian matrix transforms.  Vanilla LCHM is tailored to functions represented through Hermitian pencils, while Weyl LCHM gives an exact angular realization of polynomial transforms.  Together they lead to optimal quantum eigenvalue transformation algorithms and cover a broad range of matrix functions arising in quantum linear algebra.

\subsection{Vanilla and Weyl LCHM formulations}

Throughout the paper we use the Cartesian decomposition
\begin{equation}
L=\frac{A+A^\dagger}{2},
\qquad
H=\frac{A-A^\dagger}{2\ii},
\qquad
A=L+\ii H,
\label{eq:intro-cartesian}
\end{equation}
with $L=L^\dagger$ and $H=H^\dagger$.

\begin{theorem}[Informal vanilla LCHM; \Cref{thm:plain-entire,thm:plain-strip}]
\label{thm:intro-vanilla-lchm}
For the function classes treated in \Cref{sec:plain-lchm}, there is an associated scalar kernel $f$ such that
\begin{equation}
g(A)=\int_{\R}f(k)\,g\!\left(\ii(H+kL)\right)\,\dd k.
\label{eq:intro-plain}
\end{equation}
The hypotheses and the strip-shift formula are stated in \Cref{thm:plain-entire,thm:plain-strip}.
\end{theorem}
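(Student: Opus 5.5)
The plan is to reduce the operator identity to a scalar kernel identity. The mechanism is the LCHS argument for $\ee^{-tA}$, generalized from the exponential to other functions $g$. For fixed $k\in\R$ set $M_k=\ii(H+kL)$. Note that $A=L+\ii H$, so $\ii(H+kL)$ is an affine reparametrization of the pencil generated by $L$ and $H$. The key scalar observation is the following. For $z=x+\ii y$ with $x$ in a suitable half-line or strip (playing the role of $L$) and real $y$ (the role of $H$), we want
\begin{equation*}
g(x+\ii y)=\int_\R f(k)\,g\bigl(\ii(y+kx)\bigr)\,\dd k .
\end{equation*}
Since $\ii(y+kx)=\ii y+\ii k x$, this amounts to representing the value of $g$ at $x+\ii y$ as an average of values of $g$ on the imaginary axis, sampled at points $\ii y+\ii k x$. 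The kernel condition is $\int f(k)\,\ee^{\ii s k x}\,\dd k=\ee^{sx}$ for the exponentials appearing in a Fourier/Laplace representation of $g$. For $g=\exp$ this is exactly the LCHS kernel condition $\int f(k)\,\ee^{\ii kx}\,\dd k=\ee^{x}$, e.g.\ $f(k)=\frac{1}{\pi(1+k^2)}$ up to sign/orientation conventions on the half-line $x\le 0$.

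\textbf{Step 1 (entire/exponential-type $g$).} I would write $g(z)=\int \ee^{sz}\,\dd\mu(s)$, or use the power series $g(z)=\sum_n c_n z^n$ for \Cref{thm:plain-entire}. The operator identity then follows once it holds for each exponential $\ee^{sA}$. For these I would use the LCHS derivation: define $U(t)=\int f(k)\,\ee^{t\ii(H+kL)}\,\dd k$ and show $\frac{\dd}{\dd t}U(t)=AU(t)$. Differentiating under the integral gives $\int f(k)\,\ii(H+kL)\,\ee^{t\ii(H+kL)}\,\dd k$. The $\ii kL$ term is converted into $L\,U(t)$ by integrating by parts in $k$, using $\partial_k \ee^{t\ii(H+kL)}$ and a Duhamel formula. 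This step requires the analytic-continuation property of $f$: the function $f$ must extend analytically to a half-plane with $\ii k f(k)$ related to $f$ via contour shift. Equivalently, I would verify that $\hat f(\xi)=\ee^{\xi}$ on the appropriate half-line and apply a contour-deformation argument in $k$, in the style of the original LCHS proof.

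\textbf{Step 2 (strip functions, \Cref{thm:plain-strip}).} For $g$ analytic only in a strip, I would shift: replace $A$ by $A-c\Id$ or rotate so that the spectrum of $L$ lies in the region where the kernel condition holds. Then I would apply the exponential representation through a Fourier/Laplace transform of $g$ along the strip and use Fubini. The shift produces the stated strip-shift formula.

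The main obstacle will be justifying the interchange of integrals and the contour deformation for non-commuting $L$ and $H$. The scalar computation only proves the identity when $L$ and $H$ commute. For the general case, the derivative/Duhamel argument must replace diagonalization, and one needs decay of $f$ together with bounds on $\|\ee^{t\ii(H+kL)}\|=1$ (unitarity), which give absolute convergence. Here I would try first to rely on the unitarity of $\ee^{\ii t(H+kL)}$ and on $f\in L^1$ with sufficient analyticity in a strip of $k$.
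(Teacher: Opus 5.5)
Your proposal has a genuine gap: the reduction to exponentials cannot prove the theorem, and the non-commutativity you flag as the main obstacle disappears once the kernel variable, not time, is complexified.

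\textbf{Step~1 fails.} Step~1 rests on the exponential identity $\ee^{sA}=\int_\R f(k)\,\ee^{\ii s(H+kL)}\,\dd k$, which is false in the generality you need. Take $H=0$ and $L=\ell\Id$ with $\ell>0$. For real $s>0$ the right-hand side has norm at most $\norm{f}_{L^1}$ uniformly in $s$, while $\ee^{s\ell}$ is unbounded. For $\ImPart s\neq0$ the same example gives $\abs{\ee^{\ii sk\ell}}=\ee^{-k\ell\ImPart s}$, so the integral diverges against any kernel with only stretched-exponential decay. The identity, and your ODE $U'(t)=AU(t)$, therefore survive only in the LCHS regime: real $s$ with a sign condition on $sL$.

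\Cref{thm:plain-entire}, however, allows an arbitrary Hermitian $L$ and explicitly covers all polynomials. A nonconstant polynomial is not a superposition of exponentials from that regime (\Cref{prop:laplace-separation}). So no Fubini argument over a Fourier/Laplace representation of $g$ can deliver the theorem. The power-series alternative you mention needs the identity for monomials $A^n$, which a Duhamel argument for exponentials does not supply. Your Cauchy kernel $1/(\pi(1+k^2))$ also does not integrate $\abs{k}$, so it already fails for $g(z)=z$. The kernel must decay faster than $g$ grows along the pencil, which is why the paper uses $f_\beta$ with $\abs{f_\beta(z)}\lesssim\ee^{-c_\beta\abs{z}^\beta}$ and restricts $g$ to order $\beta'<\beta$.

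\textbf{The missing idea.} The matrix-valued map $z\mapsto g(\ii(H+zL))$ is holomorphic in $z$; it is entire when $g$ is entire, since $\ii(H+zL)$ is affine in $z$. At $z=-\ii$ it equals $g(L+\ii H)=g(A)$ exactly, whether or not $L$ and $H$ commute. Set $G(z)=f_\beta(z)g(\ii(H+zL))$, where $f_\beta$ is holomorphic in the lower half-plane apart from a simple pole at $-\ii$ with residue $\ii/(2\pi)$. The residue theorem on the clockwise half-disk gives, entrywise, $\int_{-R}^{R}G(k)\,\dd k+\int_{\Gamma_R}G(z)\,\dd z=-2\pi\ii\Res_{z=-\ii}G=g(A)$. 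The arc vanishes because Crouzeix--Palencia plus the growth hypothesis bound $\norm{g(\ii(H+zL))}$ by $C'\ee^{c'R^{\beta'}}$, while the kernel supplies $\ee^{-c_\beta R^\beta}$. If you prefer your power-series route, the same residue computation gives the moments $\int f_\beta(k)k^j\,\dd k=(-\ii)^j$. These handle the noncommutative expansion of $(H+kL)^n$ in powers of $k$ directly.

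\textbf{Step~2 misreads the strip-shift formula.} The strip $\{-y_0\leq\ImPart z\leq0\}$ lives in the kernel variable. The second integral in \eqref{eq:plain-strip-master} is the bottom edge of a residue rectangle, where $\ii(H+(x-\ii y_0)L)=y_0L+\ii(H+xL)$. The hypothesis $L\succeq0$ keeps these numerical ranges in the closed right half-plane, where $g$ is holomorphic. Shifting $A$ by $c\Id$ or Fourier-transforming $g$ does not produce that remainder term.
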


For every real $k$, the matrix $H+kL$ is Hermitian.  When $g(z)=\ee^{-tz}$ and $L\succeq0$, \Cref{thm:intro-vanilla-lchm} reduces to the original LCHS identity.  The representation is obtained from the residue at the complex parameter $k=-\ii$, for which $\ii(H+kL)=A$.

The Weyl formulation uses the angular Hermitian pencil
\begin{equation}
X_\theta
=\cos\theta\,L+\sin\theta\,H
=\frac{\ee^{-\ii\theta}A+\ee^{\ii\theta}A^\dagger}{2}
=\RePart(\ee^{-\ii\theta}A).
\label{eq:intro-angle-matrix}
\end{equation}

\begin{theorem}[Informal Weyl LCHM for matrix powers; \Cref{thm:exact-projection}]
\label{thm:intro-weyl-lchm}
For every integer $m\geq1$,
\begin{equation}
A^m=\frac{2}{\pi}\int_0^\pi
\ee^{\ii m\theta}T_m(X_\theta)\,\dd\theta.
\label{eq:intro-continuous}
\end{equation}
On the equispaced grid
\begin{equation}
\theta_j=\frac{\pi j}{N},
\qquad j=0,\ldots,N-1,
\label{eq:intro-grid}
\end{equation}
one also has, for every $N>m$,
\begin{equation}
A^m=\frac{2}{N}\sum_{j=0}^{N-1}
\ee^{\ii m\theta_j}T_m(X_{\theta_j}).
\label{eq:intro-discrete}
\end{equation}
\end{theorem}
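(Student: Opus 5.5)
The plan is to expand $\ee^{\ii m\theta}T_m(X_\theta)$ as a matrix-valued trigonometric polynomial in $\theta$. Integrating it, or summing it over the grid, should then kill every Fourier mode except the constant one, and that constant mode should equal $\tfrac12 A^m$. Noncommutativity of $A$ and $A^\dagger$ is harmless, because we never reorder factors. We only track the phase attached to each word.

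\emph{Step 1 (word expansion).} Write $T_m(x)=\sum_{k}c_k x^k$. The sum runs over $k\le m$ with $k\equiv m \pmod 2$, since $T_m$ has the parity of $m$, and the leading coefficient is $c_m=2^{m-1}$ for $m\ge1$. By \eqref{eq:intro-angle-matrix}, $X_\theta^k=2^{-k}\sum_{w}\ee^{-\ii(a(w)-b(w))\theta}\,w$. Here $w$ ranges over ordered words of length $k$ in the letters $A,A^\dagger$, and $a(w)$, $b(w)$ count the occurrences of $A$ and $A^\dagger$ respectively. Hence
\[
\ee^{\ii m\theta}T_m(X_\theta)=\sum_{k}c_k2^{-k}\sum_{|w|=k}\ee^{\ii j(w)\theta}\,w,\qquad j(w)=m-a(w)+b(w).
\]

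\emph{Step 2 (frequency bookkeeping).} Since $a+b=k\le m$, we have $a-b\le m$, so $j(w)\ge0$. Moreover $j(w)=0$ forces $a=k=m$ and $b=0$, i.e.\ $w=A^m$. That word appears only in the top-degree term, with coefficient $2^{m-1}2^{-m}=\tfrac12$. The key parity observation is that $a-b\equiv a+b=k\equiv m \pmod 2$. Therefore every $j(w)$ is even, $j(w)=2r$ with $0\le r\le m$, and $r=0$ occurs only for $w=A^m$.

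\emph{Step 3 (continuous formula).} For $r\ge1$ we have $\int_0^\pi \ee^{2\ii r\theta}\,\dd\theta=0$. So $\int_0^\pi \ee^{\ii m\theta}T_m(X_\theta)\,\dd\theta=\tfrac{\pi}{2}A^m$, which gives \eqref{eq:intro-continuous}. Evenness of the frequencies is exactly what lets the half-period $[0,\pi]$ suffice; this parity step is the one point requiring care.

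\emph{Step 4 (discrete formula).} On the grid \eqref{eq:intro-grid}, $\sum_{j=0}^{N-1}\ee^{2\ii r\pi j/N}$ is a geometric sum of $N$th roots of unity. It vanishes unless $N\mid r$. Since $1\le r\le m<N$, every nonconstant mode sums to zero, while the constant mode contributes $N\cdot\tfrac12 A^m$. This yields \eqref{eq:intro-discrete}.
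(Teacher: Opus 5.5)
Your proposal is correct and follows essentially the same route as the paper's proof: expand $\ee^{\ii m\theta}T_m(X_\theta)$ into Fourier modes, use the parity of $T_m$ to see that all frequencies are even ($0,2,\ldots,2m$), identify the zero-mode coefficient $\tfrac12A^m$ from the leading term $2^{m-1}x^m$, and kill the remaining modes by integration over $[0,\pi]$ or by the root-of-unity average with $N>m$. Your word-level bookkeeping spells out the paper's parity and extremal-mode claims in more detail; the only cosmetic issue is that you use the letter $j$ both for the frequency $j(w)$ and for the grid index, which is worth renaming.
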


The coefficient of $\ee^{-\ii m\theta}$ in $T_m(X_\theta)$ is exactly $A^m/2$, even though $A$ and $A^\dagger$ need not commute.  Hence every polynomial of degree at most $d$ is recovered by an \emph{exact} root-of-unity projection with any $N>d$.  No diagonalization, contour discretization, shifted inverse, integral truncation, or angular quadrature is involved.

The angular family is controlled precisely by the numerical radius:
\begin{equation}
\sup_{\theta\in[0,\pi]}\norm{X_\theta}=w(A).
\label{eq:intro-radius}
\end{equation}

\begin{theorem}[Informal Weyl LCHM in the disk algebra; \Cref{thm:disk-algebra}]
\label{thm:intro-disk-angular}
If $w(A)\leq1$ and $g$ is holomorphic in $\Disk$ and continuous on $\overline\Disk$, then the unitary matrices
\begin{equation}
U_{\theta,\pm}
=\ee^{\ii\theta}\left(X_\theta\pm\ii(\Id-X_\theta^2)^{1/2}\right)
\label{eq:intro-unitaries}
\end{equation}
satisfy
\begin{equation}
g(A)=\frac1\pi\int_0^\pi
\left[g(U_{\theta,+})+g(U_{\theta,-})-g(0)\Id\right] \,\dd\theta.
\label{eq:intro-disk}
\end{equation}
\end{theorem}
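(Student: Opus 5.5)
The plan is to reduce the disk-algebra statement to the matrix-power identity of \Cref{thm:intro-weyl-lchm} and then pass to general $g$ by density.

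\textbf{Step 1: well-posedness and unitarity.} Since $w(A)\le1$, \eqref{eq:intro-radius} gives $\norm{X_\theta}\le1$ for every $\theta$. Hence $\Id-X_\theta^2\succeq0$ and its square root is well defined. Set $C_\theta=X_\theta$ and $S_\theta=(\Id-X_\theta^2)^{1/2}$. These commute and satisfy $C_\theta^2+S_\theta^2=\Id$. It follows that $V_\pm=C_\theta\pm\ii S_\theta$ is unitary, so $U_{\theta,\pm}$ is unitary and $g(U_{\theta,\pm})$ is defined by the continuous functional calculus on the unit circle. Because $g\in\DiskAlg$ is continuous on the circle, we have $\norm{g(U_{\theta,\pm})}\le\norm{g}_\infty$. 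The integrand is also continuous in $\theta$, since $X_\theta$ depends continuously on $\theta$, the square root is operator-continuous, and $g$ is uniformly continuous on $\overline\Disk$.

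\textbf{Step 2: monomials.} Take $g(z)=z^m$ with $m\ge1$. The matrices $C_\theta$ and $S_\theta$ commute, so we may diagonalize $X_\theta$ and write each eigenvalue as $\cos\phi$ with $\phi\in[0,\pi]$. On that eigenvector, $V_\pm$ acts as $\ee^{\pm\ii\phi}$, and therefore
\[
V_+^m+V_-^m=2\cos(m\phi)=2T_m(X_\theta).
\]
This gives $U_{\theta,+}^m+U_{\theta,-}^m=2\ee^{\ii m\theta}T_m(X_\theta)$, and the right-hand side of \eqref{eq:intro-disk} becomes
\[
\frac1\pi\int_0^\pi 2\ee^{\ii m\theta}T_m(X_\theta)\,\dd\theta-0=A^m
\]
by \eqref{eq:intro-continuous}. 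For $m=0$ the integrand is $\Id+\Id-\Id=\Id$, which integrates to $\Id=g(A)$. By linearity, \eqref{eq:intro-disk} holds for every polynomial $g$.

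\textbf{Step 3: density.} Here lies the main obstacle. A general $g\in\DiskAlg$ is a uniform limit on $\overline\Disk$ of polynomials $p_n$; one can take the Fejér means of its Taylor series, which converge uniformly for functions in the disk algebra. The right-hand side then converges. Indeed, each $\norm{p_n(U_{\theta,\pm})-g(U_{\theta,\pm})}\le\norm{p_n-g}_{\infty}$ because the $U_{\theta,\pm}$ are unitary (normal), and $p_n(0)\to g(0)$.

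For the left-hand side we need $p_n(A)\to g(A)$, where $g(A)$ must be defined for a possibly non-normal $A$ with spectrum touching the circle. I would handle this in one of two ways.

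\emph{Option (a).} Define $g(A)$ as the limit of $p_n(A)$ and justify convergence by a von Neumann-type inequality for the numerical radius. The Drury/Okubo–Ando bound $\norm{p(A)}\le2\norm{p}_\infty$ when $w(A)\le1$ gives Cauchy-ness. The identity itself also reproves such a bound, since the polynomial case gives $\norm{p(A)}\le 3\norm{p}_\infty$ directly from the integral.

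\emph{Option (b).} If the paper defines $g(A)$ by the holomorphic functional calculus, replace $A$ by $rA$ with $r<1$. Then the spectrum of $rA$ lies strictly inside $\Disk$, the Taylor series of $g$ converges on a neighbourhood of that spectrum, and the identity holds for $rA$ by summing the monomial identities. Letting $r\to1^-$ then uses continuity of both sides in $r$. The right-hand side is handled by $g_r(z)=g(rz)\to g$ uniformly. The left-hand side needs the same uniform bound $\norm{p(A)}\le 3\norm{p}_\infty$ from the polynomial case.

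In either option, the polynomial identity plus the uniform bound closes the argument.
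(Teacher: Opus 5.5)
Your Option (a) is essentially the paper's proof: you obtain the monomial case from the continuous projection identity together with $T_m(X_\theta)=\tfrac12(V_+^m+V_-^m)$, then pass to $g$ by uniform polynomial approximation on $\overline\Disk$ with $g(A)$ defined as $\lim r_\ell(A)$; your remark that the polynomial identity itself gives $\norm{p(A)}\le 3\norm{p}_{\infty,\overline\Disk}$ (indeed $\le\norm{2p-p(0)}_{\infty,\partial\Disk}$) makes well-definedness self-contained rather than relying on Okubo--Ando. In Option (b), note that $g(rA)=g_r(A)$, but the right-hand side for the pair $(g,rA)$ involves different unitaries, and the Taylor series of $g$ may diverge on their spectra on $\partial\Disk$; termwise summation is valid only for the pair $(g_r,A)$, and that is the pair your limit $g_r\to g$ actually uses.
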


Define the lifted scalar function
\begin{equation}
F_g(z)=2g(z)-g(0).
\label{eq:intro-lift}
\end{equation}
If $P_d$ approximates $F_g$ uniformly on the closed unit disk and
$p_d=(P_d+P_d(0))/2$, then
\begin{equation}
\norm{g(A)-p_d(A)}
\leq\norm{F_g-P_d}_{\infty,\partial\Disk}.
\label{eq:intro-transfer}
\end{equation}
Uniform scalar approximation therefore transfers to operator norm with constant one; the subsequent angular realization is exact and adds no approximation error.

\begin{theorem}[Informal Faber--Weyl LCHM; \Cref{thm:faber-weyl-identity}]
\label{thm:intro-faber-weyl}
Let $K\subset\C$ be compact and convex with nonempty interior, let $W(A)\subset K$, and let $\Phi$ be the normalized exterior conformal map of $K$.  For $R>1$, set
$\Gamma_R=\{z:\abs{\Phi(z)}=R\}$, and suppose $f$ is holomorphic on a neighborhood of the compact region enclosed by $\Gamma_R$.  If $r_d$ is the degree-$d$ Faber truncation and
\begin{equation}
M_R=\max_{z\in\Gamma_R}\abs{f(z)},
\label{eq:intro-faber-MR}
\end{equation}
then
\begin{equation}
\norm{f(A)-r_d(A)}
\leq \frac{2(1+\sqrt2)M_R}{R-1}R^{-d}.
\label{eq:intro-faber-error}
\end{equation}
Writing $r_d(z)=\sum_{m=0}^d b_mz^m$, one has, for every $N>d$,
\begin{equation}
r_d(A)=\frac1N\sum_{j=0}^{N-1}
\left[b_0+2\sum_{m=1}^d
b_m\ee^{\ii m\theta_j}T_m(X_{\theta_j})\right],
\qquad \theta_j=\frac{\pi j}{N}.
\label{eq:intro-faber-identity}
\end{equation}
\end{theorem}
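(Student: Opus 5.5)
We prove the two claims of \Cref{thm:intro-faber-weyl} separately. The error bound \eqref{eq:intro-faber-error} comes from the Faber expansion combined with the Beckermann--Crouzeix bound for Faber polynomials of matrices. The identity \eqref{eq:intro-faber-identity} is linearity applied to the exact discrete Weyl projection \eqref{eq:intro-discrete}.

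\emph{Error bound.} First write the Faber expansion $f=\sum_{n\ge0} a_n F_n$ on the region enclosed by $\Gamma_R$, where $F_n$ is the $n$-th Faber polynomial of $K$ and
\[
a_n=\frac{1}{2\pi\ii}\oint_{\abs{w}=R'}\frac{f(\Psi(w))}{w^{n+1}}\,\dd w ,
\]
with $\Psi=\Phi^{-1}$ and $1<R'\le R$. Holomorphy of $f$ near the closed region bounded by $\Gamma_R$ lets us take $R'=R$, which gives $\abs{a_n}\le M_R R^{-n}$. Next invoke the Beckermann--Crouzeix theorem: for convex $K\supset W(A)$ one has $\norm{F_n(A)}\le 2$ for all $n\ge1$. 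The series $\sum a_nF_n(A)$ therefore converges in norm. Its sum equals $f(A)$, because the same expansion with $R'<R$ converges uniformly on a neighbourhood of $\sigma(A)\subset W(A)$ and the holomorphic functional calculus is continuous. Hence
\[
\norm{f(A)-r_d(A)}\le\sum_{n>d}2M_RR^{-n}=\frac{2M_R}{R-1}R^{-d}.
\]
This is already stronger than the stated bound. The stated constant $2(1+\sqrt2)$ probably reflects a Crouzeix--Palencia-type estimate applied to the tail: the tail $f-r_d$ is holomorphic and bounded by $\frac{2M_R}{R-1}R^{-d}$ on $K$, and the matrix norm then costs a factor of at most $1+\sqrt2$. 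Either route suffices. The main obstacle is justifying the passage to $R'=R$ and the identification of the limit with $f(A)$. I would handle it with a limiting argument $R'\uparrow R$, using continuity of $f$ up to $\Gamma_R$.

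\emph{Identity.} Expand $r_d(A)=\sum_{m=0}^d b_mA^m$. For each $m=1,\dots,d$, apply \eqref{eq:intro-discrete} with the same $N>d\ge m$:
\[
b_mA^m=\frac{2}{N}\sum_{j}b_m\ee^{\ii m\theta_j}T_m(X_{\theta_j}).
\]
For $m=0$ write $b_0\Id=\frac1N\sum_j b_0\Id$. Summing over $m$ and interchanging the finite sums gives \eqref{eq:intro-faber-identity} directly. This step needs no analysis beyond the exactness of the root-of-unity projection, since a single grid serves all degrees $m\le d<N$ simultaneously.
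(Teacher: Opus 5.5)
Your proof is correct. The identity \eqref{eq:intro-faber-identity} is handled exactly as in the paper: expand $r_d$ in monomials and apply the exact root-of-unity projection term by term, with one grid serving every $m\le d<N$.

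For the error bound your main route differs from the paper's. The paper uses only the scalar Faber estimates for convex $K$, namely $\norm{F_m}_{\infty,K}\le 2$ and $\abs{a_m}\le M_RR^{-m}$, to get $\norm{f-r_d}_{\infty,K}\le \frac{2M_R}{R-1}R^{-d}$. It then applies the Crouzeix--Palencia estimate $\norm{h(A)}\le(1+\sqrt2)\norm{h}_{\infty,W(A)}$ to $h=f-r_d$. That is precisely the alternative you guessed, and it is the source of the factor $1+\sqrt2$.

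You instead use Beckermann's matrix-level bound $\norm{F_n(A)}\le2$ for $W(A)\subset K$ with $K$ convex, and sum the tail directly in operator norm. This gives the sharper constant $2$ in place of $2(1+\sqrt2)$. The cost is a more specialized theorem plus one extra step: identifying $\sum_n a_nF_n(A)$ with $f(A)$, which your continuity argument for the holomorphic functional calculus handles correctly. The paper's route avoids that identification, because Crouzeix--Palencia acts on the holomorphic function $f-r_d$ itself.

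The obstacle you flag is not a real one. Since $f$ is holomorphic on a neighborhood of the closed region bounded by $\Gamma_R$, the function $f\circ\Psi$ is holomorphic on $1<\abs{w}<R+\delta$ for some $\delta>0$. Cauchy's theorem therefore lets you take $R'=R$ in the coefficient integral directly, with no limiting argument.
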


The Faber truncation contains all geometry-dependent approximation error, while the Weyl projection of the resulting polynomial is exact.  This identity applies equally to circular, elliptical, and general convex numerical-range enclosures.

\subsection{Optimal quantum eigenvalue transformation algorithms}

For a nonzero degree-$d$ polynomial $p_d$ and a nonzero target $g\in\DiskAlg$, define the lifted normalizations
\begin{equation}
C_p:=\norm{2p_d-p_d(0)}_{\infty,\partial\Disk},
\qquad
C_g:=\norm{2g-g(0)}_{\infty,\partial\Disk}.
\label{eq:intro-Cp-Cg}
\end{equation}
For a normalized input $\ket{\psi}$ with nonzero transformed output, define the polynomial and target QET constants
\begin{equation}
q_{\mathrm{et}}(p_d;A,\psi)
:=\frac{\norm{p_d}_{\infty,\overline\Disk}}
{\norm{p_d(A)\ket{\psi}}},
\qquad
q_{\mathrm{et}}(g;A,\psi)
:=\frac{\norm{g}_{\infty,\overline\Disk}}
{\norm{g(A)\ket{\psi}}}.
\label{eq:intro-qet-constant}
\end{equation}
These quantities separate approximation degree from post-selection conditioning.  In particular, if
$P_d=2p_d-p_d(0)$ approximates $F_g=2g-g(0)$ with error
$\varepsilon_{\mathrm{app}}$ on $\partial\Disk$, then
\begin{equation}
\abs{C_p-C_g}\leq\varepsilon_{\mathrm{app}}.
\label{eq:intro-Cp-Cg-control}
\end{equation}
Thus $C_p$ is controlled by the boundary modulus of the target lift and does not intrinsically grow with the polynomial degree.

\begin{theorem}[Informal Weyl LCHM implementation; \Cref{thm:coherent-algorithm,prop:query-optimality,prop:angle-lcu-optimality}]
\label{thm:intro-optimal-qet}
Let $g\in\DiskAlg$, let $\norm{A}\leq1$, let $U_A$ be a $(1,a,\delta_A)$ block encoding of $A$, and let
$P_d(z)=\sum_{m=0}^dc_mz^m$ satisfy
\begin{equation}
\norm{F_g-P_d}_{\infty,\partial\Disk}\leq\varepsilon_{\mathrm{app}}.
\label{eq:intro-qet-data}
\end{equation}
Set
\begin{equation}
p_d=\frac{P_d+P_d(0)}2,
\qquad
C_p=\norm{P_d}_{\infty,\partial\Disk}
=\norm{2p_d-p_d(0)}_{\infty,\partial\Disk}.
\label{eq:intro-weyl-normalization}
\end{equation}
For $p_d\not\equiv0$, a coherent Weyl LCHM circuit gives a $C_p$-normalized block encoding of $g(A)$ with unnormalized error at most
\begin{equation}
\varepsilon_{\mathrm{app}}
+\frac{\delta_A}{2}\sum_{m=1}^d m\abs{c_m}
+C_p\varepsilon_{\mathrm{circ}}.
\label{eq:intro-qet-error}
\end{equation}
It uses
\begin{equation}
Q_A=\mathcal{O}(d),
\qquad
Q_{\mathrm{ang}}=\mathcal{O}\!\left(d\log(d+1)\right),
\qquad
n_{\mathrm{anc}}=a+\left\lceil\log_2(d+1)\right\rceil+2.
\label{eq:intro-qet-resources}
\end{equation}
If $c_d\neq0$, the matrix-oracle circuit depth is $\Theta(d)$.  Ancilla-free Clifford+$T$ approximation of the axial rotations contributes
\begin{equation}
\mathcal{O}\!\left(
 d\log(d+1)
 \log\!\frac{d\log(d+1)}{\varepsilon_{\mathrm{circ}}}
\right)
\label{eq:intro-qet-T-count}
\end{equation}
$T$ gates apart from the matrix-oracle cost.  For a normalized input $\ket{\psi}$, amplitude amplification prepares the normalized polynomial output using
\begin{equation}
\mathcal{O}\!\left(
 d\,q_{\mathrm{et}}(p_d;A,\psi)
\right)
=\mathcal{O}\!\left(
\frac{dC_p}{\norm{p_d(A)\ket{\psi}}}
\right)
=\mathcal{O}\!\left(
\frac{d\norm{g}_{\infty,\overline\Disk}}
{\norm{g(A)\ket{\psi}}}
\right)
\label{eq:intro-qet-state-prep}
\end{equation}
controlled matrix-oracle calls.  

Consequently, Weyl LCHM has degree-optimal circuit depth and optimal
$\Theta(q_{\mathrm{et}})$ post-selection dependence up to universal constants and the prescribed approximation error; see
\Cref{thm:coherent-algorithm,prop:query-optimality,prop:angle-lcu-optimality}.
\end{theorem}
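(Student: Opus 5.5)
The plan is to assemble the result from three layers: the exact Weyl projection identity, a coherent linear-combination-of-unitaries circuit for it, and an error and complexity accounting. First rewrite the target polynomial in the angular form. By \Cref{thm:intro-weyl-lchm}, for $N=2^{\lceil\log_2(d+1)\rceil}>d$ and $\theta_j=\pi j/N$,
\begin{equation*}
p_d(A)=\frac{1}{N}\sum_{j=0}^{N-1}\sum_{m=0}^{d} c_m\,\ee^{\ii m\theta_j}T_m(X_{\theta_j})
=\frac1N\sum_{j=0}^{N-1}P^{(j)}(X_{\theta_j}).
\end{equation*}
Here we use $p_d=(P_d+P_d(0))/2$, so that $b_0=c_0$ and $2b_m=c_m$ for $m\ge1$. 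Each inner sum is a polynomial in the Hermitian matrix $X_{\theta_j}$, and its size is controlled by $\norm{P_d}_{\infty,\partial\Disk}=C_p$. To see this, write $x=\cos\phi$. Then $\sum_m c_m\ee^{\ii m\theta}T_m(\cos\phi)$ is the average of $P_d(\ee^{\ii(\theta+\phi)})$ and $P_d(\ee^{\ii(\theta-\phi)})$, so its modulus is at most $C_p$ on $[-1,1]$. This is exactly the $U_{\theta,\pm}$ picture of \Cref{thm:intro-disk-angular}.

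Second, build the circuit.
\begin{itemize}
\item The numerical radius bound \eqref{eq:intro-radius}, together with $\norm{A}\le1$, gives $\norm{X_\theta}\le1$.
\item Block-encode $X_\theta$ as $\tfrac12(\ee^{-\ii\theta}U_A+\ee^{\ii\theta}U_A^\dagger)$. This costs one extra qubit, and the phase is controlled by the angle register of $\lceil\log_2(d+1)\rceil$ qubits through controlled $Z$-rotations.
\item Apply QSVT with a Hermitian (real-parity-split) or GQSP-type phase sequence to implement $x\mapsto P^{(j)}(x)/C_p$. Since every $P^{(j)}$ has the same degree $d$ and the same modulus bound, this uses $\mathcal{O}(d)$ queries.
\end{itemize}
The phase factors $\ee^{\ii m\theta_j}$ depend on $j$. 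I would absorb them by noting that $\ee^{\ii m\theta_j}T_m$ arises from rotating the signal, i.e.\ from replacing the signal rotation $\ee^{\ii\phi}$ by $\ee^{\ii(\phi+\theta_j)}$ in a GQSP sequence on the unitary dilation $U_{\theta,\pm}$. The dependence on $j$ then enters only through $\mathcal{O}(d)$ controlled rotations, each costing $\mathcal{O}(\log(d+1))$ gates. This gives $Q_{\mathrm{ang}}=\mathcal{O}(d\log(d+1))$. Preparing a uniform superposition on the angle register with Hadamards and uncomputing it implements the average $\frac1N\sum_j$ with no loss in normalization.

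Third, do the accounting. The approximation error $\varepsilon_{\mathrm{app}}$ follows from \eqref{eq:intro-transfer}. For the oracle error, replacing $A$ by $\tilde A$ with $\norm{A-\tilde A}\le\delta_A$ changes $X_\theta$ by at most $\delta_A$. We have $\norm{T_m(X)-T_m(\tilde X)}\le m^2\delta_A$ in general, but the sharper bound $\frac{m}{2}\delta_A\cdot2$ should follow from the exact formula $A^m=\frac2N\sum_j\ee^{\ii m\theta_j}T_m(X_{\theta_j})$, applied to $\tilde A$ and telescoped via $\norm{A^m-\tilde A^m}\le m\delta_A$. The weighting by $\tfrac12\abs{c_m}$ then gives $\frac{\delta_A}{2}\sum_m m\abs{c_m}$. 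Rounding the axial rotations to accuracy $\varepsilon_{\mathrm{circ}}/\mathcal{O}(d\log d)$ with ancilla-free Ross--Selinger synthesis gives the stated $T$-count, and the resulting error scales with the normalization $C_p$. Amplitude amplification with success amplitude $\norm{p_d(A)\ket\psi}/C_p$, combined with $C_p\le 2\norm{p_d}_\infty+\dots$ and \eqref{eq:intro-Cp-Cg-control}, yields \eqref{eq:intro-qet-state-prep}. For the $\Theta(d)$ depth lower bound, note that if $c_d\neq0$ then any block encoding of $p_d(A)$ must use degree-$d$ queries, by the standard polynomial method (on a $d+1$ dimensional shift $A$, where $A^d\neq0$).

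The main obstacle is the middle step: implementing the $j$-dependent phases $\ee^{\ii m\theta_j}$ coherently so that the normalization stays $C_p$ rather than $\sum\abs{c_m}$. I would first try GQSP on the unitary dilation of $X_\theta$, whose eigenphases are $\theta\pm\arccos x$. This makes $P_d$ evaluated at $U_{\theta,\pm}$ directly a bounded polynomial of a unitary with norm at most $C_p$. The combination $P^{(j)}(X_{\theta_j})$ is then recovered by averaging the $\pm$ branches with one extra qubit, which accounts for the "$+2$" in the ancilla count.
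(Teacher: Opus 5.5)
Your architecture is the paper's: the exact projection $p_d(A)=\frac1N\sum_j q_{d,\theta_j}(X_{\theta_j})$, an angle-controlled unitary signal whose branch-$j$ eigenphases are $\theta_j\pm\arccos x$, GQSP for $P_d/C_p$, Hadamard preparation and unpreparation of the angle register, and telescoping on the actual block $\widetilde A$. However, the step you flag as the main obstacle is never constructed.

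\textbf{The signal unitary.} The LCU block encoding you describe, $V=(\mathsf H\otimes\Id)\bigl(\Projector{0}\otimes\ee^{-\ii\theta}U_A+\Projector{1}\otimes\ee^{\ii\theta}U_A^\dagger\bigr)(\mathsf H\otimes\Id)$, is not self-inverse. So a walk $(2\Pi-\Id)V$ built from it does not satisfy $\Pi W^m\Pi=T_m(X_\theta)\Pi$; already at $m=2$ you get $2X_\theta^2-\Pi V^2\Pi$. Nor are $U_{\theta,\pm}$ themselves available as circuits, since they contain $(\Id-X_\theta^2)^{1/2}$.

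The paper closes this with the direct standard form $\mathcal V_j=\bigl(\Projector{0}_h\otimes B_j+\Projector{1}_h\otimes B_j^\dagger\bigr)(X_h\otimes\Id)$, where $B_j=\ee^{-\ii\theta_j}U_A$. This single qubit serves both as LCU selector and as Hermitian-involution dilation: $\mathcal V_j^\dagger=\mathcal V_j$, $\mathcal V_j^2=\Id$, and its $\ket{+}_h\ket{0^a}$ compression is $\widetilde X_j$. Then $W_j=(2\Pi_X-\Id)\mathcal V_j$ gives $\Pi_XW_j^m\Pi_X=T_m(\widetilde X_j)\Pi_X$.

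The $\pm$ averaging comes from this compression itself, because $\bra{x}(\ee^{\ii\theta}W_j)^m\ket{x}=\ee^{\ii m\theta}\cos(m\varphi)$, which is the average of the two branches. It costs no qubit. The second ancilla in the ``$+2$'' is the GQSP processing qubit, which your count omits. So your justification of $n_{\mathrm{anc}}=a+\lceil\log_2(d+1)\rceil+2$ does not go through as written.

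\textbf{The optimality claim.} The concluding sentence asserts two optimality properties, and you argue only the degree one. Optimal $\Theta(q_{\mathrm{et}})$ post-selection dependence needs a matching lower bound, which you do not give.
\begin{itemize}
\item Any success block $p_d(B)/\alpha$ that is valid for every contraction $B$ must have $\alpha\geq\sup_{\norm{B}\leq1}\norm{p_d(B)}=\norm{p_d}_{\infty,\overline\Disk}$. The supremum follows from von Neumann's inequality and is attained by scalar contractions $B=z\Id$.
\item Likewise $\alpha\geq\norm{g}_{\infty,\overline\Disk}$ for the target $g$.
\item The amplitude-amplification lower bound then gives $\Omega(q_{\mathrm{et}})$ repetitions. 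Your inequality $C_p\leq3\norm{p_d}_{\infty,\overline\Disk}$ shows the Weyl circuit matches this up to a universal constant.
\end{itemize}
You supply only that upper direction. Your degree lower bound via a nilpotent shift $S$ is fine once you place it in a phase family $\ee^{\ii\phi}S$. Then $q$ queries give an output block that is a Laurent polynomial of degree at most $q$ in $\ee^{\ii\phi}$, while $p_d(\ee^{\ii\phi}S)$ contains $\tfrac{c_d}{2}\ee^{\ii d\phi}S^d\neq0$. This is the same Laurent-degree argument the paper makes with the signal eigenvalue of $\mathsf Z$.
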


\begin{theorem}[Informal Faber--Weyl LCHM implementation; \Cref{thm:faber-weyl-implementation}]
\label{thm:intro-faber-weyl-implementation}
Under the assumptions of \Cref{thm:intro-faber-weyl}, choose
\begin{equation}
A=c\Id+sB,
\qquad c\in\C,\quad s>0,\quad \norm{B}\leq1,
\label{eq:intro-faber-affine}
\end{equation}
and write
\begin{equation}
2r_d(c+sz)-r_d(c)=\sum_{m=0}^d\widehat c_mz^m,
\qquad
C_{r_d}
:=\norm{2r_d(c+s\,\cdot)-r_d(c)}_{\infty,\partial\Disk}.
\label{eq:intro-faber-normalization}
\end{equation}
Given a $(1,a,\delta_B)$ block encoding of $B$, a coherent Faber--Weyl circuit has a $C_{r_d}$-normalized success block for $r_d(A)$ and unnormalized error at most
\begin{equation}
\frac{2(1+\sqrt2)M_R}{R-1}R^{-d}
+\frac{\delta_B}{2}\sum_{m=1}^d m\abs{\widehat c_m}
+C_{r_d}\varepsilon_{\mathrm{circ}}.
\label{eq:intro-faber-implementation-error}
\end{equation}
It uses $\mathcal{O}(d)$ controlled block-encoding calls,
$\mathcal{O}(d\log(d+1))$ angle rotations, and
$a+\lceil\log_2(d+1)\rceil+2$ logical ancillas.  Clifford+$T$ rotation approximation contributes
\begin{equation}
\mathcal{O}\!\left(
 d\log(d+1)
 \log\!\frac{d\log(d+1)}{\varepsilon_{\mathrm{circ}}}
\right)
\label{eq:intro-faber-T-count}
\end{equation}
$T$ gates apart from the oracle cost.  For a normalized input $\ket{\psi}$, amplitude amplification uses
\begin{equation}
\mathcal{O}\!\left(
\frac{dC_{r_d}}{\norm{r_d(A)\ket{\psi}}}
\right)
\label{eq:intro-faber-state-prep}
\end{equation}
controlled matrix-oracle calls.  The angular projection is exact and contributes no quadrature error.
\end{theorem}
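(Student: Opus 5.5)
The plan is to reduce the Faber--Weyl implementation to the Weyl LCHM implementation (\Cref{thm:intro-optimal-qet}) applied to the contraction $B$. I treat the polynomial $z\mapsto r_d(c+sz)$ as the target polynomial for $B$. Since $r_d$ has degree $d$, the function $q(z):=r_d(c+sz)$ is a degree-$d$ polynomial with $q(B)=r_d(c\Id+sB)=r_d(A)$. This is an exact polynomial identity: $c\Id$ commutes with $B$, so the binomial expansion of each power is valid for matrices.

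\textbf{Step 1: the lift.} I take $P_d(z):=2q(z)-q(0)=\sum_m\widehat c_mz^m$. Then $p_d=(P_d+P_d(0))/2=q$, and $C_p=\norm{P_d}_{\infty,\partial\Disk}=C_{r_d}$. I invoke \Cref{thm:intro-optimal-qet} with target $g:=q$, which is entire and so lies in $\DiskAlg$. Here $F_g=P_d$ exactly, so $\varepsilon_{\mathrm{app}}=0$, and the hypothesis $\norm{B}\le1$ matches the requirement there.

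That theorem then yields a $C_{r_d}$-normalized block encoding of $q(B)=r_d(A)$. Its unnormalized error is at most $\tfrac{\delta_B}{2}\sum m\abs{\widehat c_m}+C_{r_d}\varepsilon_{\mathrm{circ}}$, and it uses $\mathcal O(d)$ oracle calls, $\mathcal O(d\log(d+1))$ rotations and $a+\lceil\log_2(d+1)\rceil+2$ ancillas. The $T$-count and the amplitude-amplification count $\mathcal O(dC_{r_d}/\norm{r_d(A)\ket\psi})$ follow verbatim with $p_d=q$. The underlying exact identity is the discrete Weyl projection \eqref{eq:intro-faber-identity} rewritten for $B$: apply \Cref{thm:intro-weyl-lchm} termwise with $N>d$, so no quadrature error arises.

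\textbf{Step 2: the target error.} The success block approximates $r_d(A)$, but the statement also bounds the deviation from $f(A)$. I add the Faber truncation bound \eqref{eq:intro-faber-error} from \Cref{thm:intro-faber-weyl} by the triangle inequality:
\[
\norm{f(A)-\widetilde{r_d(A)}}\le\norm{f(A)-r_d(A)}+\norm{r_d(A)-\widetilde{r_d(A)}}.
\]
This produces the first term $\frac{2(1+\sqrt2)M_R}{R-1}R^{-d}$. That bound uses only $W(A)\subset K$ and the holomorphy of $f$, both of which are among the assumptions.

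\textbf{Main obstacle.} The delicate point is the oracle-error term. I must check that the error propagation in \Cref{thm:coherent-algorithm} is stated for an arbitrary polynomial in the block-encoded contraction, with coefficients $\widehat c_m$ in the $B$-variable rather than in $A$. I expect this to hold via the telescoping estimate
\[
\norm{\widetilde B^m-B^m}\le m\,\delta_B
\]
together with the factor $1/2$ from the lift normalization. I also need that the circuit's use of $X_\theta$ built from $B$ (not $A$) is exactly what is required. This holds because the construction only needs $\norm{B}\le1$, hence $w(B)\le1$.
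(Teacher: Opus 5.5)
Your proposal is correct and follows the paper's proof essentially step for step. The paper also rewrites $r_d(A)$ as the degree-$d$ polynomial $r_d(c\Id+sB)$ in the contraction $B$, with lift $2r_d(c+s\,\cdot)-r_d(c)$. It applies \Cref{thm:coherent-algorithm} for the resource counts and the $\delta_B$ term, adds the Faber truncation error from \Cref{thm:faber-weyl-identity}, and takes the state-preparation count from \Cref{cor:output-state-preparation}. The point you flag as the main obstacle is already settled by the perturbation estimate \eqref{eq:polynomial-perturbation}, which holds for any lifted polynomial in any unit-normalized block-encoded contraction, so using it with the coefficients $\widehat c_m$ and the principal block of the encoding of $B$ is immediate.
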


Weyl LCHM is optimal in two independent senses: its matrix-oracle circuit depth is degree-optimal, and its post-selection dependence is optimal in the uniform contraction model.  The angular LCU coefficients have unit $\ell_1$-norm, so the angle register contributes neither a factor of $N$ nor a factor growing with $d$.  The remaining QET constant is intrinsic because any uniformly valid success block must accommodate scalar contractions throughout $\overline\Disk$.  For the exact power $p_m(z)=z^m$, one has $C_{p_m}=2$ and
$q_{\mathrm{et}}(p_m;A,\psi)=1/\norm{A^m\ket{\psi}}$; the Weyl success-amplitude ratio is $2q_{\mathrm{et}}(p_m;A,\psi)$, and the coherent repetition count is constant whenever the power output has constant norm.

\subsection{Summary of matrix function examples}

The examples in \Cref{sec:examples-extensions} show how approximation degree, lift normalization, and output amplitude determine the choice between vanilla and Weyl LCHM.  For each polynomial row, $p$, $C_p$, and $\zeta_p$ denote the implemented polynomial, its lift normalization, and its output norm on $\ket{\psi}$ in the corresponding result of \Cref{sec:examples-extensions}.  The dominant controlled matrix-oracle complexities are summarized in \Cref{tab:function-complexities}.  Here $\varepsilon_{\mathrm{op}}$ is an unnormalized operator-approximation error; normalized-state error $\varepsilon_{\mathrm{st}}$ is obtained by choosing operator error $\mathcal O(\varepsilon_{\mathrm{st}}\zeta_p)$.  In the driven-ODE row, one attempt block-encodes the normalized driven propagator $\Phi_{\mu,t}(A)$ from \eqref{eq:driven-average}; $\kappa_\mu:=\norm{A}/\mu$, and \eqref{eq:driven-amplitude-lower-bound} gives $\zeta_{\mathrm{drv}}\geq1/\kappa_\mu$, uniformly for $t>0$.

\begin{table}[H]
\centering
\scriptsize
\renewcommand{\arraystretch}{1.28}
\setlength{\tabcolsep}{1.6pt}
\begin{tabularx}{\textwidth}{|>{\raggedright\arraybackslash}p{2.35cm}|>{\centering\arraybackslash}p{3.25cm}|>{\centering\arraybackslash}p{3.15cm}|>{\centering\arraybackslash}X|}
\hline
Target & Regime & Circuit depth & Normalized output-state preparation \\
\hline
\makecell[l]{Driven ODE\\$\dot x=-Ax+b$}
& $\displaystyle L\succeq\mu\Id>0,\ x(0)=0$
& $\displaystyle
\mathcal O\!\left(\alpha_A t\log\frac1{\varepsilon_{\mathrm{op}}}\right)$
& $\displaystyle
\mathcal O\!\left(
\kappa_\mu\alpha_A t
\log\!\frac{\kappa_\mu}{\varepsilon_{\mathrm{st}}}
\right)$ \\
\hline
Exact power $A^m$
& $\norm A\leq1$, $m\geq1$
& $\mathcal O(m)$
& $\displaystyle
\mathcal O\!\left(\frac{m}{\norm{A^m\ket{\psi}}}\right)$ \\
\hline
\makecell[l]{Affine iterate\\$x^{(k)}$}
& $\norm A\leq1$
& $\mathcal O(k)$
& $\displaystyle
\mathcal O\!\left(\frac{k\alpha_k}{\norm{x^{(k)}}}\right)$ \\
\hline
\makecell[l]{Resolvent\\$(\lambda\Id-A)^{-k}$}
& $\norm A\leq1$, $\abs{\lambda}>1$
& $\displaystyle
\mathcal O_{\lambda,k}\!\left(
\log\frac1{\varepsilon_{\mathrm{op}}}
\right)$
& $\displaystyle
\mathcal O_{\lambda,k}\!\left(
\frac{\norm{p}_{\infty,\overline\Disk}}{\zeta_p}
\log\!\frac1{\varepsilon_{\mathrm{st}}\zeta_p}
\right)$ \\
\hline
\makecell[l]{Logarithm\\$\Log(\Id+A)$}
& $\norm A\leq\rho<1$
& $\displaystyle
\mathcal O_\rho\!\left(
\log\frac1{\varepsilon_{\mathrm{op}}}
\right)$
& $\displaystyle
\mathcal O_\rho\!\left(
\frac{\norm{p}_{\infty,\overline\Disk}}{\zeta_p}
\log\!\frac1{\varepsilon_{\mathrm{st}}\zeta_p}
\right)$ \\
\hline
\makecell[l]{Fractional power\\$(\lambda\Id+A)^\nu$}
& $\norm A\leq\rho<\lambda$
& $\displaystyle
\mathcal O_{\rho,\lambda,\nu}\!\left(
\log\frac1{\varepsilon_{\mathrm{op}}}
\right)$
& $\displaystyle
\mathcal O_{\rho,\lambda,\nu}\!\left(
\frac{\norm{p}_{\infty,\overline\Disk}}{\zeta_p}
\log\!\frac1{\varepsilon_{\mathrm{st}}\zeta_p}
\right)$ \\
\hline
\makecell[l]{Sign, projectors,\\ReLU}
& $\norm A\leq1$
& $\displaystyle
\mathcal O\!\left(\deg(s)+1\right)$
& $\displaystyle
\mathcal O\!\left(
\frac{(\deg(s)+1)\norm{p}_{\infty,\overline\Disk}}{\zeta_p}
\right)$ \\
\hline
\makecell[l]{Partial fractions\\(aggregate $\mathfrak p$)}
& $\norm A\leq1$, $\abs{\lambda_\ell}>1$
& $\displaystyle
\mathcal O\!\left(\deg(\mathfrak p)\right)$
& $\displaystyle
\mathcal O\!\left(
\frac{\deg(\mathfrak p)C_{\mathfrak p}}
{\norm{\mathfrak p(A)\ket{\psi}}}
\right)$ \\
\hline
\end{tabularx}
\caption{Representative matrix function examples with complexities proved in \Cref{sec:examples-extensions}.  The last column includes amplitude amplification.  In the driven-ODE row, $b\neq0$ is constant and the displayed bound follows from $\zeta_{\mathrm{drv}}\geq1/\kappa_\mu$ in \eqref{eq:driven-amplitude-lower-bound}; hence its amplification factor is uniform in $t$ rather than exponential in $t$.  The quantity $\alpha_k$ is defined in \eqref{eq:affine-iteration-normalization}.  A polynomial $p$ also uses $\mathcal O(\deg(p)\log(\deg(p)+1))$ angle rotations and $a+\lceil\log_2(\deg(p)+1)\rceil+2$ logical ancillas.  The affine-iteration construction uses one additional outer-LCU selector.}
\label{tab:function-complexities}
\end{table}

\subsection{Contributions and comparison with prior methods}

The main contributions are as follows.
\begin{enumerate}[leftmargin=2.1em,label=\textup{(\roman*)}]
\item We prove that every matrix power $A^m$ is recovered as an exact noncommutative Fourier coefficient of a Chebyshev transform of $X_\theta$.  The continuous and root-of-unity formulas apply to arbitrary matrices, including non-normal ones, and extend to analytic and disk-algebra functions.
\item A single coherent GQSP circuit evaluates all angular branches in superposition and performs the root-of-unit projection. It implements a degree-$d$ polynomial with $\Theta(d)$ matrix-oracle circuit depth, $\mathcal{O}(d\log d)$ angle rotations, $\mathcal{O}(\log d)$ angle qubits, and no truncation and angular quadrature error.
\item The angular LCU coefficients have unit $\ell_1$-norm. For a polynomial $p_d(A)$, the normalization $C_p$ satisfies
$\norm{p_d}_{\infty,\overline\Disk}\leq C_p\leq3\norm{p_d}_{\infty,\overline\Disk}$ and, for $p_d$ as an approximation to $g$, obeys $\abs{C_p-C_g}\leq\varepsilon_{\mathrm{app}}$.  The resulting QET constant $q_{\mathrm{et}}$ is unavoidable up to universal constants for arbitrary contractions.
\item The identity $\sup_\theta\norm{X_\theta}=w(A)$ enables numerical-radius rescaling.  For the weighted-shift family in \Cref{app:rescaled-weyl-comparison}, it yields an exponential-in-degree reduction in total state-preparation queries for matrix powers relative to $d$-regular GQSP, including coherent-amplification overhead.
\item Vanilla LCHM supplies a residue formula for entire functions of subexponential order and a strip-shift formula for half-plane analytic functions; its matrix-exponential specialization recovers LCHS, and a Duhamel LCU extends it to constant-source driven ODEs.
\item Explicit complexity bounds cover powers, affine stationary iterations, exponentials, driven ODEs, resolvents, logarithms, shifted fractional powers, matrix sign, spectral projectors, ReLU transforms, and partial-fraction approximations.  Faber--Weyl further combines geometry-adapted Faber approximation with exact angular projection, retaining $\mathcal O(d)$ matrix-oracle complexity on noncircular numerical-range enclosures.
\end{enumerate}

\begin{table}[H]
\centering
\scriptsize
\renewcommand{\arraystretch}{1.5}
\setlength{\tabcolsep}{0.8pt}
\begin{tabularx}{\textwidth}{|>{\centering\arraybackslash}p{2.3cm}|>{\centering\arraybackslash}X|>{\centering\arraybackslash}p{2.2cm}|>{\centering\arraybackslash}p{3.35cm}|>{\centering\arraybackslash}p{3.1cm}|}
\hline
Method & Assumptions & Circuit depth & Amplified repetitions & Ancilla qubits \\
\hline
\makecell[c]{Contour integral\\\cite{TakahiraEtAl2020,TakahiraEtAl2022,JiangAn2026}}
& $\displaystyle
\begin{gathered}
\spec(A)\subset\operatorname{int}(\Gamma),\quad
R_\Gamma=\max_{z\in\Gamma}\abs z,\\[-1pt]
B_{p,\Gamma}=\max_{z\in\Gamma}\abs{p_d(z)},\quad
\ell_\Gamma=\operatorname{len}(\Gamma),\\[-1pt]
\gamma_\Gamma=\sup_{z\in\Gamma}\norm{(z\Id-A)^{-1}}
\end{gathered}$
& \makecell[c]{$\widetilde{\mathcal O}\!\left(\gamma_\Gamma(1+R_\Gamma)\right)$}
& $\displaystyle \mathcal O\!\left(
\frac{B_{p,\Gamma}\ell_\Gamma\gamma_\Gamma}
{\norm{p_d(A)\ket{\psi}}}
\right)$
& \makecell[c]{$a+\lceil\log_2 M_\Gamma\rceil +2$} \\
\hline
\makecell[c]{\\ QEP\\\cite{LowSu2026}\\}
& $ \displaystyle
\begin{gathered}
A=S\Lambda S^{-1},\quad \spec(A)\subset\R,\\[-1pt]
\kappa_S=\norm S\norm{S^{-1}}
\end{gathered}$
& $\widetilde{\mathcal O}(\kappa_S d)$
& $\displaystyle \widetilde{\mathcal O}\!\left(
\frac{\kappa_S\norm{p_d}_{\infty,[-1,1]}}
{\norm{p_d(A)\ket{\psi}}}
\right)$
& $a+\mathcal O(\log d)$ \\
\hline
\makecell[c]{\\ $d$-regular GQSP\\\cite{GutierrezEtAl2026}\\ }
& none
& $\widetilde{\mathcal O}(d)$
& $\displaystyle \mathcal O\!\left(
\frac{\norm{p_d}_{\infty,\partial\Disk}}
{\norm{p_d(A)\ket{\psi}}}
\right)$
& $a+\mathcal O(\log d)$ \\
\hline
\makecell[c]{\\Weyl LCHM\\ (this work)\\ }
& none
& $\widetilde{\mathcal O}(d)$
& $ \displaystyle \mathcal O\!\left(
\frac{\norm{p_d}_{\infty,\partial\Disk}}
{\norm{p_d(A)\ket{\psi}}}
\right)$
& \makecell[c]{$a+\lceil\log_2(d+1)\rceil +2$} \\
\hline
\end{tabularx}
\caption{Comparison for degree-$d$ polynomial matrix functions $p_d(A)$ with $d\geq1$ under the common condition $\norm{A}\leq1$ and given a $(1,a,\delta_A)$ block encoding of $A$. The $d$-regular entry uses the $\mathcal O(\log d)$ ancillary-qubit bound, and the Weyl entry uses $a+\lceil\log_2(d+1)\rceil+2$.  Here $M_\Gamma$ is the number of contour quadrature nodes, and the $\mathcal O(\log d)$ term in QEP includes its history and linear-system work registers.}
\label{tab:intro-comparison}
\end{table}

\textbf{Weyl LCHM vs.~$d$-regular GQSP}. As presented in \Cref{tab:intro-comparison}, two approaches share almost the same asymptotic scaling. The constant-factor comparison in the amplified repetitions demonstrates the Weyl-specific advantage. Weyl LCHM exposes the numerical radius through $\sup_\theta\norm{X_\theta}=w(A)$.  Given a certified bound $w(A)<s<||A||\leq1$, rescaling damps the degree-$k$ coefficient by $s^k$ while the angular blocks are amplified coherently.  For the explicit weighted-shift family in \Cref{app:rescaled-weyl-comparison}, this gives an exponential-in-degree reduction in the prefactor of total state-preparation queries for matrix powers relative to $d$-regular GQSP~\cite{GutierrezEtAl2026}. This yields an exponential advantage for Weyl LCHM.

\textbf{Relation with Laplace and generalized LCHS}.
Laplace-LCHS treats targets with an integrable inverse-Laplace representation, including shifted inverse powers and functions arising in mass-matrix differential equations~\cite{AnChildsLinYing2026}.  Weyl LCHM complements this class on bounded matrix domains: it implements polynomials exactly and approximates logarithms and shifted powers through explicit finite-degree formulas.  Generalized LCHS via Weyl calculus approach covers a broad analytic class through two-dimensional plane-wave expansions and can obtain target-optimized discrete weights from a constrained convex program~\cite{NiYing2026}.  For polynomial targets, Weyl LCHM provides the nodes, weights, and normalization in closed form through a one-angle root-of-unity projection.

The remainder of the paper is organized as follows.  \Cref{sec:plain-lchm} develops vanilla LCHM and its normalization behavior.  \Cref{sec:angular-calculus} proves the Weyl LCHM identities.  \Cref{sec:quantum-realization} gives the coherent quantum circuit and the degree and normalization lower bounds.  \Cref{sec:examples-extensions} presents matrix function examples with complexity results.  \Cref{sec:faber} develops the Faber--Weyl LCHM extension. \Cref{sec:comparison} compares related methods.  
\Cref{sec:conclusion} concludes the paper by summarizing the main contributions. \Cref{app:rescaled-weyl-comparison} develops numerical-radius rescaling and the weighted-shift separation. \Cref{app:boundary} states the boundary semisimplicity of accretive matrices. \Cref{app:qubitization} proves the two-dime nsional qubitized Chebyshev identity. \Cref{app:error-resources} collects perturbation and fault-tolerant resource estimates.

\section{Vanilla LCHM: generalization of LCHS}
\label{sec:plain-lchm}

For real $k$, define the Hermitian generator and anti-Hermitian matrix
\begin{equation}
K_k=H+kL,
\qquad
M_k=\ii K_k=\ii(H+kL).
\label{eq:plain-generators}
\end{equation}
The identity $M_{-\ii}=A$ is the residue mechanism behind vanilla LCHM.

\subsection{An exact residue formula for entire functions}

Fix $0<\beta<1$ and use the principal branch of the fractional power on the open right half-plane.  Define
\begin{equation}
f_\beta(z)
=\frac{1}{2\pi}
\frac{\exp\!\left(2^\beta-(1+\ii z)^\beta\right)}{1-\ii z}.
\label{eq:plain-kernel}
\end{equation}
In the lower half-plane, $1+\ii z$ has positive real part.  Hence $f_\beta$ is holomorphic there except for a simple pole at $z=-\ii$, with
\begin{equation}
\Res_{z=-\ii}f_\beta(z)=\frac{\ii}{2\pi}.
\label{eq:plain-residue}
\end{equation}
Moreover, for real $k$ and for $z$ on a sufficiently large lower semicircle,
\begin{equation}
\abs{f_\beta(z)}
\leq \frac{C_\beta}{1+\abs{z}}
\exp\!\left[-c_\beta\abs{z}^\beta\right]
\label{eq:plain-kernel-decay}
\end{equation}
with positive constants $C_\beta,c_\beta$.

\begin{theorem}[Vanilla LCHM identity for entire functions]
\label{thm:plain-entire}
Let $A=L+\ii H\in\C^{n\times n}$ with $L,H$ Hermitian.  Suppose $g$ is entire and there exist $C,c>0$ and $0\leq\beta'<\beta<1$ such that
\begin{equation}
\abs{g(z)}\leq C\exp\!\left(c\abs{z}^{\beta'}\right),
\qquad z\in\C.
\label{eq:plain-growth}
\end{equation}
Then the operator-norm integral converges absolutely and
\begin{equation}
 g(A)=\int_{\R}f_\beta(k)\,
 g\!\left(\ii(H+kL)\right)\,\dd k.
\label{eq:plain-master}
\end{equation}
In particular, the theorem applies to every matrix polynomial.
\end{theorem}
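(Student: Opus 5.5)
The plan is to treat the integrand as a matrix-valued holomorphic function of the complex parameter $z$ and close the real-line contour in the lower half-plane. Define
\[
G(z)=g\bigl(\ii(H+zL)\bigr),\qquad z\in\C .
\]
Since $g$ is entire, $G$ is an entire matrix-valued function. One can see this from the power series $g(w)=\sum_j a_j w^j$: the matrix $\ii(H+zL)$ depends affinely on $z$, and the series converges locally uniformly in operator norm. Equivalently, write $G(z)=\frac{1}{2\pi\ii}\oint g(w)(w-\ii(H+zL))^{-1}\,\dd w$ on a large circle. At $z=-\ii$ we have $\ii(H-\ii L)=L+\ii H=A$, so $G(-\ii)=g(A)$. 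The product $f_\beta(z)G(z)$ is therefore holomorphic on the closed lower half-plane except for the simple pole at $-\ii$. By \eqref{eq:plain-residue} its residue there is $\frac{\ii}{2\pi}g(A)$.

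Next, integrate over the boundary of the lower half-disk of radius $R>1$. This is the segment $[-R,R]$ traversed left to right together with the lower semicircle, which gives a clockwise orientation. The residue theorem then yields
\[
\int_{-R}^{R}f_\beta(k)G(k)\,\dd k+\int_{\text{semicircle}}f_\beta G\,\dd z=-2\pi\ii\cdot\frac{\ii}{2\pi}g(A)=g(A).
\]
It remains to prove two things: the real-line integral converges absolutely in operator norm, and the semicircle contribution vanishes as $R\to\infty$.

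Both follow from a growth bound on $G$. We have $\norm{\ii(H+zL)}\le \norm H+\abs z\norm L\le c_1(1+\abs z)$. For a general non-normal matrix, however, $\norm{g(M)}$ is not controlled by $\sup_{\spec(M)}\abs g$. I would therefore use the Cauchy representation on the circle $\abs w=r:=2c_1(1+\abs z)$. Here $\norm{(w-M)^{-1}}\le 1/(r-\norm M)\le 2/r$, so
\[
\norm{G(z)}\le 2\max_{\abs w=r}\abs{g(w)}\le 2C\exp\bigl(c\,(2c_1)^{\beta'}(1+\abs z)^{\beta'}\bigr).
\]
An alternative is the power-series bound $\norm{g(M)}\le \sum_j\abs{a_j}\norm M^j$ combined with Cauchy estimates on the coefficients, which loses only constants in the exponent. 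Either way, $\norm{G(z)}\le C'\exp(c'\abs z^{\beta'})$ for large $\abs z$.

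Combining this with the kernel decay \eqref{eq:plain-kernel-decay}, valid for real $k$ and on large lower semicircles, gives
\[
\norm{f_\beta(z)G(z)}\le \frac{C''}{1+\abs z}\exp\bigl(c'\abs z^{\beta'}-c_\beta\abs z^{\beta}\bigr).
\]
Because $\beta'<\beta$, the exponent tends to $-\infty$. This makes the real integral absolutely convergent. It also bounds the semicircle integral by $\pi R\cdot\frac{C''}{1+R}\exp(c'R^{\beta'}-c_\beta R^\beta)\to0$. Letting $R\to\infty$ gives \eqref{eq:plain-master}. Polynomials satisfy \eqref{eq:plain-growth} with $\beta'=0$ after enlarging $c$, since $\abs{p(z)}\le C(1+\abs z)^d\le C_\epsilon\exp(\abs z^{\epsilon})$ for any small $\epsilon>0$; choose $\beta'=\epsilon<\beta$.

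The step I expect to be the main obstacle is verifying the kernel decay \eqref{eq:plain-kernel-decay} uniformly on the whole lower semicircle, including near its endpoints on the real axis. That estimate is stated above and I would assume it. If it needed checking, I would write $1+\ii z=\rho\ee^{\ii\phi}$ with $\abs\phi\le\pi/2$, so that $\RePart(1+\ii z)^\beta=\rho^\beta\cos(\beta\phi)\ge\rho^\beta\cos(\beta\pi/2)>0$; this uses $\beta<1$. The second delicate point is the operator-norm growth of $g$ at a non-normal argument, which the Cauchy-circle estimate handles.
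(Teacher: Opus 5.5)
Your proposal is correct and follows the paper's proof: the same clockwise lower-half-plane contour, the residue $\tfrac{\ii}{2\pi}g(A)$ at $z=-\ii$ (via $\ii(H-\ii L)=A$), and the kernel decay $\exp(-c_\beta\abs{z}^\beta)$ dominating the $\exp(c'\abs{z}^{\beta'})$ growth of $\norm{g(\ii(H+zL))}$. The one difference is how you get that growth bound: you use the elementary Cauchy-integral/Neumann-series estimate on the circle $\abs{w}=2c_1(1+\abs{z})$, whereas the paper uses the Crouzeix--Palencia inequality; your route costs only a constant inside the exponent, which is harmless because $\beta'<\beta$.

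One small slip: a nonconstant polynomial does not satisfy \eqref{eq:plain-growth} with $\beta'=0$, since that would force boundedness. Your subsequent choice $\beta'=\epsilon\in(0,\beta)$ is the correct one.
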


\begin{proof}
Let
\[
G(z)=f_\beta(z)g\!\left(\ii(H+zL)\right).
\]
The matrix-valued factor is entire in $z$.  Integrate $G$ over the clockwise contour consisting of $[-R,R]$ and the lower semicircle.  The only enclosed pole is $-\ii$, and \eqref{eq:plain-residue} gives
\begin{equation}
\int_{-R}^{R}G(k)\,\dd k+\int_{\Gamma_R}G(z)\,\dd z
=-2\pi\ii\Res_{z=-\ii}G(z)=g(A).
\label{eq:plain-contour}
\end{equation}
For $z\in\Gamma_R$, the norm of $\ii(H+zL)$ is $\mathcal{O}(R)$.  The Crouzeix--Palencia estimate~\cite{CrouzeixPalencia2017} and \eqref{eq:plain-growth} therefore give
\[
\norm{g\!\left(\ii(H+zL)\right)}
\leq C'\exp(c'R^{\beta'}).
\]
Equation~\eqref{eq:plain-kernel-decay} supplies the stronger factor
$\exp(-c_\beta R^\beta)$, so the semicircle integral tends to zero.  The same estimates on the real axis give absolute convergence.  Letting $R\to\infty$ in \eqref{eq:plain-contour} proves the result.
\end{proof}

\begin{remark}
For $g(z)=\ee^{-tz}$, accretivity supplies the required lower-half-plane decay even though the function has order one.  When $L\succeq0$ and $t\geq0$, dissipativity suppresses the semicircular arc and yields the LCHS identity in \Cref{subsec:matrix-exponential}.
\end{remark}

\subsection{A strip-shift formula for half-plane analytic functions}

Functions with branch points or poles outside the right half-plane are handled by the following strip-shift formula, which separates truncation error from the contour-shift remainder.

\begin{theorem}[Vanilla LCHM strip formula]
\label{thm:plain-strip}
Let $L\succeq0$, let $y_0>1$, and let $\varphi$ be meromorphic on the closed strip
\[
\mathcal S_{y_0}=\{z\in\C:-y_0\leq\ImPart z\leq0\}
\]
with a unique pole at $-\ii$ and residue $\ii/(2\pi)$.  Let $g$ be holomorphic on an open set containing
\[
\bigcup_{z\in\mathcal S_{y_0}}
W\!\left(\ii(H+zL)\right),
\]
and assume the two horizontal integrals below converge absolutely and the vertical sides vanish as their real parts tend to $\pm\infty$.  Then
\begin{equation}
\begin{aligned}
g(A)={}&\int_{\R}\varphi(k)
 g\!\left(\ii(H+kL)\right)\,\dd k\\
&-\int_{\R}\varphi(x-\ii y_0)
 g\!\left(y_0L+\ii(H+xL)\right)\,\dd x.
\end{aligned}
\label{eq:plain-strip-master}
\end{equation}
For a finite truncation radius $R$,
\begin{equation}
\begin{aligned}
&\int_{-R}^{R}\varphi(k)g\!\left(\ii(H+kL)\right)\,\dd k-g(A)\\
={}&-\int_{\R\setminus[-R,R]}\varphi(k)g\!\left(\ii(H+kL)\right)\,\dd k
+\int_{\R}\varphi(x-\ii y_0)
 g\!\left(y_0L+\ii(H+xL)\right)\,\dd x.
\end{aligned}
\label{eq:plain-strip-error}
\end{equation}
\end{theorem}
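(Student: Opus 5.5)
The plan is to argue as in \Cref{thm:plain-entire}: apply the residue theorem to a matrix-valued function, now on a rectangle inside the strip $\mathcal S_{y_0}$ instead of on a half-disk. Define
\[
G(z)=\varphi(z)\,g\!\left(\ii(H+zL)\right),\qquad z\in\mathcal S_{y_0}.
\]

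\textbf{Holomorphy of the matrix factor.} For $z=x-\ii y$ with $0\leq y\leq y_0$,
\[
\ii(H+zL)=yL+\ii(H+xL),
\]
and its numerical range lies in the open set where $g$ is holomorphic. Hence $g(\ii(H+zL))$ can be defined by a Cauchy integral over a contour $\Gamma$ enclosing $W(\ii(H+zL))$. Near any fixed $z_0$, one contour serves nearby $z$, because the resolvent $(\zeta\Id-\ii(H+zL))^{-1}$ exists and is jointly continuous for $\zeta$ outside a neighborhood of the numerical range. Differentiating under the integral shows that $z\mapsto g(\ii(H+zL))$ is holomorphic on a neighborhood of $\mathcal S_{y_0}$. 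Here $L\succeq0$ ensures the real parts $yL\succeq0$ vary monotonically, which justifies the hypothesis on the union of numerical ranges. So $G$ is meromorphic on the strip, and its only pole is $z=-\ii$, which lies strictly inside because $y_0>1$.

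\textbf{Residue computation.} Since $\ii(H+(-\ii)L)=L+\ii H=A$, the residue is
\[
\Res_{z=-\ii}G=\frac{\ii}{2\pi}\,g(A).
\]
Integrate $G$ clockwise around the rectangle with vertices $\pm R$ and $\pm R-\ii y_0$: left along the top side $[-R,R]$ is traversed as $\int_{-R}^{R}$, the bottom side is traversed in reverse, and the vertical sides join them. Orientation bookkeeping is the main place to be careful. The clockwise integral equals $-2\pi\ii\cdot\frac{\ii}{2\pi}g(A)=g(A)$, giving
\[
\int_{-R}^{R}G(k)\,\dd k-\int_{-R}^{R}G(x-\ii y_0)\,\dd x+V_R^{+}+V_R^{-}=g(A).
\]
Substituting $G(x-\ii y_0)=\varphi(x-\ii y_0)\,g(y_0L+\ii(H+xL))$ identifies the second term.

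\textbf{Passage to the limit.} Let $R\to\infty$ (independently at the two ends if needed). By hypothesis the vertical terms $V_R^\pm$ vanish, and the horizontal integrals converge absolutely, which yields \eqref{eq:plain-strip-master}. Then \eqref{eq:plain-strip-error} follows by splitting $\int_{\R}=\int_{-R}^{R}+\int_{\R\setminus[-R,R]}$ in the first integral of \eqref{eq:plain-strip-master} and rearranging.

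The main obstacle is the rigorous holomorphy of the matrix-valued factor $g(\ii(H+zL))$ on the closed strip using only the numerical-range hypothesis. I would handle it by the local Cauchy-integral argument above, using compactness of each numerical range and the uniform resolvent bound $\norm{(\zeta\Id-M)^{-1}}\leq 1/\operatorname{dist}(\zeta,W(M))$. The rest is routine contour bookkeeping.
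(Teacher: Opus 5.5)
Your proposal is correct and follows the paper's proof. Both use the same clockwise rectangle with top edge $[-R,R]$ and bottom edge at $\ImPart z=-y_0$, the residue $\frac{\ii}{2\pi}g(A)$ at $-\ii$ obtained from $\ii(H-\ii L)=A$, and the limit $R\to\infty$ under the assumed absolute convergence and vanishing vertical sides; your Cauchy-integral argument for the holomorphy of $z\mapsto g(\ii(H+zL))$ just fills in what the paper asserts in one line. Like the paper, you read the residue of $G$ off as $\frac{\ii}{2\pi}g(A)$, which tacitly assumes the pole of $\varphi$ at $-\ii$ is simple (a higher-order pole would bring in derivatives of $g(\ii(H+zL))$ at $z=-\ii$), so that hypothesis should be stated explicitly.
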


\begin{proof}
For $z=x-\ii y$ with $0\leq y\leq y_0$,
\[
\ii(H+zL)=yL+\ii(H+xL)
\]
has numerical range in the closed right half-plane.  The stated holomorphy assumption makes the matrix-valued integrand analytic in the strip away from $-\ii$.  Apply the residue theorem to the clockwise rectangle with top edge $[-R,R]$ and bottom edge $[R-\ii y_0,-R-\ii y_0]$.  The residue contribution is $g(M_{-\ii})=g(A)$.  Letting $R\to\infty$ gives \eqref{eq:plain-strip-master}; adding and subtracting the real-axis tail gives \eqref{eq:plain-strip-error}.
\end{proof}

\begin{remark}[Boundary functional calculus]
If $g$ is only continuous on the imaginary axis and holomorphic in the open right half-plane, one may apply the theorem to $g_\eta(z)=g(z+\eta)$ and let $\eta\downarrow0$.  Accretive matrices have semisimple eigenvalues on the imaginary axis, so no unavailable boundary derivatives are required; see \Cref{lem:accretive-boundary}.
\end{remark}

\subsection{Discretization and normalization structure}

After choosing a truncation interval and quadrature, vanilla LCHM has the form
\begin{equation}
g(A)\approx\sum_{j=1}^{M}a_j\,
 g\!\left(\ii(H+k_jL)\right),
\qquad \abs{k_j}\leq K.
\label{eq:plain-discrete}
\end{equation}
Set
\begin{equation}
s_K=\norm{H}+K\norm{L},
\qquad
Y_j=\frac{H+k_jL}{s_K},
\qquad
\norm{Y_j}\leq1.
\label{eq:plain-normalized-angle}
\end{equation}
A block encoding of $Y_j$ is obtained by a linear combination of unitaries (LCU) of block encodings of $H$ and $L$, and QSVT can then implement a bounded polynomial of $Y_j$.

\begin{proposition}[Normalization in a standard vanilla LCHM implementation]
\label{prop:plain-normalization}
For the monomial $g(z)=z^m$, the standard QSVT--LCU realization of \eqref{eq:plain-discrete} has block-encoding normalization
\begin{equation}
\alpha_{\mathrm{vanilla},m}
=s_K^m\sum_{j=1}^{M}\abs{a_j}.
\label{eq:plain-monomial-normalization}
\end{equation}
For a polynomial $p(z)=\sum_{\ell=0}^{d}b_\ell z^\ell$, a direct coefficientwise implementation obeys
\begin{equation}
\alpha_{\mathrm{vanilla},p}
\leq
\left(\sum_{\ell=0}^{d}\abs{b_\ell}s_K^\ell\right)
\sum_{j=1}^{M}\abs{a_j}.
\label{eq:plain-polynomial-normalization}
\end{equation}
Thus even when the kernel has bounded $\ell_1$-norm, the generic normalization can grow as $s_K^d$.
\end{proposition}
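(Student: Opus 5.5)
The plan is to track the normalization through the two layers of the standard construction. The inner layer block-encodes each Hermitian matrix $Y_j$. The outer layer is an LCU over the quadrature index $j$ with coefficients $a_j$. Normalizations multiply across the layers, and for the outer layer they add in $\ell_1$.

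\textbf{Step 1 (inner layer).} First, an LCU of block encodings of $H$ and $L$ with weights $1$ and $k_j$ gives a block encoding of $H+k_jL$ with normalization $\norm{H}+\abs{k_j}\norm{L}$. Since $\abs{k_j}\leq K$, this is at most $s_K$. By padding the LCU, for example with a zero-weight or identity branch, the common normalization $s_K$ can be used for all $j$. This yields a $(1,\cdot,\cdot)$ block encoding of $Y_j=(H+k_jL)/s_K$ with $\norm{Y_j}\leq1$, as in \eqref{eq:plain-normalized-angle}.

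\textbf{Step 2 (monomial).} For $g(z)=z^m$ the summand is
\[
g\!\left(\ii(H+k_jL)\right)=\ii^m(H+k_jL)^m=\ii^m s_K^m Y_j^m .
\]
The operator $Y_j^m$ is a bounded polynomial of a Hermitian matrix, with $\abs{x^m}\leq1$ on $[-1,1]$, so QSVT implements it with normalization $1$. The phase $\ii^m$ is a unitary scalar and costs nothing. The $j$-th term is therefore block-encoded with normalization $s_K^m$, up to the phase.

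\textbf{Step 3 (outer LCU).} The outer LCU prepares the state $\sum_j\sqrt{\abs{a_j}/\sum_\ell\abs{a_\ell}}\,\ket{j}$ and absorbs the phases of the $a_j$ into the select operation. Because the $j$-th branch carries the common factor $s_K^m$, the total normalization is $s_K^m\sum_j\abs{a_j}$, which is \eqref{eq:plain-monomial-normalization}.

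\textbf{Step 4 (polynomial).} For $p(z)=\sum_\ell b_\ell z^\ell$, write
\[
p\!\left(\ii(H+k_jL)\right)=\sum_\ell b_\ell\,\ii^\ell s_K^\ell Y_j^\ell .
\]
A direct coefficientwise implementation is a further LCU over $\ell$ with weights $\abs{b_\ell}s_K^\ell$. Each weight multiplies a normalization-one QSVT block for $Y_j^\ell$. The inner normalization is therefore $\sum_\ell\abs{b_\ell}s_K^\ell$, independent of $j$. Combining with the outer LCU over $j$ gives the product bound \eqref{eq:plain-polynomial-normalization}. The inequality, rather than equality, comes from the fact that a cleverer grouping of terms can only reduce the normalization.

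\textbf{Step 5 (growth).} Take $p$ with $b_d\neq0$ and $\sum_j\abs{a_j}=\mathcal O(1)$. Since $s_K\geq K\norm{L}$ and the truncation parameter $K$ must grow, the normalization $\abs{b_d}s_K^d$ eventually dominates once $s_K>1$. This shows the stated $s_K^d$ growth.

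The main obstacle is interpretive: the statement fixes "the standard" realization. To make the monomial case an equality rather than just an upper bound, I would make Steps 1 and 2 explicit conventions, namely the common normalization $s_K$ and the normalization-one QSVT block for $Y_j^m$.
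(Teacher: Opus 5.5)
Your proposal is correct and is essentially the paper's proof: the factorization $(\ii(H+k_jL))^m=(\ii s_K)^mY_j^m$, a unit-normalization block for $Y_j^m$, the factor $\sum_j\abs{a_j}$ from the outer LCU, and a monomial-by-monomial treatment of polynomials (your inner LCU over $\ell$ is the paper's triangle inequality over monomials), with your Steps 1 and 5 only making explicit what the paper takes for granted. One small fix in Step 1: padding up to the common normalization $s_K$ needs either a pair of cancelling branches $\pm\tfrac12\bigl(s_K-\norm{H}-\abs{k_j}\norm{L}\bigr)\Id$ or a flagged dummy branch, because a literal zero-weight branch leaves the normalization unchanged and a single identity branch would add a multiple of $\Id$ to the encoded operator.
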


\begin{proof}
Since
\[
\left(\ii(H+k_jL)\right)^m=(\ii s_K)^mY_j^m,
\]
QSVT implements $Y_j^m$ with unit normalization and restoring the physical scale contributes $s_K^m$.  The outer LCU contributes the sum of absolute weights.  The polynomial estimate follows by the triangle inequality over its monomials.
\end{proof}

Because the truncation radius $K$ generally increases as the target error decreases, the normalization in \eqref{eq:plain-polynomial-normalization} is an important cost parameter for generic polynomials.  Matrix exponentials display the most favorable structure: if $g(z)=\ee^{-tz}$ and $L\succeq0$, every transform in \eqref{eq:plain-discrete} is the unitary Hamiltonian evolution $\ee^{-\ii t(H+k_jL)}$, and the factor $s_K^d$ is replaced by Hamiltonian-simulation time.  For polynomial and disk-algebra transforms, the bounded Weyl angle matrices of the next section provide an alternative with no $K$-dependent normalization.

\section{Weyl LCHM: exact angular projection}
\label{sec:angular-calculus}

The Weyl LCHM identities are the algebraic core of the paper.  With
\begin{equation}
A=L+\ii H,
\qquad
X_\theta=\cos\theta\,L+\sin\theta\,H
=\frac{\ee^{-\ii\theta}A+\ee^{\ii\theta}A^\dagger}{2},
\label{eq:weyl-main-angle}
\end{equation}
one has, for every integer $m\geq1$,
\begin{equation}
A^m=\frac{2}{\pi}\int_0^\pi
\ee^{\ii m\theta}T_m(X_\theta)\,\dd\theta
=\frac{2}{N}\sum_{j=0}^{N-1}
\ee^{\ii m\theta_j}T_m(X_{\theta_j}),
\quad
\theta_j=\frac{\pi j}{N},\quad N>m.
\label{eq:weyl-master-box}
\end{equation}
Thus the angular discretization is exact at the polynomial degree of interest.  If $w(A)\leq1$, then every angular Hermitian matrix $X_\theta$ has norm at most one and the boundary matrices
\begin{equation}
U_{\theta,\pm}
=\ee^{\ii\theta}\left(
X_\theta\pm\ii(\Id-X_\theta^2)^{1/2}
\right)
\label{eq:weyl-unitary-boundary-box}
\end{equation}
are unitary.  For $g$ in the disk algebra this gives the second master identity
\begin{equation}
 g(A)=\frac1\pi\int_0^\pi
 \left[g(U_{\theta,+})+g(U_{\theta,-})-g(0)\Id\right] \,\dd\theta.
\label{eq:weyl-disk-box}
\end{equation}
\subsection{Angular Hermitian matrices and Fourier projection}
\label{subsec:projection}

Let $A\in\C^{n\times n}$ and define $L$, $H$, and $X_\theta$ by \eqref{eq:intro-cartesian} and \eqref{eq:intro-angle-matrix}.  All matrix norms below are operator $2$-norms.  The numerical range and numerical radius are
\begin{equation}
W(A)=\{\ip{x}{Ax}:\norm{x}=1\},
\qquad
w(A)=\max_{z\in W(A)}\abs{z}.
\label{eq:numerical-range}
\end{equation}
We write $\Disk=\{z:\abs{z}<1\}$ and use $\partial\Disk$ for its boundary.  For a scalar function $f$ on a compact set $E$, let
\begin{equation}
\norm{f}_{\infty,E}=\sup_{z\in E}\abs{f(z)},
\qquad
\Pcal_d=\{P\in\C[z]:\deg P\leq d\}.
\label{eq:scalar-norm-polynomial-space}
\end{equation}

The angle matrices encode the support function of the numerical range.

\begin{lemma}[Numerical radius from angle matrices]
\label{lem:angle-radius}
For every $A\in\C^{n\times n}$,
\begin{equation}
\sup_{\theta\in[0,\pi]}\norm{X_\theta}=w(A).
\label{eq:angle-radius}
\end{equation}
In particular, $w(A)\leq1$ if and only if every angular Hermitian matrix $X_\theta$ has norm at most one.
\end{lemma}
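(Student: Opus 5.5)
The plan is to express both sides through quadratic forms and reduce the claim to an elementary identity between the modulus of a complex number and its rotated real parts. Since $X_\theta$ is Hermitian,
\[
\norm{X_\theta}=\sup_{\norm{x}=1}\abs{\ip{x}{X_\theta x}}.
\]
Using $X_\theta=\RePart(\ee^{-\ii\theta}A)=\tfrac12(\ee^{-\ii\theta}A+\ee^{\ii\theta}A^\dagger)$, we get for every unit vector $x$
\[
\ip{x}{X_\theta x}=\tfrac12\bigl(\ee^{-\ii\theta}\ip{x}{Ax}+\ee^{\ii\theta}\overline{\ip{x}{Ax}}\bigr)=\RePart\bigl(\ee^{-\ii\theta}z\bigr),
\qquad z=\ip{x}{Ax}\in W(A).
\]
Hence $\norm{X_\theta}=\sup_{z\in W(A)}\abs{\RePart(\ee^{-\ii\theta}z)}$. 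This is the support-function statement referred to before the lemma.

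Next, interchange the two suprema:
\[
\sup_{\theta\in[0,\pi]}\norm{X_\theta}=\sup_{z\in W(A)}\,\sup_{\theta\in[0,\pi]}\abs{\RePart(\ee^{-\ii\theta}z)}.
\]
The key scalar fact is that for every $z\in\C$,
\[
\sup_{\theta\in[0,\pi]}\abs{\RePart(\ee^{-\ii\theta}z)}=\abs{z}.
\]
The bound $\leq$ is immediate. For $\geq$, write $z=\abs{z}\ee^{\ii\phi}$ and choose $\theta\equiv\phi \pmod{\pi}$ with $\theta\in[0,\pi)$. Then $\ee^{-\ii\theta}z=\pm\abs{z}$, so the real part has modulus $\abs{z}$.

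The only point needing care is that the restricted range $[0,\pi]$ loses nothing. This holds because of the absolute value: $X_{\theta+\pi}=-X_\theta$, so the norms are $\pi$-periodic and a half-period of angles suffices. Combining the steps gives $\sup_\theta\norm{X_\theta}=\sup_{z\in W(A)}\abs{z}=w(A)$. The supremum over $W(A)$ is attained, as in the paper's definition~\eqref{eq:numerical-range}, because $W(A)$ is compact as the continuous image of the unit sphere.

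The final equivalence is then immediate. If $w(A)\leq1$, every $\norm{X_\theta}\leq w(A)\leq1$. Conversely, if every $\norm{X_\theta}\leq1$, the supremum over $\theta$ is at most $1$, and that supremum equals $w(A)$.
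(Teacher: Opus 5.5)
Your proposal is correct and follows essentially the same route as the paper: write $\ip{x}{X_\theta x}=\RePart(\ee^{-\ii\theta}\ip{x}{Ax})$, interchange the suprema over $\theta$ and unit vectors $x$, and use that the best rotated real part of a complex number $z$ has modulus $\abs{z}$. The only cosmetic difference is that the paper uses $X_{\theta+\pi}=-X_\theta$ to pass to $\lambda_{\max}(X_\theta)$ over $[0,2\pi]$, whereas you keep the absolute value of the quadratic form and stay on the half-period $[0,\pi]$.
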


\begin{proof}
Since $X_{\theta+\pi}=-X_\theta$,
\begin{align*}
\sup_{\theta\in[0,\pi]}\norm{X_\theta}
&=\sup_{\theta\in[0,2\pi]}\lambda_{\max}(X_\theta)\\
&=\sup_{\norm{x}=1}\sup_{\theta\in[0,2\pi]}
\RePart\!\left(\ee^{-\ii\theta}\ip{x}{Ax}\right)\\
&=\sup_{\norm{x}=1}\abs{\ip{x}{Ax}}
=w(A).
\end{align*}
\end{proof}

The exact angular projection has continuous and discrete forms.

\begin{theorem}[Exact Chebyshev angular projection]
\label{thm:exact-projection}
Let $A\in\C^{n\times n}$ and $m\geq1$.  Then
\begin{equation}
A^m=\frac{2}{\pi}\int_0^\pi
\ee^{\ii m\theta}T_m(X_\theta)\,\dd\theta.
\label{eq:continuous-monomial}
\end{equation}
If $N>m$ and $\theta_j=\pi j/N$, then
\begin{equation}
A^m=\frac{2}{N}\sum_{j=0}^{N-1}
\ee^{\ii m\theta_j}T_m(X_{\theta_j}).
\label{eq:discrete-monomial}
\end{equation}
\end{theorem}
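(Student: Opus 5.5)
The plan is to treat $T_m(X_\theta)$ as a trigonometric polynomial in $\theta$ with matrix coefficients and to identify its extreme Fourier coefficient. Write $u=\ee^{-\ii\theta}$, so that $X_\theta=\tfrac12(uA+\bar u A^\dagger)$ with $\bar u=u^{-1}$ on the unit circle. Since $T_m$ has degree $m$, $T_m(X_\theta)=\sum_{\ell=0}^m t_\ell X_\theta^\ell$, where $t_m=2^{m-1}$ is the leading coefficient. Expanding $X_\theta^\ell=2^{-\ell}(uA+u^{-1}A^\dagger)^\ell$ as a sum over noncommutative words in $\{uA,u^{-1}A^\dagger\}$ shows that each word containing $j$ copies of $A$ and $\ell-j$ copies of $A^\dagger$ carries the factor $u^{2j-\ell}$. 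Hence $T_m(X_\theta)=\sum_{k=-m}^{m}C_k u^k$ with matrix coefficients $C_k$. The exponent $k=m$ arises only when $\ell=m$ and every letter is $A$, so $C_m=t_m2^{-m}A^m=\tfrac12 A^m$. This holds with no commutativity assumption, because only one word contributes.

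For the continuous identity, multiply by $\ee^{\ii m\theta}=u^{-m}$ to get $\ee^{\ii m\theta}T_m(X_\theta)=\sum_k C_k\ee^{\ii(m-k)\theta}$, with $m-k\in\{0,\dots,2m\}$. The $k=m$ term integrates to $\pi C_m$ over $[0,\pi]$. For $k\ne m$, the frequency $m-k$ is not automatically even, so I use parity. Since $X_{\theta+\pi}=-X_\theta$ and $T_m(-x)=(-1)^mT_m(x)$, the coefficient $C_k$ can be nonzero only when $k\equiv m\pmod 2$. Indeed, each word of length $\ell$ has exponent $k\equiv\ell$, and $t_\ell\neq0$ only for $\ell\equiv m$. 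Therefore $m-k=2r$ with $1\le r\le m$, and $\int_0^\pi\ee^{2\ii r\theta}\,\dd\theta=0$. This gives $\tfrac{2}{\pi}\cdot\pi\cdot\tfrac12A^m=A^m$.

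For the discrete identity, repeat the argument with $\theta_j=\pi j/N$. Then $\frac1N\sum_{j=0}^{N-1}\ee^{2\ii r\pi j/N}$ is a sum of $N$th roots of unity $\omega^{rj}$ with $\omega=\ee^{2\pi\ii/N}$. It vanishes unless $N\mid r$, and since $1\le r\le m<N$ it always vanishes. The $r=0$ term contributes $C_m$, so $\frac2N\sum_j(\cdot)=2C_m=A^m$.

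The main obstacle is the bookkeeping in the first step. One must verify that the extreme coefficient is exactly $\tfrac12A^m$ despite noncommutativity, and that the parity restriction holds, so that all other frequencies are even and nonzero. The parity claim also follows directly from $T_m(X_{\theta+\pi})=(-1)^mT_m(X_\theta)$, which I would invoke as a cross-check.
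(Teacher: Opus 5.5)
Your proposal is correct and takes essentially the same approach as the paper. You expand $T_m(X_\theta)$ into Fourier modes with noncommutative-polynomial coefficients and identify the extremal coefficient as $\tfrac12 A^m$ from the leading term $2^{m-1}x^m$. You then use parity to show that, after multiplying by $\ee^{\ii m\theta}$, every remaining frequency is even and nonzero, and you annihilate these by integration over $[0,\pi]$ or by the root-of-unity average with $N>m$. Your word-counting argument (exponent $2j-\ell=m$ forces $j=\ell=m$) makes explicit what the paper states in one line.
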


\begin{proof}
The leading term of the first-kind Chebyshev polynomial is
\[
T_m(x)=2^{m-1}x^m+
\text{terms of degree at most $m-2$ with the same parity}.
\]
Because $X_\theta$ contains only the Fourier modes $\ee^{-\ii\theta}$ and $\ee^{\ii\theta}$, one can write
\begin{equation}
T_m(X_\theta)=
\sum_{\substack{\ell=-m\\ \ell\equiv m\; (2)}}^m
\ee^{\ii\ell\theta}C_{m,\ell},
\label{eq:fourier-expansion}
\end{equation}
where the coefficients are noncommutative polynomials in $A$ and $A^\dagger$.  The mode $-m$ can only arise from the leading term, with every factor of $X_\theta$ contributing $2^{-1}\ee^{-\ii\theta}A$.  Hence
\begin{equation}
C_{m,-m}=2^{m-1}2^{-m}A^m=\frac12A^m.
\label{eq:extremal-coefficient}
\end{equation}
After multiplication by $2\ee^{\ii m\theta}$, the frequencies are $0,2,\ldots,2m$.  Integration over $[0,\pi]$ retains only the zero mode, proving \eqref{eq:continuous-monomial}.

For the discrete formula, set $\omega_N=\ee^{2\pi\ii/N}$.  Since $m<N$,
\[
\frac1N\sum_{j=0}^{N-1}\ee^{2\ii r\theta_j}
=\frac1N\sum_{j=0}^{N-1}\omega_N^{rj}
=\delta_{r0},
\qquad 0\leq r\leq m.
\]
The root-of-unity average therefore performs the same projection.
\end{proof}

Let
\begin{equation}
p(z)=\sum_{m=0}^d a_m z^m.
\label{eq:poly-p}
\end{equation}
Applying \Cref{thm:exact-projection} term by term gives, for every $N>d$,
\begin{equation}
p(A)=\frac1N\sum_{j=0}^{N-1}q_{p,\theta_j}(X_{\theta_j}),
\label{eq:polynomial-angular}
\end{equation}
where
\begin{equation}
q_{p,\theta}(x)=a_0+2\sum_{m=1}^d a_m\ee^{\ii m\theta}T_m(x).
\label{eq:angle-polynomial}
\end{equation}
Define the lifted polynomial
\begin{equation}
F_p(z)=2p(z)-p(0).
\label{eq:polynomial-lift}
\end{equation}
For $x=\cos\varphi$,
\begin{align}
q_{p,\theta}(\cos\varphi)
&=p(\ee^{\ii(\theta+\varphi)})
+p(\ee^{\ii(\theta-\varphi)})-p(0)
\label{eq:boundary-p}\\
&=\frac12F_p(\ee^{\ii(\theta+\varphi)})
+\frac12F_p(\ee^{\ii(\theta-\varphi)}).
\label{eq:boundary-lift}
\end{align}
Equation~\eqref{eq:boundary-lift} is the link between unit-circle approximation and all angle matrices.

\subsection{Analytic and disk-algebra functions}
\label{subsec:functional-calculus}

The angular projection identity first extends to functions whose Taylor coefficients dominate the Chebyshev growth.  Define
\begin{equation}
\chi(r)=
\begin{cases}
1, & 0\leq r\leq1,\\
r+\sqrt{r^2-1}, & r>1.
\end{cases}
\label{eq:chi}
\end{equation}
If $X=X^\dagger$ and $\norm{X}\leq r$, then the spectral theorem gives
\begin{equation}
\norm{T_m(X)}\leq\chi(r)^m.
\label{eq:chebyshev-growth}
\end{equation}

\begin{theorem}[Angular series for analytic functions]
\label{thm:analytic-angular}
Let $g(z)=\sum_{m=0}^\infty a_mz^m$ be holomorphic in $\abs{z}<R_0$, and let $A\in\C^{n\times n}$ satisfy $R_0>\chi(w(A))$.  Then the series
\begin{equation}
a_0\Id+2\sum_{m=1}^\infty
a_m\ee^{\ii m\theta}T_m(X_\theta)
\label{eq:analytic-angle-series}
\end{equation}
converges uniformly in operator norm for $\theta\in[0,\pi]$, and
\begin{equation}
g(A)=\frac1\pi\int_0^\pi
\left[
a_0\Id+2\sum_{m=1}^\infty
a_m\ee^{\ii m\theta}T_m(X_\theta)
\right] \,\dd\theta.
\label{eq:analytic-angular}
\end{equation}
In particular, the result holds for every entire function and every finite-dimensional $A$.
\end{theorem}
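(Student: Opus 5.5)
The plan is to reduce the claim to \Cref{thm:exact-projection} term by term and then justify exchanging the infinite sum with the angular integral. Both uniform convergence of \eqref{eq:analytic-angle-series} and convergence of $\sum a_mA^m$ to $g(A)$ will follow from a single coefficient estimate.

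\textbf{Step 1: uniform convergence.} By \Cref{lem:angle-radius}, $\norm{X_\theta}\leq w(A)$ for every $\theta\in[0,\pi]$. Each $X_\theta$ is Hermitian, so \eqref{eq:chebyshev-growth} gives
\[
\norm{T_m(X_\theta)}\leq\chi(w(A))^m
\]
uniformly in $\theta$, since $\chi$ is nondecreasing. Choose $r$ with $\chi(w(A))<r<R_0$. The Cauchy estimates give $\abs{a_m}\leq C_r r^{-m}$. Hence the terms of \eqref{eq:analytic-angle-series} are bounded in norm by
\[
2C_r\bigl(\chi(w(A))/r\bigr)^m,
\]
which is a convergent geometric majorant independent of $\theta$. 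The Weierstrass M-test then gives uniform operator-norm convergence. Each summand is continuous in $\theta$, so the limit is continuous, and integration commutes with the sum.

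\textbf{Step 2: termwise integration.} For $m\geq1$, \Cref{thm:exact-projection} gives
\[
\frac1\pi\int_0^\pi 2a_m\ee^{\ii m\theta}T_m(X_\theta)\,\dd\theta=a_mA^m,
\]
and the constant term integrates to $a_0\Id$. The right-hand side of \eqref{eq:analytic-angular} therefore equals $\sum_{m\geq0}a_mA^m$.

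\textbf{Step 3: identify the sum with $g(A)$.} This is the one point needing care, because the hypothesis controls $w(A)$ rather than the spectral radius directly. Every eigenvalue $\lambda$ of $A$ lies in $W(A)$, so $\rho(A)\leq w(A)\leq\chi(w(A))<R_0$. The power series $\sum a_mA^m$ therefore converges to the holomorphic functional calculus value $g(A)$, for example via Gelfand's formula $\norm{A^m}^{1/m}\to\rho(A)$.

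Alternatively, absolute convergence follows directly from $\norm{A^m}\leq 2w(A)^m$, which is the power inequality combined with $\norm{B}\leq 2w(B)$. That route is not even needed, since the identity already exhibits $a_mA^m$ as an integral of terms dominated by $2\abs{a_m}\chi(w(A))^m$. The sum of these equals $g(A)$ because a convergent power series of a matrix whose spectrum lies inside the disk of convergence agrees with the Cauchy-integral definition.

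For the final sentence: when $g$ is entire, $R_0=\infty$, so the condition $R_0>\chi(w(A))$ holds for every finite-dimensional $A$.
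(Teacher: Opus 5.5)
Your proposal is correct and follows essentially the same route as the paper. A Cauchy estimate on the Taylor coefficients, combined with $\sup_\theta\norm{T_m(X_\theta)}\leq\chi(w(A))^m$ from \Cref{lem:angle-radius} and \eqref{eq:chebyshev-growth}, gives a uniform geometric majorant; termwise application of \Cref{thm:exact-projection} then yields $\sum_m a_mA^m$, and $\rho(A)\leq w(A)<R_0$ identifies this sum with $g(A)$ (your extra check $\norm{A^m}\leq 2w(A)^m$ is valid but not needed).
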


\begin{proof}
Choose $R$ with $\chi(w(A))<R<R_0$ and set
$M_g(R)=\max_{\abs{z}=R}\abs{g(z)}$.  Since the spectral radius $r(A)$ satisfies
$r(A)\leq w(A)<R_0$, the Taylor series of $g$ converges in operator norm at $A$ and represents the standard holomorphic functional calculus:
\[
g(A)=\sum_{m=0}^\infty a_mA^m.
\]
Cauchy's estimate gives $\abs{a_m}\leq M_g(R)R^{-m}$, while \Cref{lem:angle-radius} and \eqref{eq:chebyshev-growth} give
\[
\sup_\theta\norm{T_m(X_\theta)}
\leq\chi(w(A))^m.
\]
The series in \eqref{eq:analytic-angle-series} is therefore uniformly absolutely convergent and may be integrated term by term.  Applying \Cref{thm:exact-projection} to every Taylor coefficient yields \eqref{eq:analytic-angular}.
\end{proof}

The associated truncation error satisfies
\begin{equation}
\begin{aligned}
&\sup_{\theta\in[0,\pi]}
\norm{
2\sum_{m=d+1}^\infty
a_m\ee^{\ii m\theta}T_m(X_\theta)
}\\
&\qquad\leq
\frac{2M_g(R)}{1-\chi(w(A))/R}
\left(\frac{\chi(w(A))}{R}\right)^{d+1}.
\end{aligned}
\label{eq:analytic-tail}
\end{equation}

When $w(A)\leq1$, the angular representation admits a unitary boundary form.  Since $(\Id-X_\theta^2)^{1/2}$ is a continuous function of $X_\theta$, the matrices
\begin{equation}
X_\theta\pm\ii(\Id-X_\theta^2)^{1/2}
\label{eq:boundary-unitaries}
\end{equation}
are unitary and mutually inverse.  The scalar identity
$T_m(x)=\tfrac12[(x+\ii\sqrt{1-x^2})^m+(x-\ii\sqrt{1-x^2})^m]$ implies
\begin{equation}
T_m(X_\theta)
=\frac12\left[
\left(X_\theta+\ii(\Id-X_\theta^2)^{1/2}\right)^m
+\left(X_\theta-\ii(\Id-X_\theta^2)^{1/2}\right)^m
\right].
\label{eq:chebyshev-unitary}
\end{equation}

Let
\begin{equation}
\DiskAlg=\{g\in C(\overline\Disk):g\text{ is holomorphic in }\Disk\}
\label{eq:disk-algebra}
\end{equation}
be the disk algebra.

\begin{definition}[Disk-algebra functional calculus]
\label{def:disk-calculus}
Let $w(A)\leq1$ and $g\in\DiskAlg$.  Choose polynomials $r_\ell$ such that
\begin{equation}
\norm{r_\ell-g}_{\infty,\overline\Disk}\longrightarrow0.
\label{eq:disk-polynomial-approximation}
\end{equation}
We define
\begin{equation}
g(A)=\lim_{\ell\to\infty}r_\ell(A)
\label{eq:disk-calculus-definition}
\end{equation}
in operator norm.  The Okubo--Ando inequality
\begin{equation}
\norm{r(A)}\leq2\norm{r}_{\infty,\overline\Disk}
\label{eq:okubo-ando-polynomial}
\end{equation}
for polynomials $r$ makes the limit well defined and independent of the approximating sequence~\cite{OkuboAndo1975}.  If $g$ is holomorphic on a neighborhood of $\overline\Disk$, this definition agrees with the standard holomorphic matrix functional calculus.
\end{definition}

\begin{theorem}[Weyl LCHM identity in the disk algebra]
\label{thm:disk-algebra}
Let $w(A)\leq1$ and $g\in\DiskAlg$.  Then
\begin{equation}
\begin{aligned}
g(A)=\frac1\pi\int_0^\pi
\Bigl[{}
&g\!\left(\ee^{\ii\theta}
 \left(X_\theta+\ii(\Id-X_\theta^2)^{1/2}\right)\right)\\
&+g\!\left(\ee^{\ii\theta}
 \left(X_\theta-\ii(\Id-X_\theta^2)^{1/2}\right)\right)
-g(0)\Id
\Bigr] \,\dd\theta.
\end{aligned}
\label{eq:disk-algebra-formula}
\end{equation}
With $F_g(z)=2g(z)-g(0)$,
\begin{equation}
\norm{g(A)}
\leq\min\left\{
2\norm{g}_{\infty,\overline\Disk},
\norm{F_g}_{\infty,\partial\Disk}
\right\}.
\label{eq:disk-norm-bounds}
\end{equation}
\end{theorem}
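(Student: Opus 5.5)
The plan is to prove \eqref{eq:disk-algebra-formula} first for polynomials and then pass to the limit through \Cref{def:disk-calculus}. The norm bound \eqref{eq:disk-norm-bounds} will follow from the same representation.

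\textbf{Step 1: polynomials.} Let $p(z)=\sum_{m=0}^d a_mz^m$. By \eqref{eq:polynomial-angular}, or by the continuous form of \Cref{thm:exact-projection} applied term by term,
\[
p(A)=\frac1\pi\int_0^\pi q_{p,\theta}(X_\theta)\,\dd\theta .
\]
Write $V_\pm(\theta)=X_\theta\pm\ii(\Id-X_\theta^2)^{1/2}$. These are commuting functions of the Hermitian matrix $X_\theta$, and $\norm{X_\theta}\leq1$ by \Cref{lem:angle-radius}, so both are unitary. Using \eqref{eq:chebyshev-unitary},
\[
2\ee^{\ii m\theta}T_m(X_\theta)=\bigl(\ee^{\ii\theta}V_+\bigr)^m+\bigl(\ee^{\ii\theta}V_-\bigr)^m .
\]
Summing over $m$ gives
\[
q_{p,\theta}(X_\theta)=p(U_{\theta,+})+p(U_{\theta,-})-p(0)\Id .
\]
This is \eqref{eq:disk-algebra-formula} for $p$. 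Equivalently, it is the spectral-theorem lift of the scalar identity \eqref{eq:boundary-p}.

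\textbf{Step 2: limit to $\DiskAlg$.} Take polynomials $r_\ell\to g$ uniformly on $\overline\Disk$. The left side $r_\ell(A)$ converges to $g(A)$ by definition. On the right, $U_{\theta,\pm}$ is unitary, so its spectrum lies on $\partial\Disk$ and $g(U_{\theta,\pm})$ is defined by the spectral theorem. Moreover
\[
\norm{r_\ell(U_{\theta,\pm})-g(U_{\theta,\pm})}\leq\norm{r_\ell-g}_{\infty,\partial\Disk}
\]
uniformly in $\theta$. We also need this continuous-functional-calculus value to agree with $g(U)$ from \Cref{def:disk-calculus}. That holds because $w(U)\leq1$ and, for normal $U$, polynomial limits coincide with the spectral definition. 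The integrands are also measurable, indeed continuous, in $\theta$, because $X_\theta$ is continuous in $\theta$ and continuous functional calculus is norm-continuous. Dominated convergence then lets us pass to the limit under the integral.

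\textbf{Step 3: norm bound.} The bound $2\norm{g}_{\infty,\overline\Disk}$ comes from Okubo--Ando \eqref{eq:okubo-ando-polynomial} passed to the limit. For the second bound, combine the integrand into a single function of $X_\theta$:
\[
\tfrac12F_g(U_{\theta,+})+\tfrac12F_g(U_{\theta,-}).
\]
This is normal, with spectrum contained in $\{\tfrac12F_g(\ee^{\ii(\theta+\varphi)})+\tfrac12F_g(\ee^{\ii(\theta-\varphi)})\}$ by \eqref{eq:boundary-lift}. Its norm is therefore at most $\norm{F_g}_{\infty,\partial\Disk}$. Averaging over $\theta$ preserves this bound.

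\textbf{Main obstacle.} The hardest part is Step 2, namely the consistency of the two functional calculi together with uniformity in $\theta$. If $X_\theta$ has eigenvalues $\pm1$, then $(\Id-X_\theta^2)^{1/2}$ is only continuous there, so I would argue entirely through the joint spectral decomposition of $X_\theta$ rather than through holomorphic calculus.
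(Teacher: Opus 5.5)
Your proposal is correct and follows the paper's proof essentially step for step. The common route is: the polynomial case from the continuous projection identity combined with the Chebyshev--unitary identity, passage to $\DiskAlg$ by uniform polynomial approximation and the continuous functional calculus of the unitaries $U_{\theta,\pm}$, Okubo--Ando for the first bound, and the spectral theorem applied to the $F_g$-rewritten integrand for the second. Your additional remarks (agreement of the two functional calculi on unitaries, continuity of the integrand in $\theta$, and bounding the averaged normal integrand directly via the joint eigenbasis of $X_\theta$ rather than each summand separately) only make explicit what the paper leaves implicit.
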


\begin{proof}
Combining \eqref{eq:continuous-monomial} with \eqref{eq:chebyshev-unitary} gives
\begin{equation}
\begin{aligned}
A^m=\frac1\pi\int_0^\pi
\Bigl[{}
&\left(\ee^{\ii\theta}
 \left(X_\theta+\ii(\Id-X_\theta^2)^{1/2}\right)\right)^m\\
&+\left(\ee^{\ii\theta}
 \left(X_\theta-\ii(\Id-X_\theta^2)^{1/2}\right)\right)^m
\Bigr] \,\dd\theta,
\qquad m\geq1.
\end{aligned}
\label{eq:monomial-boundary}
\end{equation}
Thus \eqref{eq:disk-algebra-formula} holds for polynomials.  Let $r_\ell$ be as in \eqref{eq:disk-polynomial-approximation}.  The two matrices in the integrand are unitary, so continuous functional calculus gives uniform convergence of the corresponding boundary terms.  Passing to the limit yields \eqref{eq:disk-algebra-formula}, while \eqref{eq:okubo-ando-polynomial} gives the first bound in \eqref{eq:disk-norm-bounds}.

For the second bound, rewrite \eqref{eq:disk-algebra-formula} as
\begin{equation}
\begin{aligned}
g(A)=\frac1{2\pi}\int_0^\pi
\Bigl[{}
&F_g\!\left(\ee^{\ii\theta}
 \left(X_\theta+\ii(\Id-X_\theta^2)^{1/2}\right)\right)\\
&+F_g\!\left(\ee^{\ii\theta}
 \left(X_\theta-\ii(\Id-X_\theta^2)^{1/2}\right)\right)
\Bigr] \,\dd\theta.
\end{aligned}
\label{eq:pure-hermitian-form}
\end{equation}
The spectral theorem and continuous functional calculus bound each summand by $\norm{F_g}_{\infty,\partial\Disk}$.
\end{proof}

\section{Quantum eigenvalue transformation algorithm}
\label{sec:quantum-realization}

\subsection{Lifted approximation and exact angular realization}
\label{subsec:lifted-approximation}

The disk-algebra formula turns approximation of the scalar lift into approximation of the matrix function with no loss in the constant.

\begin{theorem}[Lifted approximation and exact angular realization]
\label{thm:lifted-approximation}
Let $w(A)\leq1$, let $g\in\DiskAlg$, and let
\begin{equation}
P_d(z)=\sum_{m=0}^d c_mz^m
\label{eq:Pd-coefficients}
\end{equation}
be a polynomial of degree at most $d$.  Define
\begin{equation}
p_d(z)=\frac{P_d(z)+P_d(0)}2
=c_0+\frac12\sum_{m=1}^dc_mz^m.
\label{eq:pd-from-Pd}
\end{equation}
Then $F_{p_d}=P_d$ and
\begin{equation}
\norm{g(A)-p_d(A)}
\leq\norm{F_g-P_d}_{\infty,\partial\Disk}.
\label{eq:constant-one-transfer}
\end{equation}
Moreover, for every $N>d$ and $\theta_j=\pi j/N$,
\begin{equation}
p_d(A)=\frac1N\sum_{j=0}^{N-1}
q_{d,\theta_j}(X_{\theta_j}),
\label{eq:exact-finite-realization}
\end{equation}
where
\begin{equation}
q_{d,\theta}(x)=c_0+\sum_{m=1}^dc_m\ee^{\ii m\theta}T_m(x).
\label{eq:qd-theta}
\end{equation}
\end{theorem}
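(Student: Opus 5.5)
The proof has three parts: an algebraic identity for the lift, an operator-norm error transfer obtained from the disk-algebra formula, and an exact finite angular realization coming from the polynomial projection.

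\emph{The lift identity.} Since $p_d(0)=\bigl(P_d(0)+P_d(0)\bigr)/2=P_d(0)$, we get $F_{p_d}=2p_d-p_d(0)=P_d+P_d(0)-P_d(0)=P_d$. Reading off coefficients gives $p_d(z)=c_0+\tfrac12\sum_{m\ge1}c_mz^m$.

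\emph{The error bound.} Put $h=g-p_d$. Then $h\in\DiskAlg$, because polynomials belong to the disk algebra. Its lift is
\[
F_h=2h-h(0)=F_g-F_{p_d}=F_g-P_d,
\]
since the lift is linear. We also need the definition of the disk-algebra calculus to be linear, so that $h(A)=g(A)-p_d(A)$. This holds because sums of approximating polynomial sequences approximate the sum, and $p_d(A)$ agrees with the standard polynomial evaluation. Applying the second bound of \eqref{eq:disk-norm-bounds} in \Cref{thm:disk-algebra} to $h$ then gives
\[
\norm{g(A)-p_d(A)}\le\norm{F_g-P_d}_{\infty,\partial\Disk}.
\]
This is the only step with any subtlety, namely linearity and the fact that $\partial\Disk$ suffices. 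The latter is already built into the pure-Hermitian form \eqref{eq:pure-hermitian-form}: the unitaries $U_{\theta,\pm}$ have spectrum on $\partial\Disk$, so only boundary values of $F_h$ enter.

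\emph{The exact finite realization.} Apply \eqref{eq:polynomial-angular} to $p=p_d$, whose coefficients are $a_0=c_0$ and $a_m=c_m/2$ for $m\ge1$. Equation \eqref{eq:polynomial-angular} follows term by term from \eqref{eq:discrete-monomial}, which is valid for each $m\le d<N$; the constant term is handled by $\frac1N\sum_j c_0\Id=c_0\Id$. Substituting into \eqref{eq:angle-polynomial} gives
\[
q_{p_d,\theta}(x)=c_0+2\sum_{m\ge1}\tfrac{c_m}{2}\,\ee^{\ii m\theta}T_m(x)=q_{d,\theta}(x),
\]
which yields \eqref{eq:exact-finite-realization}. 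Note that this discrete identity holds for any matrix $A$; the hypothesis $w(A)\le1$ is used only in the error-bound step.
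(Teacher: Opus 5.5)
Your proposal is correct and follows the paper's proof: you read the lift identity $F_{p_d}=P_d$ off the definition, you apply the second estimate of \eqref{eq:disk-norm-bounds} to $h=g-p_d$ with $F_h=F_g-P_d$, and you obtain the finite realization by applying \eqref{eq:discrete-monomial} coefficientwise. Your added remarks are accurate and make explicit what the paper leaves implicit: the disk-algebra calculus is linear and agrees with ordinary polynomial evaluation, and the discrete identity holds without assuming $w(A)\leq1$.
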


\begin{proof}
The identity $F_{p_d}=P_d$ follows directly from \eqref{eq:pd-from-Pd}.  Applying the second estimate in \eqref{eq:disk-norm-bounds} to $h=g-p_d$ gives
\[
\norm{h(A)}\leq\norm{F_h}_{\infty,\partial\Disk}
=\norm{F_g-P_d}_{\infty,\partial\Disk},
\]
which is \eqref{eq:constant-one-transfer}.  Equation~\eqref{eq:exact-finite-realization} follows from \eqref{eq:discrete-monomial} applied to the coefficients in \eqref{eq:pd-from-Pd}.
\end{proof}

\begin{definition}[Lift normalizations and QET constants]
\label{def:qet-constant}
Let $p_d\not\equiv0$ be a polynomial, let $g\in\DiskAlg$ be nonzero, and let $\ket{\psi}$ be normalized.  Define
\begin{equation}
C_p:=\norm{2p_d-p_d(0)}_{\infty,\partial\Disk},
\qquad
C_g:=\norm{2g-g(0)}_{\infty,\partial\Disk}.
\label{eq:Cp-Cg-definition}
\end{equation}
Whenever the corresponding output is nonzero, define
\begin{equation}
q_{\mathrm{et}}(p_d;A,\psi)
:=\frac{\norm{p_d}_{\infty,\overline\Disk}}
{\norm{p_d(A)\ket{\psi}}},
\qquad
q_{\mathrm{et}}(g;A,\psi)
:=\frac{\norm{g}_{\infty,\overline\Disk}}
{\norm{g(A)\ket{\psi}}}.
\label{eq:qet-constant-definition}
\end{equation}
\end{definition}

If $P_d=2p_d-p_d(0)$ satisfies
$\norm{F_g-P_d}_{\infty,\partial\Disk}\leq\varepsilon_{\mathrm{app}}$, then the reverse triangle inequality gives
\begin{equation}
\abs{C_p-C_g}\leq\varepsilon_{\mathrm{app}}.
\label{eq:Cp-controlled-by-Cg}
\end{equation}
This is the main reason for indexing the normalization by the transformed polynomial or target function rather than by the approximation degree: for a convergent disk-algebra approximation, $C_p$ remains within the prescribed approximation error of the target-controlled quantity $C_g$.

If
\begin{equation}
E_d(F_g)=\inf_{P\in\Pcal_d}\norm{F_g-P}_{\infty,\partial\Disk},
\label{eq:best-error}
\end{equation}
then \Cref{thm:lifted-approximation} gives a degree-$d$ polynomial $p_d$ satisfying
\begin{equation}
\norm{g(A)-p_d(A)}\leq E_d(F_g).
\label{eq:best-error-matrix}
\end{equation}
The passage from $P_d$ to the finite family of angle matrices adds no angular error.

For analytic lifts, standard Taylor approximation gives an explicit degree estimate.  Suppose $R>1$ and $F_g$ is holomorphic on a neighborhood of the closed disk $\{z:\abs{z}\leq R\}$, and set
\begin{equation}
M_F(R)=\max_{\abs{z}=R}\abs{F_g(z)}.
\label{eq:MF}
\end{equation}
The degree-$d$ Taylor polynomial satisfies
\begin{equation}
\norm{F_g-P_d}_{\infty,\overline\Disk}
\leq\frac{M_F(R)}{R-1}R^{-d}.
\label{eq:Taylor-lift-bound}
\end{equation}
Thus a sufficient choice for error $\varepsilon_{\mathrm{app}}$ is
\begin{equation}
d\geq
\max\!\left\{0,
\left\lceil
\frac{\log\!\left(M_F(R)/((R-1)\varepsilon_{\mathrm{app}})\right)}
{\log R}
\right\rceil\right\}.
\label{eq:degree-bound}
\end{equation}

\begin{remark}[Direct approximation for contractions]
\label{rem:direct-approximation}
If $\norm{A}\leq1$, von Neumann's inequality~\cite{SzNagyFoias2010} gives
\begin{equation}
\norm{g(A)-r_d(A)}
\leq\norm{g-r_d}_{\infty,\overline\Disk}
\label{eq:von-neumann}
\end{equation}
for every analytic polynomial $r_d$.  The angular construction still acts on the lifted unit-circle polynomial $2r_d-r_d(0)$.  One may therefore compare direct approximation of $g$ with direct approximation of $F_g$ and retain the lower-degree choice.
\end{remark}

\subsection{Coherent angle-GQSP construction}
\label{subsec:coherent-construction}

For the quantum construction, assume $\norm{A}\leq1$ and let the supplied block encoding have normalization $\alpha=1$.  The circuit therefore acts directly on $A$ in the standard block-encoding oracle model~\cite{GilyenEtAl2019}.

\begin{definition}[Unit-normalized block encoding]
\label{def:block-encoding}
A unitary $U_A$ acting on $a$ ancilla qubits and a system register is a $(1,a,\delta_A)$ block encoding of $A$ if
\begin{equation}
\norm{A-(\bra{0^a}\otimes\Id)U_A(\ket{0^a}\otimes\Id)}
\leq\delta_A.
\label{eq:block-encoding}
\end{equation}
\end{definition}

Denote the principal block of $U_A$ by
\begin{equation}
\widetilde A=(\bra{0^a}\otimes\Id)U_A(\ket{0^a}\otimes\Id),
\qquad
\norm{\widetilde A-A}\leq\delta_A.
\label{eq:Atilde}
\end{equation}
Since $\widetilde A$ is a compression of a unitary, $\norm{\widetilde A}\leq1$.

Choose
\begin{equation}
N=2^b>d,
\qquad
b=\left\lceil\log_2(d+1)\right\rceil,
\qquad
\theta_j=\frac{\pi j}{N}.
\label{eq:angle-register}
\end{equation}
The diagonal phase operations
\begin{equation}
\Phi_\pm\ket{j}=\ee^{\pm\ii\pi j/N}\ket{j}
\label{eq:angle-phases}
\end{equation}
are implemented by $\mathcal{O}(b)$ single-bit-controlled rotations.

For each angle branch, define
\begin{equation}
\widetilde X_j
=\frac{\ee^{-\ii\theta_j}\widetilde A
+\ee^{\ii\theta_j}\widetilde A^\dagger}{2}.
\label{eq:Xtilde-j}
\end{equation}
Introduce a direct-standard-form qubit $h$, set
$B_j=\ee^{-\ii\theta_j}U_A$, and define
\begin{align}
\mathcal V_j
&=\Outer{0}{1}_h\otimes B_j
 +\Outer{1}{0}_h\otimes B_j^\dagger
\notag\\
&=\left(
\Projector{0}_h\otimes B_j
 +\Projector{1}_h\otimes B_j^\dagger
\right)(X_h\otimes\Id).
\label{eq:Vj}
\end{align}
The angle-coherent version is
\begin{equation}
\sum_{j=0}^{N-1}\Projector{j}\otimes\mathcal V_j
=
\Phi_-\otimes\Outer{0}{1}_h\otimes U_A
+
\Phi_+\otimes\Outer{1}{0}_h\otimes U_A^\dagger .
\label{eq:coherent-Vj}
\end{equation}
Thus the direct dilation uses a constant number of selected controlled
calls to $U_A$ or $U_A^\dagger$ and $\mathcal O(b)$ angle-dependent phase
rotations.  Since $B_j$ is unitary, the two off-diagonal blocks in
\eqref{eq:Vj} are adjoints of one another and
\begin{equation}
\mathcal V_j^\dagger=\mathcal V_j,
\qquad
\mathcal V_j^2
=\Projector{0}_h\otimes U_AU_A^\dagger
 +\Projector{1}_h\otimes U_A^\dagger U_A
=\Id .
\label{eq:Vj-involution}
\end{equation}
For
\begin{equation}
\Pi_X=\Projector{+}_h\otimes\Projector{0^a}\otimes\Id,
\label{eq:PiX}
\end{equation}
direct compression gives
\begin{align}
&(\bra{+}_h\otimes\bra{0^a}\otimes\Id)
\mathcal V_j
(\ket{+}_h\otimes\ket{0^a}\otimes\Id)
\notag\\
&\qquad=
\frac{\ee^{-\ii\theta_j}\widetilde A
+\ee^{\ii\theta_j}\widetilde A^\dagger}{2}
=\widetilde X_j .
\label{eq:Vj-compression}
\end{align}
Equivalently,
\begin{equation}
\Pi_X\mathcal V_j\Pi_X=\widetilde X_j\Pi_X.
\label{eq:standard-form}
\end{equation}
The corresponding signal reflection requires no additional register because
\begin{equation}
2\Pi_X-\Id
=(\mathsf H_h\otimes\Id)
\left(
2\Projector{0}_h\otimes\Projector{0^a}\otimes\Id-\Id
\right)
(\mathsf H_h\otimes\Id).
\label{eq:PiX-reflection}
\end{equation}
Following qubitization~\cite{LowChuang2019}, define the walk
\begin{equation}
W_j=(2\Pi_X-\Id)\mathcal V_j.
\label{eq:Wj}
\end{equation}
Its two-dimensional invariant-subspace structure gives
\begin{equation}
\Pi_XW_j^m\Pi_X=T_m(\widetilde X_j)\Pi_X,
\qquad m\geq0.
\label{eq:qubitized-chebyshev}
\end{equation}
A self-contained proof is included in \Cref{app:qubitization}.

Let $P_d$ be as in \eqref{eq:Pd-coefficients}, assume $p_d\not\equiv0$, and define
\begin{equation}
C_p
:=\norm{P_d}_{\infty,\partial\Disk}
=\norm{2p_d-p_d(0)}_{\infty,\partial\Disk}.
\label{eq:normalization-factor}
\end{equation}
Define the angle-controlled signal unitary
\begin{equation}
\mathsf Z
=\sum_{j=0}^{N-1}\Projector{j}\otimes
\ee^{\ii\theta_j}W_j.
\label{eq:global-signal}
\end{equation}
It is unitary and satisfies
\begin{equation}
P_d(\mathsf Z)
=\sum_{j=0}^{N-1}\Projector{j}\otimes
P_d(\ee^{\ii\theta_j}W_j).
\label{eq:global-polynomial}
\end{equation}
GQSP implements $P_d(\mathsf Z)/C_p$ in a designated ancilla block using $\mathcal{O}(d)$ calls to $\mathsf Z$ or $\mathsf Z^\dagger$~\cite{MotlaghWiebe2024}.  On the $j$th angle branch,
\begin{align}
\Pi_XP_d(\ee^{\ii\theta_j}W_j)\Pi_X
&=\left[c_0\Id+
\sum_{m=1}^dc_m\ee^{\ii m\theta_j}T_m(\widetilde X_j)\right]\Pi_X
\label{eq:GQSP-angle}\\
&=q_{d,\theta_j}(\widetilde X_j)\Pi_X.
\end{align}

Prepare the uniform angle state
\begin{equation}
\ket{\mathrm{ang}}=\frac1{\sqrt N}\sum_{j=0}^{N-1}\ket{j}
=\mathsf H^{\otimes b}\ket{0^b}.
\label{eq:angle-state}
\end{equation}
Equations~\eqref{eq:global-polynomial} and \eqref{eq:GQSP-angle} give
\begin{align}
&(\bra{\mathrm{ang}}\otimes\Pi_X)
P_d(\mathsf Z)
(\ket{\mathrm{ang}}\otimes\Pi_X)
\notag\\
&\qquad=\frac1N\sum_{j=0}^{N-1}
q_{d,\theta_j}(\widetilde X_j)\Pi_X
=p_d(\widetilde A)\Pi_X.
\label{eq:coherent-average}
\end{align}
The last equality is exact because the grid is equispaced and $N>d$.

\begin{theorem}[Weyl LCHM implementation]
\label{thm:coherent-algorithm}
Let $\norm{A}\leq1$, and let $U_A$ be a $(1,a,\delta_A)$ block encoding with coherent access to controlled $U_A$, $U_A^\dagger$, and the required reflections.  Let $P_d$ be a nonzero polynomial of degree at most $d$, define $p_d=(P_d+P_d(0))/2$, and define $C_p$ by \eqref{eq:normalization-factor}.  There is an ideal quantum circuit whose success block is exactly
\begin{equation}
\frac{p_d(\widetilde A)}{C_p}.
\label{eq:ideal-success-block}
\end{equation}
The circuit uses
\begin{align}
Q_A(d)&=\mathcal{O}(d)
&&\text{controlled calls to $U_A$ or $U_A^\dagger$},
\label{eq:coherent-query-count}\\
Q_{\mathrm{rot}}(d)&=\mathcal{O}\!\left(d\log(d+1)\right)
&&\text{angle-dependent one-qubit rotations},
\label{eq:coherent-rotation-count}
\end{align}
and
\begin{equation}
a+\left\lceil\log_2(d+1)\right\rceil+2
\label{eq:ancilla-count}
\end{equation}
logical ancilla qubits.  The two qubits beyond the original block-encoding and angle registers are the direct-standard-form qubit and the single GQSP processing qubit.  Temporary workspace internal to the supplied oracles, reflections, controls, and gate synthesis is excluded.  On a normalized input $\ket{\psi}$, amplitude amplification uses
\begin{equation}
\mathcal O\!\left(q_{\mathrm{et}}(p_d;\widetilde A,\psi)\right)
=\mathcal O\!\left(
\frac{C_p}{\norm{p_d(\widetilde A)\ket{\psi}}}
\right)
\label{eq:coherent-qet-repetitions}
\end{equation}
coherent circuit repetitions.  \Cref{prop:angle-lcu-optimality} shows that this dependence is optimal up to a universal constant in the uniform contraction model.

If, in addition, $g\in\DiskAlg$ and
\begin{equation}
\norm{F_g-P_d}_{\infty,\partial\Disk}\leq\varepsilon_{\mathrm{app}},
\label{eq:approximation-assumption}
\end{equation}
then an implementation with circuit error $\varepsilon_{\mathrm{circ}}$ has normalized success block $\widetilde G$ satisfying
\begin{equation}
\norm{\widetilde G-\frac{g(A)}{C_p}}
\leq
\frac{\varepsilon_{\mathrm{app}}}{C_p}
+\frac{\delta_A}{2C_p}\sum_{m=1}^d m\abs{c_m}
+\varepsilon_{\mathrm{circ}},
\label{eq:coherent-total-error}
\end{equation}
where $P_d(z)=\sum_{m=0}^dc_mz^m$.  In particular, if
\begin{equation}
\varepsilon_{\mathrm{app}}
+\frac{\delta_A}{2}\sum_{m=1}^d m\abs{c_m}
+C_p\varepsilon_{\mathrm{circ}}
\leq\varepsilon,
\label{eq:unnormalized-error-budget}
\end{equation}
then the circuit is a $C_p$-normalized block encoding of $g(A)$ with unnormalized operator error at most $\varepsilon$.  Clifford+$T$ approximation with uniform per-rotation error of order $\varepsilon_{\mathrm{circ}}/[d\log(d+1)]$ contributes
\begin{equation}
\mathcal{O}\!\left(
 d\log(d+1)
 \log\!\frac{d\log(d+1)}{\varepsilon_{\mathrm{circ}}}
\right)
\label{eq:coherent-T-count}
\end{equation}
$T$ gates apart from the matrix-oracle cost.

If $F_g$ is holomorphic on $\abs{z}\leq R$ for some $R>1$, the Taylor choice \eqref{eq:degree-bound} gives
\begin{equation}
Q_A
=\mathcal{O}\!\left(
\frac{\log\!\left(M_F(R)/((R-1)\varepsilon_{\mathrm{app}})\right)}{\log R}
\right).
\label{eq:analytic-query-complexity}
\end{equation}
Thus matrix-query precision dependence is inherited from scalar approximation, whereas Clifford+$T$ rotation-approximation precision enters logarithmically through \eqref{eq:coherent-T-count}.
\end{theorem}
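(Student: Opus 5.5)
We assemble the theorem from the coherent construction already laid out in \Cref{subsec:coherent-construction}, in four steps: exactness of the ideal success block, resource counting, the error bound, and the amplification and Clifford$+T$ counts.

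\textbf{Step 1: exact success block.}
We start from the angle-controlled signal $\mathsf Z$ of \eqref{eq:global-signal}. It is unitary because it is block diagonal in the angle register, with unitary blocks $\ee^{\ii\theta_j}W_j$. Each $W_j$ is unitary since $\mathcal V_j$ is an involution by \eqref{eq:Vj-involution} and $2\Pi_X-\Id$ is a reflection.

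Because $\norm{P_d}_{\infty,\partial\Disk}=C_p$, GQSP~\cite{MotlaghWiebe2024} block-encodes $P_d(\mathsf Z)/C_p$ in one processing qubit using $d$ calls to controlled $\mathsf Z$ or $\mathsf Z^\dagger$. Sandwiching with $\mathsf H^{\otimes b}$ on the angle register and $\Pi_X$ gives \eqref{eq:coherent-average}.

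The key input is the qubitized Chebyshev identity \eqref{eq:qubitized-chebyshev}. Since $\Pi_X$ projects onto the invariant two-dimensional subspaces, the compression $\Pi_X W_j^m\Pi_X$ is multiplicative in $m$ on those subspaces, so the polynomial compresses termwise into $q_{d,\theta_j}(\widetilde X_j)$. The last equality in \eqref{eq:coherent-average} is then \Cref{thm:lifted-approximation}, eq.~\eqref{eq:exact-finite-realization}, applied to $\widetilde A$. This requires $w(\widetilde A)\le\norm{\widetilde A}\le1$, which holds because $\widetilde A$ is a compression of a unitary. Hence the success block is exactly $p_d(\widetilde A)/C_p$.

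\textbf{Step 2: resources.}
\begin{itemize}[leftmargin=1.5em]
\item \emph{Oracle calls.} Each call to $\mathsf Z$ uses a constant number of controlled $U_A$ or $U_A^\dagger$ calls, by \eqref{eq:coherent-Vj}. This gives $Q_A=\mathcal O(d)$.
\item \emph{Angle rotations.} Each call also uses the phases $\Phi_\pm$ and the factor $\ee^{\ii\theta_j}$, which is diagonal on $\ket j$. Each costs $\mathcal O(b)$ bit-controlled rotations with $b=\lceil\log_2(d+1)\rceil$, giving $\mathcal O(d\log(d+1))$ rotations in total.
\item \emph{Ancillas.} These are the $a$ block-encoding qubits, the $b$ angle qubits, qubit $h$, and the GQSP qubit.
\end{itemize}

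\textbf{Step 3: error bound.}
Split $\widetilde G-g(A)/C_p$ into three pieces:
\begin{itemize}[leftmargin=1.5em]
\item the circuit error $\varepsilon_{\mathrm{circ}}$;
\item the approximation error $[p_d(A)-g(A)]/C_p$, bounded by $\varepsilon_{\mathrm{app}}/C_p$ via \eqref{eq:constant-one-transfer}, since $w(A)\le\norm A\le1$;
\item the oracle-error term $[p_d(\widetilde A)-p_d(A)]/C_p$.
\end{itemize}

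This oracle-error term is the main obstacle, since a naive bound would use Okubo--Ando and lose constants. Instead we use the monomial telescoping identity
\[
\widetilde A^m-A^m=\sum_{k=0}^{m-1}\widetilde A^{k}(\widetilde A-A)A^{m-1-k}.
\]
Together with $\norm{A},\norm{\widetilde A}\le1$ this gives $\norm{\widetilde A^m-A^m}\le m\delta_A$. The coefficients of $p_d$ are $c_m/2$ for $m\ge1$ by \eqref{eq:pd-from-Pd}, so the term is at most $\tfrac{\delta_A}{2C_p}\sum_{m\ge1}m\abs{c_m}$. This proves \eqref{eq:coherent-total-error}, and multiplying by $C_p$ gives \eqref{eq:unnormalized-error-budget}.

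\textbf{Step 4: amplification, gate counts, and analytic degree.}
The success amplitude on $\ket\psi$ is $\norm{p_d(\widetilde A)\ket\psi}/C_p$. Standard (oblivious or fixed-point) amplitude amplification therefore uses $\mathcal O(C_p/\norm{p_d(\widetilde A)\ket\psi})$ repetitions.

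To express this through $q_{\mathrm{et}}$, note that $C_p\le 3\norm{p_d}_{\infty,\overline\Disk}$ by the triangle inequality and the maximum principle.

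For the $T$-count, we apply Ross--Selinger synthesis to each of the $\mathcal O(d\log(d+1))$ rotations with per-rotation accuracy $\varepsilon_{\mathrm{circ}}/[d\log(d+1)]$. Errors add subadditively, which yields \eqref{eq:coherent-T-count}.

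Finally, \eqref{eq:analytic-query-complexity} follows by substituting the degree choice \eqref{eq:degree-bound} into $Q_A=\mathcal O(d)$.
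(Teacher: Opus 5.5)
Your proposal is correct and follows the paper's proof essentially step for step. GQSP on the angle-controlled walk $\mathsf Z$, the qubitized Chebyshev identity and the exact root-of-unity projection give the ideal block $p_d(\widetilde A)/C_p$. Adding and subtracting $p_d(A)/C_p$, then using \eqref{eq:constant-one-transfer} together with the monomial telescoping bound, gives \eqref{eq:coherent-total-error}.

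Three wording fixes are needed; none is a logical gap:
\begin{itemize}
\item In Step~1, the termwise compression comes from linearity of $\Pi_X(\cdot)\Pi_X$ combined with $\Pi_XW_j^m\Pi_X=T_m(\widetilde X_j)\Pi_X$. These compressions are not multiplicative in $m$.
\item $\Pi_X$ projects onto the $\ket{x}$ directions, not onto the invariant planes.
\item Oblivious amplitude amplification does not apply, because $p_d(\widetilde A)/C_p$ is not an isometry. Use standard amplitude amplification with the assumed input-state reflection.
\end{itemize}
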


\begin{proof}[Proof of \Cref{thm:coherent-algorithm}]
Equations~\eqref{eq:Vj}--\eqref{eq:standard-form} give a Hermitian involutory standard form for every angle branch using one direct-standard-form qubit, and \eqref{eq:qubitized-chebyshev} gives the corresponding Chebyshev transformation.  The controlled unitary \eqref{eq:global-signal} packages all angle branches into one signal.  GQSP applied to $P_d/C_p$, followed by preparation and unpreparation of \eqref{eq:angle-state}, yields \eqref{eq:coherent-average} divided by $C_p$.

Each call to $\mathsf Z$ or $\mathsf Z^\dagger$ contains one application of the walk or its inverse.  By \eqref{eq:coherent-Vj}, this requires a constant number of selected controlled calls to $U_A$ or $U_A^\dagger$ and $\mathcal{O}(\log(d+1))$ angle-dependent phase rotations; the reflection \eqref{eq:PiX-reflection} requires no matrix-oracle query.  The logical ancillary registers consist of the original $a$ block-encoding qubits, the $\lceil\log_2(d+1)\rceil$-qubit angle register, one direct-standard-form qubit, and one GQSP processing qubit.  Preparing the $\ket{+}_h$ signal state requires only a Hadamard and no additional register.

The uniform angle state contributes amplitudes $1/\sqrt N$ on both sides of the matrix element, producing the exact average $N^{-1}\sum_j$.  Its coefficient $\ell_1$-norm is therefore $N\cdot N^{-1}=1$.  The success-amplitude statement follows by applying the success block to $\ket{\psi}$.  Because \eqref{eq:Vj-compression} is exact, the direct dilation introduces neither an additional normalization nor an additional approximation error.  The approximation and block-encoding terms in \eqref{eq:coherent-total-error} therefore follow unchanged from \Cref{thm:lifted-approximation} and the telescoping estimate \eqref{eq:polynomial-perturbation}; the final term is the circuit implementation error.  Multiplication by $C_p$ gives \eqref{eq:unnormalized-error-budget}.  The rotation count and ancilla-free Clifford+$T$ approximation of the axial rotations give \eqref{eq:coherent-T-count}, and the Taylor degree estimate \eqref{eq:degree-bound} gives \eqref{eq:analytic-query-complexity}.
\end{proof}

\begin{corollary}[Normalized output-state preparation]
\label{cor:output-state-preparation}
Assume an exact block encoding and an ideal circuit, and suppose reflections about the prepared input and the success subspace are available.  Let $\ket{\psi}$ be normalized and suppose $p_d(A)\ket{\psi}\neq0$.  One circuit application followed by postselection prepares
\begin{equation}
\frac{p_d(A)\ket{\psi}}{\norm{p_d(A)\ket{\psi}}}
\label{eq:normalized-polynomial-output}
\end{equation}
with probability
\begin{equation}
P_{\mathrm{succ}}
=\frac{\norm{p_d(A)\ket{\psi}}^2}{C_p^2}.
\label{eq:success-prob}
\end{equation}
There is no additional factor $1/N$ from the coherent angular average.  Repeating independent postselection therefore takes an expected
\begin{equation}
\Theta\!\left(
\frac{C_p^2}{\norm{p_d(A)\ket{\psi}}^2}
\right)
\label{eq:independent-postselection-repetitions}
\end{equation}
circuit applications.

Amplitude amplification~\cite{BrassardEtAl2002} prepares the same state at constant accuracy using
\begin{equation}
\mathcal{O}\!\left(
\frac{C_p}{\norm{p_d(A)\ket{\psi}}}
\right)
=\mathcal O\!\left(q_{\mathrm{et}}(p_d;A,\psi)\right)
\label{eq:amplification-repetitions}
\end{equation}
coherent circuit repetitions and hence
\begin{equation}
\mathcal{O}\!\left(
\frac{dC_p}{\norm{p_d(A)\ket{\psi}}}
\right)
\label{eq:state-prep-cost}
\end{equation}
controlled matrix-oracle calls.  The same statement holds with $p_d(A)$ replaced by $g(A)$ when the operator-approximation error is small relative to $\norm{g(A)\ket{\psi}}$; an explicit normalized-state bound is given in \eqref{eq:normalized-state-error}.
\end{corollary}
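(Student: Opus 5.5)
The plan is to read off the success amplitude from the exact success block of \Cref{thm:coherent-algorithm} and then apply standard amplitude amplification. With an exact block encoding, $\widetilde A=A$, so the ideal circuit $\mathcal U$ satisfies
\[
(\bra{\mathrm{ang}}\otimes\bra{+}_h\otimes\bra{0^a}\otimes\bra{0}_{\mathrm{GQSP}}\otimes\Id)\,\mathcal U\,(\ket{\mathrm{ang}}\otimes\ket{+}_h\otimes\ket{0^a}\otimes\ket{0}_{\mathrm{GQSP}}\otimes\Id)=\frac{p_d(A)}{C_p}.
\]
Both $\ket{\mathrm{ang}}=\mathsf H^{\otimes b}\ket{0^b}$ and $\ket{+}_h=\mathsf H\ket{0}$ are prepared by Hadamards. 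I will absorb these preparations and unpreparations into the circuit, so that the flag subspace is simply all ancillas in $\ket{0}$.

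Applying $\mathcal U$ to $\ket{0_{\mathrm{anc}}}\ket{\psi}$ then gives
\[
\mathcal U\ket{0_{\mathrm{anc}}}\ket{\psi}=\ket{0_{\mathrm{anc}}}\otimes\frac{p_d(A)\ket{\psi}}{C_p}+\ket{\perp},
\]
where $\ket{\perp}$ is orthogonal to the flag. Postselecting on the flag therefore yields the normalized state \eqref{eq:normalized-polynomial-output} with probability $\norm{p_d(A)\ket{\psi}}^2/C_p^2$, which is \eqref{eq:success-prob}.

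The key point to stress is that there is no $1/N$ loss. The angle register is projected onto the single vector $\ket{\mathrm{ang}}$, not onto one basis state $\ket{j}$. The average $N^{-1}\sum_j$ therefore arises as an amplitude whose LCU weights have $\ell_1$-norm $1$, as already noted in the proof of \Cref{thm:coherent-algorithm}. Independent repetitions form a geometric trial with success probability $P_{\mathrm{succ}}$, so the expected count is exactly $1/P_{\mathrm{succ}}$, giving the $\Theta$ bound \eqref{eq:independent-postselection-repetitions}.

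For amplification I will invoke \cite{BrassardEtAl2002}. The good-subspace reflection is a reflection about all ancillas being zero, and the reflection about the start state is $\mathcal U\,R_{\mathrm{in}}\,\mathcal U^\dagger$, built from the assumed input reflection. Fixed-point or known-amplitude amplification then reaches constant accuracy in $\mathcal O(1/\sqrt{P_{\mathrm{succ}}})=\mathcal O(C_p/\norm{p_d(A)\ket\psi})$ uses of $\mathcal U$ and $\mathcal U^\dagger$. If the amplitude is not known in advance, I will use the exponential-search variant, which keeps the same order. Converting to $q_{\mathrm{et}}$ uses $C_p\le 3\norm{p_d}_{\infty,\overline\Disk}$, which follows from the triangle inequality together with $\abs{p_d(0)}\le\norm{p_d}_{\infty,\overline\Disk}$. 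Each use of $\mathcal U$ costs $Q_A(d)=\mathcal O(d)$ oracle calls by \eqref{eq:coherent-query-count}, and multiplying gives \eqref{eq:state-prep-cost}.

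For the final sentence on $g(A)$, I will bound the normalized-state error by the standard inequality
\[
\norm{\frac{u}{\norm u}-\frac{v}{\norm v}}\le\frac{2\norm{u-v}}{\norm v},
\]
with $u=p_d(A)\ket\psi$ and $v=g(A)\ket\psi$, and $\norm{u-v}\le\varepsilon_{\mathrm{app}}$ by \Cref{thm:lifted-approximation}. The repetition count transfers because $\norm{p_d(A)\ket\psi}\ge\norm{g(A)\ket\psi}-\varepsilon_{\mathrm{app}}$, which is at least $\tfrac12\norm{g(A)\ket\psi}$ when the error is small. The normalization transfers because $C_p\le C_g+\varepsilon_{\mathrm{app}}$ by \eqref{eq:Cp-controlled-by-Cg}.

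The main obstacle, though a mild one, is checking that the GQSP processing qubit's success block and the $\Pi_X$ compression combine into one flag projector with no extra normalization, so that the success amplitude is exactly $\norm{p_d(A)\ket\psi}/C_p$ rather than something smaller. I will handle this by noting that $\Pi_X$ is the image of a computational-basis projector under $\mathsf H_h$, as in \eqref{eq:PiX-reflection}, and that GQSP's success block is the $\ket0$ block of its processing qubit. Their tensor product is therefore a single all-zeros projector after conjugating by Hadamards.
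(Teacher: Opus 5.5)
Your proposal is correct and follows essentially the same route as the paper. You read the exact success block $p_d(A)/C_p$ off \Cref{thm:coherent-algorithm}, where the uniform angle state has unit $\ell_1$ weight and so causes no $1/N$ loss, then apply standard amplitude amplification and multiply by the $\mathcal O(d)$ per-attempt query count; your inequality $2\norm{u-v}/\norm{v}$ for the $g(A)$ transfer is a valid, slightly sharper variant of the paper's \eqref{eq:normalized-state-error}, which the paper states more generally for the implemented block $\widetilde G$ including block-encoding and circuit errors.
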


\begin{proposition}[Degree-optimal sequential signal-query scaling]
\label{prop:query-optimality}
Suppose the leading coefficient of $P_d$ is nonzero.  In a sequential signal-processing circuit composed of signal-independent unitaries interleaved with $q$ calls to $\mathsf Z$ or $\mathsf Z^\dagger$, every matrix element is a Laurent polynomial in the signal eigenvalue with exponents between $-q$ and $q$.  Exact realization of $P_d$ therefore requires $q\geq d$.  The $\mathcal{O}(d)$ signal-query depth in \Cref{thm:coherent-algorithm} is consequently optimal as a function of degree in this model.
\end{proposition}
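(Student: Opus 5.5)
The argument has two halves: a Laurent-polynomial degree count for any circuit of this type, and an appeal to \Cref{thm:coherent-algorithm} for the matching upper bound.

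\textbf{Setup.} Fix a circuit $\mathcal C=V_q\mathsf Z^{\epsilon_q}V_{q-1}\cdots V_1\mathsf Z^{\epsilon_1}V_0$ with $\epsilon_i\in\{\pm1\}$, where the $V_i$ are signal-independent unitaries, possibly acting on extra ancillas. Since $\mathsf Z$ is unitary, it has a spectral decomposition $\mathsf Z=\sum_\lambda \lambda\,\Pi_\lambda$ with $\abs{\lambda}=1$. Substitute this into every occurrence of $\mathsf Z^{\pm1}$, using $\mathsf Z^{-1}=\sum_\lambda\bar\lambda\,\Pi_\lambda=\sum_\lambda\lambda^{-1}\Pi_\lambda$. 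The circuit then becomes a sum of products $V_q\Pi_{\lambda_q}\cdots\Pi_{\lambda_1}V_0$ weighted by monomials $\lambda_q^{\epsilon_q}\cdots\lambda_1^{\epsilon_1}$.

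\textbf{Single-variable reduction.} To make ``the signal eigenvalue'' a single variable, treat $\mathsf Z$ as a formal signal $zU_0$: multiply a fixed unitary by a scalar phase $z\in\partial\Disk$, or equivalently restrict to a one-dimensional eigenspace, as in the usual QSP/GQSP model. Each matrix element of $\mathcal C$ is then $\sum_{k}z^{\epsilon_1+\cdots+\epsilon_q}(\cdots)$. More simply, expand the product by multilinearity: each factor $\mathsf Z^{\epsilon_i}$ contributes exactly $z^{\epsilon_i}$ times a $z$-independent operator. Every matrix element is therefore a Laurent polynomial in $z$ with exponents in $[-q,q]$, indeed in $\{-q,-q+2,\dots,q\}$.

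\textbf{Lower bound.} Exact realization means some matrix element, or a fixed compression of the circuit, equals $P_d(z)/C_p$ as a function on $\partial\Disk$. Two Laurent polynomials that agree on the unit circle have identical coefficients, since their Fourier coefficients coincide. The coefficient of $z^d$ in $P_d/C_p$ is $c_d/C_p\neq0$, which forces $d\leq q$. In the Weyl construction the signal eigenvalue of $\ee^{\ii\theta_j}W_j$ on the qubitized two-dimensional subspaces is $\ee^{\ii(\theta_j\pm\varphi)}$. It is therefore enough to run the argument with this eigenvalue as $z$: on the branch $\ket{j}$ with eigenphase $\varphi$, the success amplitude must reproduce $P_d(\ee^{\ii(\theta_j+\varphi)})$ and $P_d(\ee^{\ii(\theta_j-\varphi)})$ through \eqref{eq:boundary-lift}.

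\textbf{Upper bound and main obstacle.} \Cref{thm:coherent-algorithm} realizes $P_d$ with $\mathcal O(d)$ calls to $\mathsf Z$ or $\mathsf Z^\dagger$, each costing $\mathcal O(1)$ oracle calls, which matches the lower bound and gives optimality. The main subtlety is the precise meaning of ``exact realization'' on the qubitized subspaces. I would handle it by noting that GQSP's output on an eigenvector of $\mathsf Z$ with eigenvalue $z$ is exactly $P_d(z)/C_p$ for all $z$ on the circle. As $\varphi$ varies over the possible inputs, $z$ ranges over a full arc, or over the whole circle if we quantify over all signal unitaries. Identity of the Laurent polynomial then holds because it holds on infinitely many points.
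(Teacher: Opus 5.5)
This is the paper's argument: the same Laurent-degree count (signal-independent unitaries leave the exponents unchanged, each call to $\mathsf Z^{\pm1}$ shifts them by at most one, so a nonzero $z^d$ coefficient forces $q\geq d$). You also make explicit two steps the paper leaves implicit: coefficient uniqueness, and the need to quantify over infinitely many signal eigenvalues.

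One slip needs fixing. GQSP uses controlled signal calls, so each factor is $z^{\epsilon_i}\Pi_c+(\Id-\Pi_c)$ for a control projector $\Pi_c$, not exactly $z^{\epsilon_i}$ times a $z$-independent operator. Under your literal model the whole circuit would be a single monomial, and it could not realize the non-monomial $P_d$ you invoke for the upper bound. For the same reason your parity refinement $\{-q,-q+2,\dots,q\}$ is false in the GQSP setting. The exponents still lie in $[-q,q]$, however, so the bound $q\geq d$ is unaffected.
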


\begin{proof}
The claim follows by induction on the number of signal calls.  Signal-independent unitaries do not change the Laurent degree, while multiplication by $\mathsf Z$ or $\mathsf Z^\dagger$ changes each exponent by at most one.  A target with a nonzero $z^d$ coefficient cannot be represented when $q<d$.
\end{proof}

The $N$ equispaced angular branches are evaluated in superposition rather than in a classical loop.  A direct implementation uses $\mathcal{O}(d\log(d+1))$ angle-dependent controlled rotations over the complete signal-processing sequence; this count is separate from the internal gate cost of the supplied matrix oracle.

\begin{proposition}[Optimal QET constant and post-selection overhead]
\label{prop:angle-lcu-optimality}
For every nonzero polynomial $p_d$,
\begin{equation}
\sup_{\norm{B}\leq1}\norm{p_d(B)}
=\norm{p_d}_{\infty,\overline\Disk},
\label{eq:minimax-normalization}
\end{equation}
and its Weyl lift normalization satisfies
\begin{equation}
\norm{p_d}_{\infty,\overline\Disk}
\leq C_p
\leq3\norm{p_d}_{\infty,\overline\Disk}.
\label{eq:weyl-three-competitive}
\end{equation}
Likewise,
\begin{equation}
\norm{g}_{\infty,\overline\Disk}
\leq C_g
\leq3\norm{g}_{\infty,\overline\Disk}.
\label{eq:target-lift-three-competitive}
\end{equation}
Consequently, any post-selection-based QET whose success block is valid uniformly for every contraction must, in the worst case, use
\begin{equation}
\Omega\!\left(q_{\mathrm{et}}(p_d;A,\psi)\right)=
\Omega\!\left(
\frac{\norm{p_d}_{\infty,\overline\Disk}}{\norm{p_d(A)\ket{\psi}}}
\right)=
\Omega\!\left(
\frac{C_p}{\norm{p_d(A)\ket{\psi}}}
\right)
\label{eq:polynomial-qet-lower-bound}
\end{equation}
coherent success-amplitude amplification steps.

Suppose in addition that $g\in\DiskAlg$ and
\begin{equation}
\norm{F_g-P_d}_{\infty,\partial\Disk}
\leq\varepsilon_{\mathrm{app}}.
\label{eq:optimality-approximation-assumption}
\end{equation}
Then
\begin{equation}
\begin{gathered}
\abs{C_p-C_g}\leq\varepsilon_{\mathrm{app}},
\qquad
\abs{
\norm{p_d}_{\infty,\overline\Disk}
-\norm{g}_{\infty,\overline\Disk}
}
\leq\varepsilon_{\mathrm{app}},\\
\abs{
\norm{p_d(A)\ket{\psi}}
-\norm{g(A)\ket{\psi}}
}
\leq\varepsilon_{\mathrm{app}}.
\end{gathered}
\label{eq:optimality-approximation-transfer}
\end{equation}
If $\norm{g(A)\ket{\psi}}>\varepsilon_{\mathrm{app}}$, it follows that
\begin{equation}
\frac{\max\{\norm{g}_{\infty,\overline\Disk}
-\varepsilon_{\mathrm{app}},0\}}
{\norm{g(A)\ket{\psi}}+\varepsilon_{\mathrm{app}}}
\leq
q_{\mathrm{et}}(p_d;A,\psi)
\leq
\frac{\norm{g}_{\infty,\overline\Disk}
+\varepsilon_{\mathrm{app}}}
{\norm{g(A)\ket{\psi}}-\varepsilon_{\mathrm{app}}}.
\label{eq:optimality-target-ratio}
\end{equation}
Any success-block normalization valid uniformly for all contractions is at least
$\norm{g}_{\infty,\overline\Disk}$, because scalar contractions $B=x\Id$ with $x\in\overline\Disk$ realize every scalar value $g(x)$.  Hence, when only $\norm{A}\leq1$ is known and no further information about $A$ or $\ket{\psi}$ is available, every such method has worst-case post-selection complexity
\begin{equation}
\Omega\!\left(
\frac{\norm{g}_{\infty,\overline\Disk}}
{\norm{g(A)\ket{\psi}}}
\right)
=\Omega\!\left(q_{\mathrm{et}}(g;A,\psi)\right).
\label{eq:target-qet-lower-bound}
\end{equation}
Thus Weyl LCHM attains the optimal post-selection dependence
\begin{equation}
\Theta\!\left(\frac{C_p}{\norm{p_d(A)\ket{\psi}}}\right)
=\Theta\!\left(q_{\mathrm{et}}(p_d;A,\psi)\right),
\qquad\text{or asymptotically}\qquad
\Theta\!\left(\frac{C_g}{\norm{g(A)\ket{\psi}}}\right)
=\Theta\!\left(q_{\mathrm{et}}(g;A,\psi)\right).
\label{eq:optimal-qet-scaling-summary}
\end{equation}

\end{proposition}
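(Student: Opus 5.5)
The plan is to establish the scalar facts first and then derive the lower bounds from them.

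\emph{Minimax normalization.} For \eqref{eq:minimax-normalization}, the lower bound comes from scalar contractions $B=x\Id$ with $\abs{x}\leq1$, which give $\norm{p_d(B)}=\abs{p_d(x)}$. The upper bound is von Neumann's inequality \eqref{eq:von-neumann} for contractions.

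\emph{Three-competitive bounds.} For \eqref{eq:weyl-three-competitive}, the upper bound follows from the triangle inequality: $\norm{2p_d-p_d(0)}_{\infty,\partial\Disk}\leq2\norm{p_d}_{\infty,\overline\Disk}+\abs{p_d(0)}\leq3\norm{p_d}_{\infty,\overline\Disk}$. The lower bound follows by inverting the lift, $p_d=(F_p+F_p(0))/2$, and noting $F_p(0)=p_d(0)$. The maximum principle gives $\norm{F_p}_{\infty,\overline\Disk}=C_p$, so for $\abs{z}\leq1$ we get $\abs{p_d(z)}\leq(C_p+\abs{F_p(0)})/2\leq C_p$.

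The same argument proves \eqref{eq:target-lift-three-competitive} for $g\in\DiskAlg$. The maximum principle still applies there, since $F_g$ is continuous on $\overline\Disk$ and holomorphic inside.

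\emph{Lower bound \eqref{eq:polynomial-qet-lower-bound}.} Suppose a success block $p_d(B)/\alpha$ is valid for every contraction $B$. Then $\norm{p_d(B)}\leq\alpha$ for every such $B$, and \eqref{eq:minimax-normalization} forces $\alpha\geq\norm{p_d}_{\infty,\overline\Disk}$. The success amplitude on $\ket{\psi}$ is therefore at most $\norm{p_d(A)\ket{\psi}}/\norm{p_d}_{\infty,\overline\Disk}$. The standard optimality of amplitude amplification~\cite{BrassardEtAl2002}, which is a Grover-type lower bound on boosting an amplitude, then gives $\Omega(q_{\mathrm{et}})$ repetitions. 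Equivalence with the $C_p$ form follows from \eqref{eq:weyl-three-competitive}.

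\emph{Approximation transfer \eqref{eq:optimality-approximation-transfer}.} The first inequality is \eqref{eq:Cp-controlled-by-Cg}.

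For the second, write $g-p_d=\tfrac12(F_g-P_d)+\tfrac12(g(0)-p_d(0))$. Evaluating the lift difference at $0$ shows $g(0)-p_d(0)=F_g(0)-P_d(0)$. By the maximum principle its modulus is at most $\varepsilon_{\mathrm{app}}$, so $\norm{g-p_d}_{\infty,\overline\Disk}\leq\varepsilon_{\mathrm{app}}$. The reverse triangle inequality then gives the sup-norm statement.

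For the third, apply \eqref{eq:constant-one-transfer}, assuming $w(A)\leq\norm{A}\leq1$. This gives $\norm{(g(A)-p_d(A))\ket{\psi}}\leq\varepsilon_{\mathrm{app}}$.

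Inequality \eqref{eq:optimality-target-ratio} is then arithmetic on the definition of $q_{\mathrm{et}}$.

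\emph{Target lower bound \eqref{eq:target-qet-lower-bound}.} Repeat the scalar-contraction argument with $g$ in place of $p_d$. Any valid normalization is at least $\sup_{\abs{x}\leq1}\abs{g(x)}$, and the amplification lower bound then applies.

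\emph{Matching summary.} Combine \eqref{eq:coherent-qet-repetitions} with the two lower bounds and the constant-factor comparisons $C_p\asymp\norm{p_d}_{\infty,\overline\Disk}$ and $C_g\asymp\norm{g}_{\infty,\overline\Disk}$.

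The main obstacle is making "post-selection-based QET" precise enough to invoke an amplitude-amplification lower bound. I would model the method as a fixed circuit whose success block applied to $\ket{\psi}$ has amplitude at most $\norm{p_d(A)\ket{\psi}}/\alpha$. I would then cite the $\Omega(1/\text{amplitude})$ query lower bound for amplitude estimation and amplification, for example via reduction to unstructured search with the input-reflection oracle. This yields the bound only in the worst case over instances, which is what the statement claims.
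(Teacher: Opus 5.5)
Your proposal is correct and follows essentially the same route as the paper: von Neumann's inequality plus scalar contractions for \eqref{eq:minimax-normalization}, the triangle inequality and the lift for the factor-three bounds, scalar contractions forcing $\alpha\geq\norm{p_d}_{\infty,\overline\Disk}$ (resp.\ $\norm{g}_{\infty,\overline\Disk}$), and an amplitude-amplification lower bound. For that last step the paper cites Zalka's search lower bound, which is the appropriate reference rather than Brassard et al. Your explicit inversion $p_d=(F_p+F_p(0))/2$ with the maximum principle supplies the step $\norm{p_d}_{\infty,\overline\Disk}\leq C_p$ that the paper only asserts. Your direct computation of $\norm{g-p_d}_{\infty,\overline\Disk}\leq\varepsilon_{\mathrm{app}}$ is exactly the scalar case $A=x\Id$ of the constant-one transfer the paper invokes.
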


\begin{proof}
For every contraction $B$, von Neumann's inequality gives
$\norm{p_d(B)}\leq\norm{p_d}_{\infty,\overline\Disk}$.  Conversely, scalar contractions $B=z\Id$ attain the scalar supremum, proving \eqref{eq:minimax-normalization}.  Since $P_d=2p_d-p_d(0)$,
\begin{equation}
\norm{p_d}_{\infty,\overline\Disk}
\leq\norm{P_d}_{\infty,\partial\Disk}=C_p,
\qquad
C_p
\leq2\norm{p_d}_{\infty,\overline\Disk}+\abs{p_d(0)}
\leq3\norm{p_d}_{\infty,\overline\Disk}.
\label{eq:Cp-three-proof}
\end{equation}
The same calculation with $F_g=2g-g(0)$ proves
\eqref{eq:target-lift-three-competitive}.
A uniformly valid success block $p_d(B)/\alpha$ must be a contraction for every contraction $B$, so
$\alpha\geq\norm{p_d}_{\infty,\overline\Disk}\geq C_p/3$.  Standard success-amplitude amplification lower bounds then give \eqref{eq:polynomial-qet-lower-bound}~\cite{Zalka1999}, while \Cref{cor:output-state-preparation} gives the matching upper bound.

The first estimate in \eqref{eq:optimality-approximation-transfer} is \eqref{eq:Cp-controlled-by-Cg}.  Applying \eqref{eq:constant-one-transfer} to scalar contractions gives
$\norm{g-p_d}_{\infty,\overline\Disk}\leq\varepsilon_{\mathrm{app}}$, so the reverse triangle inequality gives the second estimate.  The third follows from \Cref{thm:lifted-approximation} and the reverse triangle inequality.  Combining the second and third estimates gives \eqref{eq:optimality-target-ratio}.

Finally, a success block for $g(B)/\alpha$ must be a contraction for every contraction $B$.  Taking $B=x\Id$ gives
$\alpha\geq\abs{g(x)}$ for every $x\in\overline\Disk$, hence
$\alpha\geq\norm{g}_{\infty,\overline\Disk}$.  The amplitude-amplification lower bound proves \eqref{eq:target-qet-lower-bound}.
\end{proof}

\section{Examples of matrix functions}
\label{sec:examples-extensions}

Unless stated otherwise, the Weyl LCHM results below assume a unit-normalized block encoding of a contraction $A$.  The circuit in \Cref{thm:coherent-algorithm} implements a degree-$d$ polynomial approximation using $\mathcal{O}(d)$ controlled matrix-oracle calls, $\mathcal{O}(d\log(d+1))$ angle-dependent rotations, and $a+\lceil\log_2(d+1)\rceil+2$ logical ancillas.  Its amplitude-amplified state-preparation cost is the per-attempt query count multiplied by
$C_p/\norm{p_d(A)\ket{\psi}}=\Theta(q_{\mathrm{et}}(p_d;A,\psi))$.  For disk-algebra targets approximated on $\partial\Disk$, the normalization obeys $C_p\leq C_g+\varepsilon_{\mathrm{app}}$ and is therefore controlled by the target function rather than by the degree itself.  Each corollary below states the relevant degree, lift normalization, input-oracle precision, and transformed-output norm explicitly.

\subsection{Matrix exponentials and driven ODEs via vanilla LCHM/LCHS}
\label{subsec:matrix-exponential}

Let $A=L+\ii H$ with $L\succeq0$ and let $t\geq0$.  The exponential specialization of vanilla LCHM is the LCHS identity
\begin{equation}
\ee^{-tA}
=\int_{\R}f_\beta(k)\,
\ee^{-\ii t(H+kL)}\,\dd k,
\qquad 0<\beta<1,
\label{eq:exp-plain-lchm}
\end{equation}
where $f_\beta$ is given by \eqref{eq:plain-kernel}.  Each integrand is unitary.  The proof is the same lower-half-plane residue argument as in \Cref{thm:plain-entire}, but the order-one growth of the exponential is controlled by accretivity: on $z=x+\ii y$ with $y\leq0$, the logarithmic norm of
$-\ii t(H+zL)$ is at most $ty\lambda_{\min}(L)\leq0$.  Hence the lower semicircle does not acquire the factor $\ee^{tR\norm{L}}$ that would appear in a norm-only estimate.

For long-time evolution, \eqref{eq:exp-plain-lchm} is the natural representation.  Every LCHS formula for $\ee^{-tA}$ is a specialization of vanilla LCHM in which the inner transforms are Hamiltonian evolutions.  Combining this identity with the optimal approximate-LCHS construction gives the following complexity for a time-independent accretive matrix~\cite{LowSomma2025}.

\begin{corollary}[Optimal LCHS complexity for dissipative evolution]
\label{cor:lchs-exponential}
Let $A=L+\ii H$ with $L\succeq0$, let $t>0$, and let $U_A$ be an exact $(\alpha_A,a,0)$ block encoding of $A$.  For every operator error $0<\varepsilon_{\mathrm{op}}<1$, optimal LCHS gives a block encoding of an operator $\widetilde E_t$ with normalization $\alpha_{\exp}=\Theta(1)$ such that
\begin{equation}
\norm{\widetilde E_t-\ee^{-tA}}
\leq\varepsilon_{\mathrm{op}}
\label{eq:optimal-lchs-block-error}
\end{equation}
using
\begin{equation}
\mathcal{O}\!\left(
\alpha_A t\log\!\frac1{\varepsilon_{\mathrm{op}}}
\right)
\label{eq:optimal-lchs-block-query}
\end{equation}
controlled queries to $U_A$ or $U_A^\dagger$, up to the gate cost of coefficient-state preparation and multiplexing.

For a normalized input $\ket{\psi}$ with $\ee^{-tA}\ket{\psi}\neq0$, preparing the normalized evolved state to error $\varepsilon_{\mathrm{st}}$ uses
\begin{equation}
\mathcal{O}\!\left(
\frac{\alpha_A t}{\norm{\ee^{-tA}\ket{\psi}}}
\log\!\frac1{\varepsilon_{\mathrm{st}}\norm{\ee^{-tA}\ket{\psi}}}
\right)
\label{eq:optimal-lchs-state-query}
\end{equation}
controlled matrix-oracle queries and
\begin{equation}
\mathcal{O}\!\left(\frac1{\norm{\ee^{-tA}\ket{\psi}}}\right)
\label{eq:optimal-lchs-input-query}
\end{equation}
queries to the state-preparation oracle for $\ket{\psi}$.  The block-encoding query complexity and the state-preparation-oracle dependence are optimal in their respective parameters.  In particular, vanilla LCHM inherits the optimal LCHS complexity for matrix exponentials and introduces no polynomial-degree normalization factor.
\end{corollary}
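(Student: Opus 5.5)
The corollary is mostly an assembly of known results, so I would prove it in three stages: block encoding, state preparation, and optimality.

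\textbf{Block encoding.} First invoke the LCHS identity \eqref{eq:exp-plain-lchm}, i.e.\ the exponential specialization of \Cref{thm:plain-entire}, to write $\ee^{-tA}$ as a kernel integral of the unitaries $\ee^{-\ii t(H+kL)}$. The $\ell_1$-norm $\int_{\R}\abs{f_\beta(k)}\,\dd k$ is a $t$-independent constant, which gives $\alpha_{\exp}=\Theta(1)$. Rather than truncating and discretizing this integral myself, I would cite the optimal approximate-LCHS construction of~\cite{LowSomma2025}. For time-independent accretive $A$ with an $(\alpha_A,a,0)$ block encoding, it produces a $\Theta(1)$-normalized block encoding of some $\widetilde E_t$ with $\norm{\widetilde E_t-\ee^{-tA}}\leq\varepsilon_{\mathrm{op}}$, using $\mathcal O(\alpha_At+\log(1/\varepsilon_{\mathrm{op}}))$ queries. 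That bound is at most $\mathcal O(\alpha_At\log(1/\varepsilon_{\mathrm{op}}))$ once $\alpha_A t$ is bounded below by a constant, which can be assumed without loss. Coefficient-state preparation and multiplexing are excluded, as the statement allows. The accretivity remark after \eqref{eq:exp-plain-lchm} guarantees that the identity, and hence the construction, applies without any $\ee^{tR\norm L}$ blow-up.

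\textbf{State preparation.} Apply the block encoding to $\ket{\psi}$. The success amplitude is $\norm{\widetilde E_t\ket{\psi}}/\alpha_{\exp}$. Choose $\varepsilon_{\mathrm{op}}=c\,\varepsilon_{\mathrm{st}}\norm{\ee^{-tA}\ket{\psi}}$ for a small constant $c$. Then $\norm{\widetilde E_t\ket\psi}\geq\tfrac12\norm{\ee^{-tA}\ket\psi}$, and the standard normalization perturbation bound, $\norm{u/\norm u-v/\norm v}\leq2\norm{u-v}/\norm v$, gives normalized-state error $\mathcal O(\varepsilon_{\mathrm{st}})$. Amplitude amplification~\cite{BrassardEtAl2002} then needs $\mathcal O(1/\norm{\ee^{-tA}\ket\psi})$ rounds. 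Each round uses one call to the state-preparation oracle (and its inverse) and one block-encoding call, and multiplying these costs yields \eqref{eq:optimal-lchs-state-query} and \eqref{eq:optimal-lchs-input-query}. To hit the target error exactly, use fixed-point amplification or a bounded number of extra rounds, so the final precision only costs a constant factor.

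\textbf{Optimality.} The $\Omega(\alpha_At)$ and $\Omega(\log(1/\varepsilon_{\mathrm{op}}))$ lower bounds follow from the Hamiltonian-simulation special case $L=0$, with no-fast-forwarding and precision lower bounds as cited in~\cite{LowSomma2025}. The state-preparation lower bound $\Omega(1/\norm{\ee^{-tA}\ket\psi})$ follows from the amplitude-amplification optimality bound~\cite{Zalka1999}, as in~\cite{AnLiuLin2023}; the argument mirrors \Cref{prop:angle-lcu-optimality}. The final sentence of the corollary needs no extra work: the inner transforms are unitary Hamiltonian evolutions, so no $s_K^d$ factor of the kind in \Cref{prop:plain-normalization} arises.

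I expect the main obstacle to be confirming that the cited optimal construction matches our block-encoding normalization conventions and yields the stated query form. Where its additive bound is stronger, I would simply note that it implies the displayed multiplicative bound.
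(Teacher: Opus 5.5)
Your proposal is correct and follows the paper's route. The paper gives no separate proof: it obtains the corollary by combining the LCHS identity \eqref{eq:exp-plain-lchm} with the optimal time-independent construction of~\cite{LowSomma2025}, and, as in the proof of \Cref{cor:lchs-inhomogeneous-ode}, prepares the state by taking operator error proportional to $\varepsilon_{\mathrm{st}}\norm{\ee^{-tA}\ket{\psi}}$ and then amplifying. A few minor caveats: passing from an additive bound to the displayed product form also needs the usual convention that $\log(1/\varepsilon_{\mathrm{op}})$ is bounded below, not just $\alpha_At$; the known Hamiltonian-simulation precision lower bound is $\Omega(\log(1/\varepsilon)/\log\log(1/\varepsilon))$ rather than $\Omega(\log(1/\varepsilon))$; and fixed-point amplification to deterministic error $\varepsilon_{\mathrm{st}}$ would cost an extra $\log(1/\varepsilon_{\mathrm{st}})$ factor, so the displayed count should be read as heralded preparation with constant success probability.
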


The output-amplitude factor in
\eqref{eq:optimal-lchs-state-query} can contain the known global decay
$\ee^{-\mu t}$.  Indeed, if $L\succeq\mu\Id$, then
\[
\ee^{-tA}=\ee^{-\mu t}\ee^{-t(A-\mu\Id)}.
\]
For normalized-state preparation, this scalar factor cancels, and the same
identity shift also sharpens the Duhamel LCU normalization below.  Consider
\begin{equation}
\frac{\dd}{\dd s}x(s)=-Ax(s)+b,
\qquad
x(0)=x_0,
\label{eq:inhomogeneous-ode}
\end{equation}
where $x_0=\xi_0\ket{x_0}$ and $b=\xi_b\ket{b}$, with normalized
$\ket{x_0},\ket{b}$ and $\xi_0,\xi_b\geq0$.  For any certified
$\mu\geq0$ such that $L\succeq\mu\Id$, set
\begin{equation}
A_\mu:=A-\mu\Id,
\qquad
\phi_\mu(t):=
\begin{cases}
(1-\ee^{-\mu t})/\mu,&\mu>0,\\
t,&\mu=0.
\end{cases}
\label{eq:inhomogeneous-shift-data}
\end{equation}
The Duhamel formula, followed by this scalar shift, gives
\begin{equation}
x(t)
=\ee^{-\mu t}\ee^{-tA_\mu}x_0
+\int_0^t\ee^{-\mu s}\ee^{-sA_\mu}b\,\dd s.
\label{eq:inhomogeneous-duhamel-shift}
\end{equation}
Because the Hermitian part of $A_\mu$ is $L-\mu\Id\succeq0$, all propagators
in \eqref{eq:inhomogeneous-duhamel-shift} have constant-normalization LCHS
blocks.  The positive outer LCU has total weight
\begin{equation}
\eta_\mu(t)
:=\ee^{-\mu t}\xi_0+\phi_\mu(t)\xi_b,
\qquad
q_{\mathrm{ode}}(t;\mu)
:=\frac{\eta_\mu(t)}{\norm{x(t)}}.
\label{eq:inhomogeneous-normalization}
\end{equation}
Contractivity of $\ee^{-sA_\mu}$ implies
$\norm{x(t)}\leq\eta_\mu(t)$, so $q_{\mathrm{ode}}(t;\mu)\geq1$.

\begin{corollary}[LCHS for driven ODE]
\label{cor:lchs-inhomogeneous-ode}
Let $A=L+\ii H$, let $L\succeq\mu\Id$ for a known $\mu\geq0$, and let
$U_A$ be an exact $(\alpha_A,a,0)$ block encoding of $A$.  Assume coherent
state-preparation oracles for $\ket{x_0}$ and $\ket{b}$, and suppose
$x(t)\neq0$.  A coherent positive quadrature of
\eqref{eq:inhomogeneous-duhamel-shift}, combined with optimal
time-independent LCHS, prepares $x(t)/\norm{x(t)}$ to state error
$0<\varepsilon_{\mathrm{st}}<1$ using
\begin{equation}
\mathcal{O}\!\left(
q_{\mathrm{ode}}(t;\mu)\,\alpha_A t
\log\!\frac{q_{\mathrm{ode}}(t;\mu)}
{\varepsilon_{\mathrm{st}}}
\right)
\label{eq:inhomogeneous-state-query}
\end{equation}
controlled queries to $U_A$ or $U_A^\dagger$, and
\begin{equation}
\mathcal{O}\!\left(q_{\mathrm{ode}}(t;\mu)\right)
\label{eq:inhomogeneous-input-query}
\end{equation}
queries to the two state-preparation oracles.  The displayed matrix-query
count is up to the gate cost of preparing the time-quadrature coefficients
and multiplexing the variable evolution times.
\end{corollary}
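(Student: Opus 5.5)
The plan is to reduce the driven ODE to a single positive linear combination of LCHS blocks, each applied to one of the two input states, and then amplify. The error budget is split between time quadrature, LCHS block error, and amplification.

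\textbf{Step 1 (exact representation).} I start from the shifted Duhamel identity \eqref{eq:inhomogeneous-duhamel-shift}. Substituting the normalized data gives
\[
x(t)=\ee^{-\mu t}\xi_0\,\ee^{-tA_\mu}\ket{x_0}+\xi_b\int_0^t\ee^{-\mu s}\ee^{-sA_\mu}\ket{b}\,\dd s .
\]
Both terms carry nonnegative scalar weights whose total is $\eta_\mu(t)$, since $\int_0^t\ee^{-\mu s}\dd s=\phi_\mu(t)$.

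\textbf{Step 2 (positive quadrature).} I discretize the $s$-integral with a positive rule $\sum_r w_r\ee^{-s_rA_\mu}$, with $w_r\ge0$ and $\sum_r w_r\le\phi_\mu(t)$. The natural choice is Gauss--Legendre on subintervals of length $\mathcal O(1/\alpha_A)$, with the weight $\ee^{-\mu s}$ folded into $w_r$. The integrand $s\mapsto\ee^{-sA_\mu}b$ extends analytically to a complex strip of width $\Theta(1/\alpha_A)$ with bounded norm, because $\norm{A_\mu}\le 2\alpha_A$. Hence $\mathcal O(\log(q_{\mathrm{ode}}/\varepsilon_{\mathrm{st}}))$ nodes per subinterval give quadrature error $\le\varepsilon_{\mathrm{st}}\norm{x(t)}/4$. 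The node count only affects the gate cost of coefficient preparation, which the statement excludes.

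\textbf{Step 3 (block encodings).} For each node, including $s=t$ for the homogeneous term, I invoke \Cref{cor:lchs-exponential} on $A_\mu$. Since the Hermitian part of $A_\mu$ is $L-\mu\Id\succeq0$, this gives an $\mathcal O(1)$-normalized block of $\ee^{-s_rA_\mu}$ with error $\varepsilon'$ and cost $\mathcal O(\alpha_A s_r\log(1/\varepsilon'))$. A block encoding of $A_\mu$ with normalization $\mathcal O(\alpha_A)$ is obtained from $U_A$ by LCU with the identity.

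\textbf{Step 4 (multiplexing).} I multiplex the evolution times with a controlled time register. The LCHS time can be implemented coherently by controlled Hamiltonian-simulation segments of doubling length, so a single attempt costs $\mathcal O(\alpha_A t\log(1/\varepsilon'))$. An outer selector qubit chooses between the branch $\ket{x_0}$ with amplitude $\propto\sqrt{\ee^{-\mu t}\xi_0}$ and the branch $\ket{b}$ with amplitude $\propto\sqrt{\phi_\mu(t)\xi_b}$. The result is an $\mathcal O(\eta_\mu(t))$-normalized state block whose success component is $x(t)$ plus errors.

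\textbf{Step 5 (amplification and error budget).} Amplitude amplification requires $\mathcal O(\eta_\mu/\norm{x(t)})=\mathcal O(q_{\mathrm{ode}})$ rounds. Each round uses one call to each state-preparation oracle, which gives \eqref{eq:inhomogeneous-input-query}. Multiplying rounds by the per-attempt cost gives \eqref{eq:inhomogeneous-state-query}. The unnormalized error is $\mathcal O(\eta_\mu\varepsilon')$ plus the quadrature error. A normalized-state error of $\varepsilon_{\mathrm{st}}$ requires unnormalized error $\mathcal O(\varepsilon_{\mathrm{st}}\norm{x(t)})$, so $\varepsilon'=\Theta(\varepsilon_{\mathrm{st}}/q_{\mathrm{ode}})$, which explains the $\log(q_{\mathrm{ode}}/\varepsilon_{\mathrm{st}})$ factor.

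\textbf{Main obstacle.} The delicate point is Step 4 combined with the total weight. The multiplexed-time LCHS must not multiply the cost by the number of quadrature nodes. Its normalization must also stay a constant times $\eta_\mu$, uniformly in $s_r$, so that the $\ee^{-\mu t}$ gain from the shift is actually captured. I would handle this by using the fact that the optimal LCHS normalization is bounded independently of the evolution time.
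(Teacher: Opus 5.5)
Your proposal is correct and follows the paper's argument. Both go through the shifted Duhamel representation as a positive LCU of weight $\eta_\mu(t)$, a positive time quadrature, LCHS blocks with error $\Theta(\varepsilon_{\mathrm{st}}/q_{\mathrm{ode}})$ whose per-attempt cost is set by the largest time $t$, and $\mathcal O(q_{\mathrm{ode}})$ rounds of amplitude amplification. The one difference is minor: the paper realizes $A\mapsto A_\mu$ as a known scalar phase via $H+k(L-\mu\Id)=H+kL-k\mu\Id$ inside each LCHS unitary, whereas your LCU with the identity at most doubles the block normalization (harmless since $\mu\le\norm{A}\le\alpha_A$); your Gauss--Legendre and time-multiplexing steps supply details the paper only asserts.
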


\begin{proof}
Equation~\eqref{eq:inhomogeneous-duhamel-shift} expresses the solution as an
LCU of contractive propagators with coefficient $\ell_1$-norm
$\eta_\mu(t)$.  A positive quadrature preserves this normalization up to its
chosen approximation error.  Moreover,
\begin{equation}
H+k(L-\mu\Id)=H+kL-k\mu\Id,
\label{eq:inhomogeneous-known-phase}
\end{equation}
so the identity shift is implemented by a known phase and does not increase
the asymptotic matrix-query cost.  The largest propagation time is $t$.
Applying \Cref{cor:lchs-exponential} with block error
$\delta=\Theta(\varepsilon_{\mathrm{st}}/q_{\mathrm{ode}}(t;\mu))$
therefore costs
$\mathcal O(\alpha_A t\log(q_{\mathrm{ode}}(t;\mu)/
\varepsilon_{\mathrm{st}}))$ queries in one attempt.  Its success amplitude
is $\Theta(1/q_{\mathrm{ode}}(t;\mu))$. 
\end{proof}

\paragraph{Comparison with prior LCHS bounds.}
For $\mu=0$ and a constant source,
\[
q_{\mathrm{ode}}(t;0)
=
\frac{\norm{x_0}+\norm{b}_{L^1(0,t)}}{\norm{x(t)}},
\]
which is exactly the inhomogeneous LCHS amplification factor in
Refs.~\cite{AnLiuLin2023,AnChildsLin2026Dynamics}.  Since
$\ee^{-\mu t}\leq1$ and $\phi_\mu(t)\leq t$, one has
\[
q_{\mathrm{ode}}(t;\mu)\leq q_{\mathrm{ode}}(t;0), 
\]
hence the present
factor is the identity-shifted refinement of the standard one.  In the homogeneous limit $b=0$ and
$\mu=0$, it reduces to $u_{\mathrm{lchs}}=u_r$ in Table~1 of
Ref.~\cite{LowSomma2025}.  The
$\mathcal O(q_{\mathrm{ode}})$ dependence of the initial-state and source
queries is the standard LCHS amplitude-amplification dependence, while the
linear matrix-query dependence and logarithmic precision dependence in
\eqref{eq:inhomogeneous-state-query} follow from the optimal
time-independent LCHS primitive of Ref.~\cite{LowSomma2025}.

The removal of the homogeneous exponential-decay factor is explicit in the
purely driven case.  Let $x_0=0$, $t>0$, $\mu>0$, and $b\neq0$, and define the
normalized driven propagator
\begin{equation}
\Phi_{\mu,t}(A)
:=\frac{1}{\phi_\mu(t)}\int_0^t\ee^{-sA}\,\dd s
=\frac{\mu}{1-\ee^{-\mu t}}
\int_0^t\ee^{-\mu s}\ee^{-sA_\mu}\,\dd s,
\qquad
\zeta_{\mathrm{drv}}
:=\norm{\Phi_{\mu,t}(A)\ket{b}}.
\label{eq:driven-average}
\end{equation}
This is a positive, unit-weight average of contractions and hence has a
constant-normalization LCHS--LCU block encoding.  To operator error
$0<\varepsilon_{\mathrm{op}}<1$, one coherent attempt uses
\begin{equation}
\mathcal O\!\left(
\alpha_A t\log\!\frac1{\varepsilon_{\mathrm{op}}}
\right)
\label{eq:driven-block-query}
\end{equation}
controlled matrix-oracle queries, up to coefficient-state preparation and
multiplexing gates.  Since
$x(t)=(\Id-\ee^{-tA})A^{-1}b$, strict accretivity implies that $A$ is
invertible.  Writing $y=A^{-1}b$, the reverse triangle inequality and
$\norm{\ee^{-tA}}\leq\ee^{-\mu t}$ give
\begin{align}
\norm{x(t)}
&=\norm{(\Id-\ee^{-tA})y}
\geq(1-\ee^{-\mu t})\norm{y},
\label{eq:driven-solution-lower-bound}\\
\zeta_{\mathrm{drv}}
&=\frac{\norm{x(t)}}{\phi_\mu(t)\norm{b}}
\geq\frac{\mu\norm{A^{-1}b}}{\norm{b}}
\geq\frac{\mu}{\norm{A}}.
\label{eq:driven-amplitude-lower-bound}
\end{align}
Consequently the matrix-query complexity for the normalized output-state of \eqref{eq:inhomogeneous-duhamel-shift} is
\begin{align}
Q_A
&=\mathcal{O}\!\left(
\kappa_\mu\alpha_A t
\log\!\frac{\kappa_\mu}{\varepsilon_{\mathrm{st}}}
\right),
\qquad
\kappa_\mu:=\frac{\norm{A}}{\mu}.
\label{eq:driven-state-query-uniform}
\end{align}
The corresponding number of queries to the $\ket{b}$-preparation oracle is
\begin{equation}
\mathcal O\!\left(\frac1{\zeta_{\mathrm{drv}}}\right)
=\mathcal O(\kappa_\mu).
\label{eq:driven-input-query}
\end{equation}
Unlike $1/\norm{\ee^{-tA}\ket{\psi}}$, the amplification factor
$1/\zeta_{\mathrm{drv}}\leq\kappa_\mu$ is uniform in $t$.  This does not
remove the unavoidable small-output dependence for arbitrary initial data:
destructive interference between the homogeneous and driven terms can still
make $q_{\mathrm{ode}}(t;\mu)$ large.

A direct Weyl treatment of $g_t(z)=\ee^{-tz}$ requires the lifted target
\begin{equation}
F_{g_t}(z)=2\ee^{-tz}-1.
\label{eq:exp-lift}
\end{equation}
For real $t\geq0$,
\begin{equation}
\norm{F_{g_t}}_{\infty,\partial\Disk}\geq\abs{F_{g_t}(-1)}=2\ee^{t}-1.
\label{eq:exp-lift-growth}
\end{equation}
Accordingly, vanilla LCHM/LCHS provides constant-normalization unitary primitives for dissipative evolutions, while Weyl LCHM is tailored to polynomial and controlled-lift targets.

The generalized LCHS method of Ni and Ying~\cite{NiYing2026} gives a two-dimensional plane-wave representation of the same exponential eigenvalue transform through Weyl calculus.  It also turns scalar Fourier modes into Hermitian linear combinations, but it is algebraically distinct from the one-angle Chebyshev projection \eqref{eq:weyl-master-box}: generalized LCHS constructs or optimizes a two-dimensional LCHS measure, whereas Weyl LCHM gives a closed-form $N>d$ projection for polynomials.

\subsection{Matrix powers and affine iterate methods}
\label{subsec:matrix-powers-iterations}

Matrix powers and finite iterates admit several quantum realizations.  A quantum linear-system algorithm can be applied to a block lower-bidiagonal history system whose solution stores the sequence
\begin{equation}
\ket{\psi},\quad A\ket{\psi},\quad\ldots,\quad A^m\ket{\psi}
\label{eq:power-history-sequence}
\end{equation}
\cite{Berry2014}.  When $A$ is available as a one-step propagator, a coherent time-marching construction can concatenate the step maps and control the accumulated success amplitude~\cite{FangLinTong2023}.  Quantum-simulation methods based on Schr\"odingerisation embed a discrete recurrence, or an associated continuous-time system, into a larger Hamiltonian dynamics~\cite{JinLiu2024,HuEtAl2025}.  Weyl LCHM gives a direct polynomial route: it block-encodes $A^m$ itself through an exact root-of-unity projection.

For $m\geq1$, choose
\begin{equation}
P_m(z)=2z^m,
\qquad
p_m(z)=z^m,
\qquad
C_{p_m}=2.
\label{eq:power-lift-data}
\end{equation}
The circuit coherently implements the exact discrete identity \eqref{eq:discrete-monomial}; all angle branches are evaluated in superposition.  Since the minimax uniform normalization of $z^m$ over contractions is one, the Weyl normalization is within a factor two of optimal and is independent of $m$.

\begin{corollary}[Exact matrix powers]
\label{cor:exact-matrix-powers}
Let $\norm{A}\leq1$ and let $U_A$ be an exact unit-normalized block encoding.  For every integer $m\geq1$, Weyl LCHM gives a $2$-normalized block encoding of $A^m$ using
\begin{equation}
\Theta(m)\ \text{sequential controlled matrix-oracle calls},
\qquad
\mathcal{O}\!\left(m\log(m+1)\right)\ \text{angle rotations},
\label{eq:power-resources}
\end{equation}
and $a+\lceil\log_2(m+1)\rceil+2$ logical ancillas.  The $\Theta(m)$ signal-oracle depth is degree-optimal.  On a normalized input $\ket{\psi}$,
\begin{equation}
P_{\mathrm{succ}}^{(m)}
=\frac{\norm{A^m\ket{\psi}}^2}{4}.
\label{eq:power-success}
\end{equation}
Thus the normalization-dependent postselection overhead is constant.  Independent postselection takes
$\Theta(1/\norm{A^m\ket{\psi}}^2)$ attempts, whereas amplitude amplification prepares the normalized power-method state using
\begin{equation}
\mathcal{O}\!\left(
\frac{m}{\norm{A^m\ket{\psi}}}
\right)
\label{eq:power-state-preparation}
\end{equation}
controlled matrix-oracle calls.  In particular, the total number of coherent repetitions is $\mathcal{O}(1)$ whenever $\norm{A^m\ket{\psi}}=\Omega(1)$.
\end{corollary}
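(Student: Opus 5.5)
The corollary is a direct specialization of \Cref{thm:coherent-algorithm}, \Cref{prop:query-optimality}, and \Cref{cor:output-state-preparation} to the data \eqref{eq:power-lift-data}. The plan is to check that these data satisfy the hypotheses, evaluate the normalization and the perturbation terms, and then read off the resource counts and success probability.

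\textbf{Step 1 (identify the polynomial).} Take $P_m(z)=2z^m$, so $c_m=2$ and $c_0=0$. Then $p_m=(P_m+P_m(0))/2=z^m$, and by \eqref{eq:normalization-factor}
\[
C_{p_m}=\norm{2z^m}_{\infty,\partial\Disk}=2 .
\]
Since the block encoding is exact, $\widetilde A=A$ and $\delta_A=0$. The $\delta_A$-dependent term in \eqref{eq:coherent-total-error} therefore vanishes. Taking $g=p_m$ gives $\varepsilon_{\mathrm{app}}=0$, and in the ideal circuit $\varepsilon_{\mathrm{circ}}=0$. \Cref{thm:coherent-algorithm} then yields a circuit whose success block is exactly $A^m/2$, i.e., a $2$-normalized block encoding of $A^m$. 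This is exactly the coherent implementation of \eqref{eq:discrete-monomial} with $N=2^b>m$.

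\textbf{Step 2 (resources and optimality).} \Cref{thm:coherent-algorithm} with $d=m$ gives $\mathcal O(m)$ controlled calls to $U_A$ or $U_A^\dagger$, $\mathcal O(m\log(m+1))$ angle rotations, and $a+\lceil\log_2(m+1)\rceil+2$ ancillas. For the matching lower bound, note that the leading coefficient $c_m=2$ is nonzero, so \Cref{prop:query-optimality} forces at least $m$ signal calls. Each signal call uses at least one oracle call, so the sequential oracle depth is $\Omega(m)$, and hence $\Theta(m)$. The one point needing care is this translation from signal calls to oracle calls. Each call to $\mathsf Z$ contains one application of $\mathcal V_j$, which by \eqref{eq:coherent-Vj} uses a constant, nonzero number of selected oracle calls. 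The two counts therefore agree up to constants.

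\textbf{Step 3 (success probability and state preparation).} Applying the success block $A^m/2$ to $\ket{\psi}$ gives, by \eqref{eq:success-prob},
\[
P_{\mathrm{succ}}^{(m)}=\frac{\norm{A^m\ket{\psi}}^2}{4} .
\]
Independent repetition then takes $\Theta(4/\norm{A^m\ket\psi}^2)=\Theta(1/\norm{A^m\ket\psi}^2)$ attempts. Amplitude amplification, as in \Cref{cor:output-state-preparation}, instead uses $\mathcal O(2/\norm{A^m\ket\psi})$ repetitions. Each repetition costs $\mathcal O(m)$ oracle calls, which gives \eqref{eq:power-state-preparation}. When $\norm{A^m\ket\psi}=\Omega(1)$, the repetition count is $\mathcal O(1)$. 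Finally, the phrase ``normalization-dependent overhead is constant'' is justified by the factor $C_{p_m}^2=4$, which is independent of $m$.

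I do not expect a genuine obstacle here; the only subtle step is the depth lower bound in Step 2.
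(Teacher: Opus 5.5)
Your proposal is correct and follows the paper's route. The corollary is a direct specialization of \Cref{thm:coherent-algorithm} to $P_m(z)=2z^m$, $p_m(z)=z^m$, $C_{p_m}=2$ and $\delta_A=0$, with the depth optimality read off from \Cref{prop:query-optimality} and the success probability and amplified cost from \Cref{cor:output-state-preparation}; your observation that each call to $\mathsf Z$ contains a constant, nonzero number of oracle calls is exactly what turns the signal-query bound into the stated $\Theta(m)$ oracle depth.
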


Consider the affine stationary iteration
\begin{equation}
x^{(k+1)}=Ax^{(k)}+b,
\qquad k\geq0.
\label{eq:affine-stationary-iteration}
\end{equation}
Its $k$th iterate is the exact matrix polynomial
\begin{equation}
x^{(k)}=A^kx^{(0)}+S_k(A)b,
\qquad
S_k(z)=\sum_{\ell=0}^{k-1}z^\ell.
\label{eq:affine-iterate-polynomial}
\end{equation}
For the geometric-sum term,
\begin{equation}
P_{S,k}(z)=2S_k(z)-1
=1+2\sum_{\ell=1}^{k-1}z^\ell,
\qquad
C_{S_k}\leq2k-1.
\label{eq:geometric-sum-lift}
\end{equation}
This directly realizes any prescribed finite iterate.

\begin{corollary}[Exact finite-step affine iteration]
\label{cor:affine-iteration}
Assume coherent state-preparation oracles for
$x^{(0)}=\xi_0\ket{x_0}$ and $b=\xi_b\ket{b}$, where $\ket{x_0}$ and $\ket{b}$ are normalized and $\xi_0,\xi_b\geq0$.  For every $k\geq1$, combine one Weyl LCHM block encoding of $A^k$ with one of $S_k(A)$ by a two-branch LCU.  The resulting circuit prepares the exact vector $x^{(k)}$ with safe normalization
\begin{equation}
\alpha_k=2\xi_0+(2k-1)\xi_b.
\label{eq:affine-iteration-normalization}
\end{equation}
A single attempt uses $\mathcal{O}(k)$ controlled calls to $U_A$ or $U_A^\dagger$, $\mathcal{O}(k\log(k+1))$ angle rotations, and
$a+\lceil\log_2(k+1)\rceil+3$ logical ancillas.  The Weyl work registers are reused by the two branches, and the outer two-branch LCU adds one selector qubit.  The attempt succeeds with probability
\begin{equation}
P_{\mathrm{succ}}^{\mathrm{iter}}
=\frac{\norm{x^{(k)}}^2}{\alpha_k^2}.
\label{eq:affine-iteration-success}
\end{equation}
Amplitude amplification therefore prepares $x^{(k)}/\norm{x^{(k)}}$ using
\begin{equation}
\mathcal{O}\!\left(
\frac{k\alpha_k}{\norm{x^{(k)}}}
\right)
\label{eq:affine-iteration-state-prep}
\end{equation}
controlled matrix-oracle calls.  The depth and normalization factors are separately optimal up to constants in the uniform sequential polynomial model.  Indeed,
\begin{equation}
\alpha_k
=2\xi_0+(2k-1)\xi_b
\leq2(\xi_0+k\xi_b).
\label{eq:affine-iteration-factor-two}
\end{equation}
For the scalar contractions $A=r\Id$ with $r\uparrow1$ and aligned input states $\ket{x_0}=\ket{b}$, the output norm approaches $\xi_0+k\xi_b$; hence any normalization valid uniformly for all $\norm{A}\leq1$ is at least this supremum.  The active polynomial has degree $k$ when $\xi_0>0$ and degree $k-1$ when $\xi_0=0<\xi_b$, giving the matching linear signal-query lower bound in the sequential model.  When $b=0$, the normalization reduces to the constant $2\xi_0$ from \Cref{cor:exact-matrix-powers}.
\end{corollary}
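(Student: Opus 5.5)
The plan is to treat the corollary as a two-branch LCU on top of \Cref{thm:coherent-algorithm}, followed by a separate optimality check.

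\textbf{Construction.} Start from the exact polynomial representation \eqref{eq:affine-iterate-polynomial},
\[
x^{(k)}=A^kx^{(0)}+S_k(A)b.
\]

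\begin{itemize}
\item For the first term, apply \Cref{cor:exact-matrix-powers} with $m=k$. This gives a $2$-normalized block encoding of $A^k$ with $\Theta(k)$ queries.
\item For the second term, apply \Cref{thm:coherent-algorithm} with $P_{S,k}=2S_k-1$ from \eqref{eq:geometric-sum-lift}. The normalization is $C_{S_k}=\norm{P_{S,k}}_{\infty,\partial\Disk}\leq 1+2(k-1)=2k-1$ by the triangle inequality, and the cost is $\mathcal O(k)$ queries.
\item If $C_{S_k}<2k-1$, pad that branch by an exact rescaling so that its block is $S_k(A)/(2k-1)$. This fixes a single safe normalization.
\item Combine the two branches with one selector qubit. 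Prepare the selector state with amplitudes $\sqrt{2\xi_0/\alpha_k}$ and $\sqrt{(2k-1)\xi_b/\alpha_k}$. Controlled on the selector, prepare $\ket{x_0}$ or $\ket{b}$ and then run the corresponding Weyl circuit. Unprepare the selector.
\item The success component is then
\[
\frac{2\xi_0}{\alpha_k}\cdot\frac{A^k\ket{x_0}}{2}+\frac{(2k-1)\xi_b}{\alpha_k}\cdot\frac{S_k(A)\ket{b}}{2k-1}=\frac{x^{(k)}}{\alpha_k}.
\]
\end{itemize}

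This gives the normalization \eqref{eq:affine-iteration-normalization} and the success probability \eqref{eq:affine-iteration-success}.

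\textbf{Resources.} Pad the angle register to $b=\lceil\log_2(k+1)\rceil$ so that both branches share $N>k$. Both branches also share the direct-standard-form qubit and the GQSP qubit. Counting the selector, the ancilla total is $a+\lceil\log_2(k+1)\rceil+3$. Queries and rotations add, giving $\mathcal O(k)$ and $\mathcal O(k\log(k+1))$. Amplitude amplification, as in \Cref{cor:output-state-preparation}, uses $\mathcal O(\alpha_k/\norm{x^{(k)}})$ repetitions of an $\mathcal O(k)$-query attempt, which gives \eqref{eq:affine-iteration-state-prep}.

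\textbf{Optimality.}
\begin{itemize}
\item The bound \eqref{eq:affine-iteration-factor-two} is immediate from $2k-1\leq 2k$.
\item For the lower bound on the normalization, take $A=r\Id$ with $\ket{x_0}=\ket{b}$. The output norm is $r^k\xi_0+\tfrac{1-r^k}{1-r}\xi_b$, which tends to $\xi_0+k\xi_b$ as $r\uparrow1$. Any uniformly valid success block must be a contraction for this $A$, so any valid normalization is at least $\xi_0+k\xi_b\geq\alpha_k/2$.
\item For the degree lower bound, view the active polynomial $\xi_0z^k+\xi_bS_k(z)$ as a function of the signal. Its degree is $k$ if $\xi_0>0$ and $k-1$ if $\xi_0=0<\xi_b$. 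The Laurent-degree argument of \Cref{prop:query-optimality} then gives the matching linear signal-query lower bound.
\item When $b=0$, set $\xi_b=0$; this reduces to \Cref{cor:exact-matrix-powers} with normalization $2\xi_0$.
\end{itemize}

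\textbf{Main obstacle.} The delicate step is the normalization bookkeeping when the two branches use different Weyl normalizations and share registers. Pad $S_k$ up to the safe value $2k-1$. Also check that reusing the work registers is legitimate: each branch returns its ancillas to the success projector's support in exactly the way \eqref{eq:coherent-average} requires, so the selector-controlled combination behaves as a standard LCU.
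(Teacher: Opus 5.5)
Your proposal is correct and follows the construction the corollary itself indicates (the paper gives no separate proof): a selector-weighted two-branch LCU combines the $2$-normalized Weyl block for $A^k$ with the $(2k-1)$-normalized block for $S_k(A)$, and optimality comes from the scalar-contraction normalization bound together with the Laurent-degree argument of \Cref{prop:query-optimality}. One small remark: $C_{S_k}=2k-1$ holds with equality because $P_{S,k}(1)=2k-1$, so your padding step is never actually needed.
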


If only the fixed point is required and $\norm{A}\leq\rho<1$, then
$x^{(\infty)}=(\Id-A)^{-1}b$.  The resolvent construction below replaces explicit dependence on the iteration count by a degree determined by the contraction gap and target accuracy.

\subsection{Resolvents, higher-order resolvents, and rational approximation}
\label{subsec:resolvents}

Assume throughout this subsection that
\begin{equation}
\abs{\lambda}>1,
\label{eq:resolvent-parameters}
\end{equation}
so the pole lies outside the closed unit disk.

\paragraph{First-order resolvent.}
For $g(A)=(\lambda\Id-A)^{-1}$, take
\begin{equation}
p_{d,1}(z)
=\frac1\lambda\sum_{m=0}^d\left(\frac z\lambda\right)^m,
\qquad
P_{d,1}(z)
=\frac1\lambda\left[
1+2\sum_{m=1}^d\left(\frac z\lambda\right)^m
\right].
\label{eq:resolvent-polynomials}
\end{equation}
The remainder factorizes as
\begin{equation}
(\lambda\Id-A)^{-1}-p_{d,1}(A)
=\frac1\lambda\left(\frac A\lambda\right)^{d+1}
\left(\Id-\frac A\lambda\right)^{-1},
\label{eq:resolvent-remainder-factorization}
\end{equation}
and therefore
\begin{equation}
\norm{(\lambda\Id-A)^{-1}-p_{d,1}(A)}
\leq\frac{\abs{\lambda}^{-(d+1)}}{\abs{\lambda}-1}.
\label{eq:resolvent-error}
\end{equation}
A sufficient degree is
\begin{equation}
d_{\mathrm{res},1}(\varepsilon)
=
\max\!\left\{1,
\left\lceil
\frac{
\max\!\left\{0,
\log\!\left(\dfrac1{\varepsilon(\abs{\lambda}-1)}\right)
\right\}
}{\log\abs{\lambda}}
\right\rceil
\right\}-1.
\label{eq:resolvent-degree}
\end{equation}
By \Cref{thm:coherent-algorithm}, this degree requires $\mathcal{O}(d)$ controlled matrix-oracle calls, $\mathcal{O}(d\log(d+1))$ angle rotations, and $a+\lceil\log_2(d+1)\rceil+2$ logical ancillas.

The normalization and perturbation moment satisfy
\begin{align}
C_{p_{d,1}}
&=\frac1{\abs{\lambda}}
\left[
1+\frac{2}{\abs{\lambda}-1}
\left(1-\abs{\lambda}^{-d}\right)
\right]
\label{eq:resolvent-normalization}\\
&\leq\frac{\abs{\lambda}+1}
{\abs{\lambda}(\abs{\lambda}-1)},
\label{eq:resolvent-lift-norm}\\
\frac12\sum_{m=1}^d m\abs{c_m}
&=\frac1{(\abs{\lambda}-1)^2}
\left[
1-(d+1)\abs{\lambda}^{-d}
+d\abs{\lambda}^{-(d+1)}
\right]
\label{eq:resolvent-coefficient-moment}\\
&\leq\frac1{(\abs{\lambda}-1)^2}.
\end{align}
Hence an inexact block encoding contributes at most $\delta_A/(\abs{\lambda}-1)^2$ to the unnormalized operator error.

\begin{corollary}[First-order resolvent complexity]
\label{cor:first-resolvent-complexity}
Let $d=d_{\mathrm{res},1}(\varepsilon)$ from \eqref{eq:resolvent-degree}.  The coherent Weyl LCHM circuit gives a $C_{p_{d,1}}$-normalized block encoding of $(\lambda\Id-A)^{-1}$ with unnormalized error at most
\begin{equation}
\varepsilon
+\frac{\delta_A}{(\abs{\lambda}-1)^2}
+C_{p_{d,1}}\varepsilon_{\mathrm{circ}}.
\label{eq:first-resolvent-total-error}
\end{equation}
It uses $\mathcal{O}(d)$ controlled matrix-oracle calls,
$\mathcal{O}(d\log(d+1))$ angle rotations, and
$a+\lceil\log_2(d+1)\rceil+2$ logical ancillas, with
\begin{equation}
C_{p_{d,1}}\leq
\frac{\abs{\lambda}+1}{\abs{\lambda}(\abs{\lambda}-1)}.
\label{eq:first-resolvent-cor-normalization}
\end{equation}
For a normalized input $\ket{\psi}$, amplitude amplification prepares the polynomial output state using
\begin{equation}
\mathcal{O}\!\left(
\frac{dC_{p_{d,1}}}{\norm{p_{d,1}(A)\ket{\psi}}}
\right)
\label{eq:first-resolvent-state-prep}
\end{equation}
controlled matrix-oracle calls.  For fixed $\abs{\lambda}>1$, the degree and block-construction query complexity are $\mathcal{O}(\!\log(1/\varepsilon))$.
\end{corollary}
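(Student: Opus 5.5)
The plan is to instantiate \Cref{thm:coherent-algorithm} with $g(z)=(\lambda-z)^{-1}$ and $P_d=P_{d,1}$ from \eqref{eq:resolvent-polynomials}. First I check the hypotheses. Since $\abs{\lambda}>1$, the target $g$ is holomorphic on a neighborhood of $\overline\Disk$, so $g\in\DiskAlg$. Its Taylor series is $\sum_{m\geq0}\lambda^{-m-1}z^m$. With $p_{d,1}$ the Taylor truncation, we have $P_{d,1}=2p_{d,1}-p_{d,1}(0)$, which is consistent with $p_{d,1}=(P_{d,1}+P_{d,1}(0))/2$ as \Cref{thm:coherent-algorithm} requires. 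The polynomial $p_{d,1}$ is nonzero, since its constant term is $1/\lambda$.

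Rather than the lifted bound, I will control the approximation error directly on $A$. The remainder factorization \eqref{eq:resolvent-remainder-factorization} combined with $\norm{A}\leq1$ gives \eqref{eq:resolvent-error}. Alternatively, I can bound $F_g-P_{d,1}=2\lambda^{-1}\sum_{m>d}(z/\lambda)^m$ on $\partial\Disk$. This yields $2\abs{\lambda}^{-(d+1)}/(\abs{\lambda}-1)$, which also suffices for the $\varepsilon_{\mathrm{app}}$ slot up to a factor of two. To get the exact constant, I will substitute the direct bound \eqref{eq:resolvent-error} for the approximation term in the error decomposition proving \eqref{eq:coherent-total-error}. That decomposition is a triangle inequality: $\norm{g(A)-p(A)}$, plus $\norm{p(A)-p(\widetilde A)}$, plus the circuit error.

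Next I verify that $d=d_{\mathrm{res},1}(\varepsilon)$ makes $\abs{\lambda}^{-(d+1)}/(\abs{\lambda}-1)\leq\varepsilon$. Write $D=d+1$, so that $D=\max\{1,\lceil\cdot\rceil\}$. Then $D\log\abs{\lambda}\geq\log(1/(\varepsilon(\abs{\lambda}-1)))$, and exponentiating gives the claim. In the degenerate case where the logarithm is nonpositive, $D=1$ still works because $\abs{\lambda}^{-1}/(\abs{\lambda}-1)\leq1/(\abs{\lambda}-1)\leq\varepsilon$.

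For the perturbation term, the telescoping estimate $\norm{\widetilde A^m-A^m}\leq m\delta_A$ gives a contribution of $\tfrac12\sum m\abs{c_m}\delta_A$ with $c_m=2\lambda^{-m-1}$. By \eqref{eq:resolvent-coefficient-moment}, this is at most $\delta_A/(\abs{\lambda}-1)^2$. Verifying that moment identity is just the differentiated geometric sum $\sum_{m\le d} m x^m$ with $x=\abs{\lambda}^{-1}$, followed by dropping the nonpositive correction $-(d+1)x^{d}+dx^{d+1}=-x^{d}(d+1-dx)\leq0$.

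For the normalization, I will evaluate $C_{p_{d,1}}=\sup_{\abs z=1}\abs{P_{d,1}(z)}$. The triangle inequality gives $\abs{P_{d,1}(z)}\leq\abs{\lambda}^{-1}(1+2\sum_{m=1}^d\abs{\lambda}^{-m})$. Equality holds at $z=\lambda/\abs{\lambda}$, where every term becomes a positive real. This yields \eqref{eq:resolvent-normalization}, and bounding $1-\abs{\lambda}^{-d}\leq1$ gives \eqref{eq:first-resolvent-cor-normalization}.

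The resource counts and the amplitude-amplification cost \eqref{eq:first-resolvent-state-prep} are then read off from \Cref{thm:coherent-algorithm} and \Cref{cor:output-state-preparation}. Finally, for fixed $\lambda$ the formula for $d_{\mathrm{res},1}$ is $\mathcal O(\log(1/\varepsilon))$.

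The main obstacle is making sure the $\varepsilon$ term matches without the factor $2$ from lifting. This requires stating that the error bound \eqref{eq:coherent-total-error} holds with $\varepsilon_{\mathrm{app}}$ replaced by any valid bound on $\norm{g(A)-p_d(A)}$. For a contraction that bound is \eqref{eq:resolvent-error}, either via the factorization or via von Neumann's inequality as in \Cref{rem:direct-approximation}.
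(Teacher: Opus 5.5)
Your proposal is correct and follows the paper's route: instantiate \Cref{thm:coherent-algorithm} with $P_{d,1}$, using the remainder bound \eqref{eq:resolvent-error} with the degree \eqref{eq:resolvent-degree}, the moment identity \eqref{eq:resolvent-coefficient-moment}, and the normalization \eqref{eq:resolvent-normalization}--\eqref{eq:resolvent-lift-norm}. Your point about the $\varepsilon$ term is also the paper's implicit step: it must come from inserting the direct contraction bound \eqref{eq:resolvent-error} into the triangle-inequality decomposition, not from the lifted $\varepsilon_{\mathrm{app}}$, which would cost a factor of two.
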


Using the generating function of $T_m$, the exact angular formula is
\begin{equation}
(\lambda\Id-A)^{-1}
=\frac1{\pi\lambda}\int_0^\pi
\left(1-\frac{\ee^{2\ii\theta}}{\lambda^2}\right)
\left(
\Id-\frac{2\ee^{\ii\theta}}\lambda X_\theta
+\frac{\ee^{2\ii\theta}}{\lambda^2}\Id
\right)^{-1}
\,\dd\theta.
\label{eq:resolvent-exact-angle}
\end{equation}

\paragraph{Higher-order resolvents.}
For an integer $k\geq1$, define
\begin{equation}
R_k(\lambda;A)=(\lambda\Id-A)^{-k}.
\label{eq:higher-resolvent-definition}
\end{equation}
The negative-binomial expansion is
\begin{equation}
R_k(\lambda;A)
=\frac1{\lambda^k}\sum_{m=0}^{\infty}
\binom{m+k-1}{k-1}\frac{A^m}{\lambda^m}.
\label{eq:higher-resolvent-series}
\end{equation}
Accordingly, define
\begin{align}
p_{d,k}(z)
&=\frac1{\lambda^k}\sum_{m=0}^d
\binom{m+k-1}{k-1}\left(\frac z\lambda\right)^m,
\label{eq:higher-resolvent-polynomial}\\
P_{d,k}(z)
&=\frac1{\lambda^k}\left[
1+2\sum_{m=1}^d
\binom{m+k-1}{k-1}\left(\frac z\lambda\right)^m
\right].
\label{eq:higher-resolvent-lift-polynomial}
\end{align}
For $n=d+1$,
\begin{equation}
\norm{R_k(\lambda;A)-p_{d,k}(A)}
\leq\frac1{\abs{\lambda}^k}
\sum_{m=n}^{\infty}
\binom{m+k-1}{k-1}\abs{\lambda}^{-m}.
\label{eq:higher-resolvent-tail-sum}
\end{equation}

\begin{lemma}[Exact negative-binomial tail]
\label{lem:negative-binomial-tail}
For $0<x<1$, $n\geq1$, and $k\geq1$,
\begin{equation}
\sum_{m=n}^{\infty}\binom{m+k-1}{k-1}x^m
=
\frac{x^n}{(1-x)^k}
\sum_{j=0}^{k-1}
\binom{n+j-1}{j}(1-x)^j.
\label{eq:negative-binomial-scalar}
\end{equation}
\end{lemma}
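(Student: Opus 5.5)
The cleanest route is a probabilistic reading of the negative-binomial tail, with induction on $k$ as a fallback. Set $y=1-x\in(0,1)$. Multiplying \eqref{eq:negative-binomial-scalar} by $y^k$, the claim becomes
\[
\sum_{m=n}^\infty\binom{m+k-1}{k-1}y^kx^m=\sum_{j=0}^{k-1}\binom{n+j-1}{j}y^jx^n .
\]

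The left side is the probability that a negative-binomial variable $M$ is at least $n$. Here $M$ counts the failures (probability $x$ each) observed before the $k$th success (probability $y$ each) in i.i.d.\ Bernoulli trials. The term $\binom{m+k-1}{k-1}y^kx^m$ is exactly $\Pr(M=m)$.

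The event $\{M\geq n\}$ is the event that the $n$th failure occurs before the $k$th success. Equivalently, the $n$th failure occurs at a trial preceded by exactly $j$ successes for some $j\in\{0,\ldots,k-1\}$. For fixed $j$, that trial has index $n+j$, the previous $n-1$ failures and $j$ successes can be arranged in $\binom{n+j-1}{j}$ ways, and the probability is $\binom{n+j-1}{j}x^ny^j$. These events are disjoint for distinct $j$, so summing over $j$ gives the right side. All series converge absolutely because $0<x<1$, so the identity holds.

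For a purely algebraic proof, I would instead induct on $k$. Denote the left side of \eqref{eq:negative-binomial-scalar} by $S_k$.

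\begin{itemize}
\item[] \emph{Base case.} For $k=1$, $S_1=x^n/(1-x)$, which matches the right side.
\item[] \emph{Recursion.} Use the generating identity $\sum_{m\ge0}\binom{m+k-1}{k-1}x^m=(1-x)^{-k}$ and Pascal's rule $\binom{m+k-1}{k-1}=\binom{m+k-2}{k-2}+\binom{m+k-2}{k-1}$. Reindexing the second piece gives $S_k=S_{k-1}+xS_k+\binom{n+k-2}{k-1}x^n$, where the boundary term appears from the $m=n-1$ index shift. Hence
\[
(1-x)S_k=S_{k-1}+\binom{n+k-2}{k-1}x^n .
\]
\item[] \emph{Inductive step.} Substitute the inductive formula for $S_{k-1}$ and divide by $1-x$. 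This gives the stated sum, with the new $j=k-1$ term being $\binom{n+k-2}{k-1}x^n(1-x)^{k-1}/(1-x)^k$.
\end{itemize}

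The only delicate point is bookkeeping the boundary term in the reindexing. The probabilistic argument avoids this entirely, so I would present that one.
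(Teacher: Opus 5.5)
Your proposal is correct, and both of your arguments take a different route from the paper. The paper writes the tail as $\frac{1}{(k-1)!}\frac{\mathrm{d}^{k-1}}{\mathrm{d}x^{k-1}}\bigl(x^{n+k-1}/(1-x)\bigr)$ and applies Leibniz' rule. Carried out literally, this gives $\sum_{i=0}^{k-1}\binom{n+k-1}{i}x^{n+k-1-i}(1-x)^{i-k}$. In your language, that is $y^{-k}$ times the probability of at most $k-1$ successes in the first $n+k-1$ trials. The paper's phrase ``collecting powers of $1-x$'' therefore hides a further regrouping: one must expand $x^{k-1-i}=(1-y)^{k-1-i}$ and resum to reach the stated form.

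Your decomposition by the number $j$ of successes preceding the $n$th failure lands on the stated right-hand side directly, and each summand has a transparent meaning. Your Pascal-rule recursion $(1-x)S_k=S_{k-1}+\binom{n+k-2}{k-1}x^n$ is entirely elementary and handles the boundary term correctly. It adds one term of the $j$-sum per induction step. Convergence of $S_k$ for $0<x<1$ is what justifies moving $xS_k$ across, and your generating identity supplies this.

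The paper's approach is shorter to state and fits its generating-function viewpoint. Yours is more self-contained and easier to verify line by line.
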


\begin{proof}
The series can be written as
\begin{equation}
\sum_{m=n}^{\infty}\binom{m+k-1}{k-1}x^m
=
\frac1{(k-1)!}
\frac{\dd^{k-1}}{\dd x^{k-1}}
\left(\frac{x^{n+k-1}}{1-x}\right).
\label{eq:negative-binomial-derivative}
\end{equation}
Leibniz' rule followed by collecting powers of $1-x$ gives \eqref{eq:negative-binomial-scalar}.
\end{proof}

Taking $x=\abs{\lambda}^{-1}$ and $n=d+1$ gives the exact finite form
\begin{equation}
\begin{aligned}
&\frac1{\abs{\lambda}^k}
\sum_{m=d+1}^{\infty}
\binom{m+k-1}{k-1}\abs{\lambda}^{-m}\\
&\qquad=
\frac{\abs{\lambda}^{-(d+1)}}{(\abs{\lambda}-1)^k}
\sum_{j=0}^{k-1}
\binom{d+j}{j}
\left(\frac{\abs{\lambda}-1}{\abs{\lambda}}\right)^j.
\end{aligned}
\label{eq:negative-binomial-tail}
\end{equation}
Thus the smallest degree certified by the exact tail is found by a monotone one-dimensional search.

A closed-form certificate follows from Cauchy's estimate:
\begin{equation}
\norm{R_k(\lambda;A)-p_{d,k}(A)}
\leq
\frac{(\abs{\lambda}+1)^{k+1}}
{\abs{\lambda}^k(\abs{\lambda}-1)^{k+1}}
\left(\frac{\abs{\lambda}+1}{2\abs{\lambda}}\right)^d.
\label{eq:higher-resolvent-cauchy-tail}
\end{equation}
Therefore
\begin{equation}
d_{\mathrm{res},k}^{\mathrm{cf}}(\varepsilon)
=
\max\!\left\{0,
\left\lceil
\frac{
\log\!\left(
\dfrac{(\abs{\lambda}+1)^{k+1}}
{\varepsilon\abs{\lambda}^k(\abs{\lambda}-1)^{k+1}}
\right)
}{
\log\!\left(\dfrac{2\abs{\lambda}}{\abs{\lambda}+1}\right)
}
\right\rceil
\right\}
\label{eq:higher-resolvent-closed-degree}
\end{equation}
is sufficient.  By \Cref{thm:coherent-algorithm}, the resulting circuit uses $\mathcal{O}(d)$ controlled matrix-oracle calls and $\mathcal{O}(d\log(d+1))$ angle rotations.  It uses $a+\lceil\log_2(d+1)\rceil+2$ logical ancillas, with either the closed-form degree or the smaller degree obtained from \eqref{eq:negative-binomial-tail}.

The normalization and perturbation sensitivity are
\begin{align}
C_{p_{d,k}}
&=\frac1{\abs{\lambda}^k}
\left[
1+2\sum_{m=1}^d
\binom{m+k-1}{k-1}\abs{\lambda}^{-m}
\right]
\label{eq:higher-resolvent-normalization-finite}\\
&\leq\frac{2}{(\abs{\lambda}-1)^k}
-\frac1{\abs{\lambda}^k},
\label{eq:higher-resolvent-normalization}\\
\frac12\sum_{m=1}^d m\abs{c_m}
&\leq\frac{k}{(\abs{\lambda}-1)^{k+1}}.
\label{eq:higher-resolvent-coefficient-moment}
\end{align}
Consequently, assigning at most $\varepsilon_A$ unnormalized error to the block encoding is guaranteed by
\begin{equation}
\delta_A\leq
\frac{\varepsilon_A(\abs{\lambda}-1)^{k+1}}{k}.
\label{eq:higher-resolvent-block-precision}
\end{equation}

\begin{corollary}[Higher-order resolvent complexity]
\label{cor:higher-resolvent-complexity}
Choose $d$ either by the exact tail \eqref{eq:negative-binomial-tail} or by the closed certificate
$d=d_{\mathrm{res},k}^{\mathrm{cf}}(\varepsilon)$ in \eqref{eq:higher-resolvent-closed-degree}.  Then Weyl LCHM gives a $C_{p_{d,k}}$-normalized block encoding of $R_k(\lambda;A)$ with unnormalized error at most
\begin{equation}
\varepsilon
+\frac{k\delta_A}{(\abs{\lambda}-1)^{k+1}}
+C_{p_{d,k}}\varepsilon_{\mathrm{circ}},
\label{eq:higher-resolvent-total-error}
\end{equation}
using $\mathcal{O}(d)$ controlled matrix-oracle calls and $\mathcal{O}(d\log(d+1))$ angle rotations.  The normalization obeys
\begin{equation}
C_{p_{d,k}}\leq
\frac{2}{(\abs{\lambda}-1)^k}-\frac1{\abs{\lambda}^k},
\label{eq:higher-resolvent-cor-normalization}
\end{equation}
and the normalized output-state query cost is
\begin{equation}
\mathcal{O}\!\left(
\frac{dC_{p_{d,k}}}{\norm{p_{d,k}(A)\ket{\psi}}}
\right).
\label{eq:higher-resolvent-state-prep}
\end{equation}
For fixed $k$ and fixed $\abs{\lambda}>1$, both $C_{p_{d,k}}$ and the perturbation sensitivity are independent of the approximation degree, while $d=\mathcal{O}(\log(1/\varepsilon))$.
\end{corollary}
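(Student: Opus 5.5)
The plan is to reduce the corollary to an application of \Cref{thm:coherent-algorithm} with the explicit polynomial data $P_{d,k}$ from \eqref{eq:higher-resolvent-lift-polynomial}. The approximation error, the normalization, and the perturbation moment are then each bounded separately, and the three pieces are inserted into \eqref{eq:unnormalized-error-budget}.

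\emph{Resources and normalization.} By construction $P_{d,k}=2p_{d,k}-p_{d,k}(0)$, so $p_{d,k}=(P_{d,k}+P_{d,k}(0))/2$. \Cref{thm:coherent-algorithm} then yields directly an ideal $C_{p_{d,k}}$-normalized success block for $p_{d,k}(\widetilde A)$, together with the stated query, rotation and ancilla counts and the state-preparation cost \eqref{eq:higher-resolvent-state-prep} via \Cref{cor:output-state-preparation}. For $C_{p_{d,k}}$, the triangle inequality on $\partial\Disk$ gives the finite sum \eqref{eq:higher-resolvent-normalization-finite}, and this bound is attained at $z=\lambda/\abs{\lambda}$ because all coefficients then become positive. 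Extending the sum to infinity and using $\sum_{m\ge0}\binom{m+k-1}{k-1}x^m=(1-x)^{-k}$ with $x=\abs\lambda^{-1}$ gives $\abs\lambda^{-k}[2\abs\lambda^k/(\abs\lambda-1)^k-1]$, which is \eqref{eq:higher-resolvent-normalization}.

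\emph{Perturbation moment.} Here $c_m=2\lambda^{-k-m}\binom{m+k-1}{k-1}$, so $\tfrac12\sum m\abs{c_m}\le \abs\lambda^{-k}\sum_{m\ge1} m\binom{m+k-1}{k-1}x^m$. I will evaluate this as $x\frac{\dd}{\dd x}(1-x)^{-k}=kx(1-x)^{-k-1}$. Multiplying by $\abs\lambda^{-k}$ gives $k\abs\lambda^{-k-1}\abs\lambda^{k+1}/(\abs\lambda-1)^{k+1}=k/(\abs\lambda-1)^{k+1}$, which is \eqref{eq:higher-resolvent-coefficient-moment}. Combined with the telescoping perturbation estimate used in \Cref{thm:coherent-algorithm}, this yields the middle term of \eqref{eq:higher-resolvent-total-error}.

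\emph{Approximation error.} This is the one step that does not go through \Cref{thm:lifted-approximation}. Since $\norm{A}\le1<\abs\lambda$, the series \eqref{eq:higher-resolvent-series} converges in norm, and the remainder is bounded termwise by \eqref{eq:higher-resolvent-tail-sum}. This avoids any need to control $F_g$ on the boundary. For the exact-tail choice, \Cref{lem:negative-binomial-tail} gives \eqref{eq:negative-binomial-tail}, and $d$ is chosen so that this quantity is at most $\varepsilon$.

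The main obstacle is the closed-form certificate \eqref{eq:higher-resolvent-cauchy-tail}. I would apply Cauchy's estimate to $h(z)=(\lambda-z)^{-k}$ on the circle of radius $\rho=(\abs\lambda+1)/2$, where $\abs{h}\le(2/(\abs\lambda-1))^k$. This gives $\abs{a_m}\le(2/(\abs\lambda-1))^k\rho^{-m}$. Summing the tail over $m\ge d+1$ with $\norm{A^m}\le1$ produces a geometric series with ratio $1/\rho$ and factor $\rho/(\rho-1)=(\abs\lambda+1)/(\abs\lambda-1)$. The extra factors of $\abs\lambda$ in the stated form should come from writing $\rho^{-d}=(\abs\lambda+1)^{-d}2^{d}$ against the displayed base $(\abs\lambda+1)/(2\abs\lambda)$. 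I expect some slack in reconciling the constants, but any bound of the shape $C_{k,\lambda}\,r^{-d}$ with $r>1$ suffices, since solving $C r^{-d}\le\varepsilon$ gives \eqref{eq:higher-resolvent-closed-degree} and hence $d=\mathcal O(\log(1/\varepsilon))$ for fixed $k,\lambda$.

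Finally, both $C_{p_{d,k}}$ and the moment bound are independent of $d$, which gives the last sentence of the corollary.
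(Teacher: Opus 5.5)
Your route is the paper's own argument. You feed $P_{d,k}$ into \Cref{thm:coherent-algorithm} and evaluate $C_{p_{d,k}}$ at $z=\lambda/\abs{\lambda}$, bounding it by the full negative-binomial generating function. You obtain the moment from $x\frac{\dd}{\dd x}(1-x)^{-k}$, and you control the truncation directly by the operator tail \eqref{eq:higher-resolvent-tail-sum} using $\norm{A^m}\leq1$. When you assemble the three terms, use the triangle inequality through $p_{d,k}(\widetilde A)$ and $p_{d,k}(A)$. Do not substitute the tail for $\varepsilon_{\mathrm{app}}$ in \eqref{eq:approximation-assumption}: since $g(0)=p_{d,k}(0)$, the lifted boundary error $\norm{F_g-P_{d,k}}_{\infty,\partial\Disk}$ is exactly twice the tail.

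The gap is in the closed-form branch. With your radius $\rho=(\abs{\lambda}+1)/2$, the Cauchy estimate gives
\[
\frac{2^{k+1}}{(\abs{\lambda}-1)^{k+1}}\left(\frac{2}{\abs{\lambda}+1}\right)^{d}.
\]
No rewriting turns this into \eqref{eq:higher-resolvent-cauchy-tail}, because the two bases are different numbers. The ratio of your bound to the displayed one is
\[
\frac{2}{\abs{\lambda}+1}\left(\frac{2\abs{\lambda}}{\abs{\lambda}+1}\right)^{k}\left(\frac{4\abs{\lambda}}{(\abs{\lambda}+1)^2}\right)^{d},
\]
which exceeds one for small $d$ and large $k$. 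For instance, with $\abs{\lambda}=3$, $k=5$, $d=0$, your bound equals $1$ while \eqref{eq:higher-resolvent-cauchy-tail} equals $4096/15552\approx0.26$. For any $\varepsilon$ between these two values, the closed certificate \eqref{eq:higher-resolvent-closed-degree} returns $d=0$, which your estimate cannot certify.

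So the claim that solving $Cr^{-d}\leq\varepsilon$ gives \eqref{eq:higher-resolvent-closed-degree} is false: it gives a different degree formula. That formula is still $\mathcal O(\log(1/\varepsilon))$, so the last sentence of the corollary survives, but the stated choice $d=d_{\mathrm{res},k}^{\mathrm{cf}}(\varepsilon)$ is not justified by your argument.

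The fix is to use the Cauchy radius $\rho'=2\abs{\lambda}/(\abs{\lambda}+1)$, which lies in $(1,\abs{\lambda})$. Then $\abs{\lambda}-\rho'=\abs{\lambda}(\abs{\lambda}-1)/(\abs{\lambda}+1)$ and $\rho'-1=(\abs{\lambda}-1)/(\abs{\lambda}+1)$. With $a_m$ the Taylor coefficients of $(\lambda-z)^{-k}$ as in your notation, this gives
\[
\sum_{m\geq d+1}\abs{a_m}
\leq\left(\frac{\abs{\lambda}+1}{\abs{\lambda}(\abs{\lambda}-1)}\right)^{k}\frac{(\rho')^{-d}}{\rho'-1}
=\frac{(\abs{\lambda}+1)^{k+1}}{\abs{\lambda}^k(\abs{\lambda}-1)^{k+1}}\left(\frac{\abs{\lambda}+1}{2\abs{\lambda}}\right)^{d}.
\]
This is exactly \eqref{eq:higher-resolvent-cauchy-tail}, and solving it against $\varepsilon$ yields \eqref{eq:higher-resolvent-closed-degree}.
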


The direct boundary formula is
\begin{equation}
\begin{aligned}
R_k(\lambda;A)
=\frac1{\pi\lambda^k}\int_0^\pi
\Biggl[{}
&\left(
\Id-\frac{\ee^{\ii\theta}}\lambda
\left(X_\theta+\ii(\Id-X_\theta^2)^{1/2}\right)
\right)^{-k}\\
&+\left(
\Id-\frac{\ee^{\ii\theta}}\lambda
\left(X_\theta-\ii(\Id-X_\theta^2)^{1/2}\right)
\right)^{-k}
-\Id
\Biggr] \,\dd\theta.
\end{aligned}
\label{eq:higher-resolvent-unitary-angle}
\end{equation}
For $k=1$, this is equivalent to \eqref{eq:resolvent-exact-angle}.

\paragraph{Partial-fraction rational approximations.}
Let
\begin{equation}
\mathfrak r(A)
=h(A)+\sum_{\ell=1}^{L_0}\sum_{k=1}^{K_\ell}
 w_{\ell,k}(\lambda_\ell\Id-A)^{-k},
\qquad \abs{\lambda_\ell}>1.
\label{eq:rational-partial-fraction-new}
\end{equation}
Approximate $h$ by a polynomial and each pole term by \eqref{eq:higher-resolvent-polynomial}.  Adding all coefficients classically gives one polynomial $\mathfrak p_D$ of degree
\begin{equation}
D=\max\{d_0,d_{\ell,k}:w_{\ell,k}\neq0\}.
\label{eq:rational-degree-new}
\end{equation}
The coherent Weyl LCHM circuit then uses $\mathcal{O}(D)$ matrix-oracle queries rather than the sum of the individual degrees.  A safe lifted normalization is the sum of the component bounds, while applying the grid certificate \eqref{eq:grid-bound} after coefficient aggregation captures cancellations.  Targets with poles near the unit disk can also be combined with direct QSVT inverse primitives or sign embedding.

\begin{corollary}[Partial-fraction implementation]
\label{cor:partial-fraction-implementation}
Let $\mathfrak p_D$ be the aggregated degree-$D$ polynomial and let
\begin{equation}
C_{\mathfrak p_D}
\geq\norm{2\mathfrak p_D-\mathfrak p_D(0)}_{\infty,\partial\Disk}.
\label{eq:rational-aggregate-normalization}
\end{equation}
If the sum of the polynomial and pole-truncation errors is at most $\varepsilon_{\mathrm{rat}}$, then one coherent Weyl LCHM circuit gives a $C_{\mathfrak p_D}$-normalized block encoding of $\mathfrak r(A)$ with unnormalized error
\begin{equation}
\varepsilon_{\mathrm{rat}}
+\frac{\delta_A}{2}\sum_{m=1}^{D}m\abs{\mathfrak c_m}
+C_{\mathfrak p_D}\varepsilon_{\mathrm{circ}},
\label{eq:rational-aggregate-error}
\end{equation}
where $2\mathfrak p_D(z)-\mathfrak p_D(0)=\sum_{m=0}^{D}\mathfrak c_mz^m$.  The query complexity is $\mathcal{O}(D)$, with $D$ the maximum component degree rather than their sum.
\end{corollary}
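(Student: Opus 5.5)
The plan is to reduce everything to \Cref{thm:coherent-algorithm}, applied with the aggregated polynomial $\mathfrak p_D$ as the implemented polynomial. The argument has three parts: the circuit and its cost, the error split, and the normalization.

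\textbf{Circuit and cost.} Set $P_D:=2\mathfrak p_D-\mathfrak p_D(0)=\sum_{m=0}^D\mathfrak c_mz^m$. By the computation in \Cref{thm:lifted-approximation}, $(P_D+P_D(0))/2=\mathfrak p_D$, so the coherent circuit of \Cref{thm:coherent-algorithm} built from $P_D$ has ideal success block $\mathfrak p_D(\widetilde A)/\norm{P_D}_{\infty,\partial\Disk}$. The aggregation step is purely classical: the coefficients of $h$'s approximant and of each $p_{d_{\ell,k},k}$ from \eqref{eq:higher-resolvent-polynomial} are summed, with weights $w_{\ell,k}$, into the single coefficient list of $\mathfrak p_D$. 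Its degree is at most the maximum $D$ in \eqref{eq:rational-degree-new}. Hence GQSP uses $\mathcal O(D)$ calls to $\mathsf Z$, and this is where ``maximum rather than sum'' comes from: one circuit implements one polynomial.

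\textbf{Error split.} For the unnormalized error I will use the triangle inequality
\[
\norm{C\widetilde G-\mathfrak r(A)}\le\norm{\mathfrak r(A)-\mathfrak p_D(A)}+\norm{\mathfrak p_D(A)-\mathfrak p_D(\widetilde A)}+\norm{\mathfrak p_D(\widetilde A)-C\widetilde G_{\mathrm{ideal}}}+C\varepsilon_{\mathrm{circ}}.
\]
The terms are handled as follows.
\begin{enumerate}
\item The first term is at most $\varepsilon_{\mathrm{rat}}$ by hypothesis, since $\mathfrak r(A)-\mathfrak p_D(A)$ is the weighted sum of the component truncation errors.
\item The second term comes from the telescoping perturbation estimate \eqref{eq:polynomial-perturbation} used in \Cref{thm:coherent-algorithm}. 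Applied to $\mathfrak p_D=\mathfrak c_0'+\tfrac12\sum_{m\ge1}\mathfrak c_mz^m$, with $\norm{A^m-\widetilde A^m}\le m\delta_A$ for contractions, it gives $\tfrac{\delta_A}{2}\sum_{m}m\abs{\mathfrak c_m}$.
\item The third term vanishes for the ideal circuit.
\end{enumerate}

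\textbf{Normalization.} The one point needing care is the normalization $C_{\mathfrak p_D}$, which is only an upper bound on $\norm{P_D}_{\infty,\partial\Disk}$ rather than equal to it. I would handle this by running GQSP on $P_D/C_{\mathfrak p_D}$. This polynomial is bounded by $1$ on $\partial\Disk$, so GQSP still applies and the success block becomes $\mathfrak p_D(\widetilde A)/C_{\mathfrak p_D}$. Equivalently, a further rescaling of the block can be absorbed. The circuit error then enters multiplied by $C_{\mathfrak p_D}$, which yields \eqref{eq:rational-aggregate-error}.
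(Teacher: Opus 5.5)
Your proposal is correct and follows the paper's route. The corollary is a direct application of \Cref{thm:coherent-algorithm} to the classically aggregated polynomial $\mathfrak p_D$, and the error splits into the rational truncation error, the perturbation estimate \eqref{eq:polynomial-perturbation}, and the circuit error scaled by the normalization. You also handle the case where $C_{\mathfrak p_D}$ is only an upper bound on $\norm{2\mathfrak p_D-\mathfrak p_D(0)}_{\infty,\partial\Disk}$ by running GQSP on $P_D/C_{\mathfrak p_D}$, which makes precise a point the paper leaves implicit.
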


\subsection[Matrix logarithm]{Matrix logarithm: $\Log(\Id+A)$}
\label{subsec:matrix-logarithm}

Assume
\begin{equation}
0<\rho<1,
\qquad
\norm{A}\leq\rho,
\qquad
B:=\frac{A}{\rho},
\qquad
\norm{B}\leq1.
\label{eq:log-parameters}
\end{equation}
Since $z\mapsto\Log(1+z)$ is singular at $z=-1$, we instead apply the Weyl construction to the rescaled disk-algebra target
\begin{equation}
g_\rho(z):=\Log(1+\rho z),
\qquad
g_\rho(B)=\Log(\Id+A).
\label{eq:log-rescaled-target}
\end{equation}
Its Taylor polynomial and lifted polynomial are
\begin{align}
p_d^{\log}(z)
&=\sum_{m=1}^d\frac{(-1)^{m+1}}m\rho^m z^m,
\label{eq:log-polynomials}\\
P_d^{\log}(z)
&=2p_d^{\log}(z)
=2\sum_{m=1}^d\frac{(-1)^{m+1}}m\rho^m z^m.
\label{eq:log-lift-polynomial}
\end{align}
For $n=d+1$,
\begin{equation}
\norm{\Log(\Id+A)-p_d^{\log}(B)}
\leq\frac{\rho^n}{n(1-\rho)}.
\label{eq:log-error}
\end{equation}
Here $W_0$ denotes the principal real branch of the Lambert $W$-function, defined by $W_0(x)\ee^{W_0(x)}=x$ for $x\geq0$.  A sufficient degree is
\begin{equation}
d_{\log}(\rho,\varepsilon)
=
\max\!\left\{1,
\left\lceil
\frac{
W_0\!\left(
\dfrac{\log(1/\rho)}{\varepsilon(1-\rho)}
\right)
}{\log(1/\rho)}
\right\rceil
\right\}-1.
\label{eq:log-degree}
\end{equation}

Define
\begin{equation}
C_p:=\norm{P_d^{\log}}_{\infty,\partial\Disk},
\qquad
C_g:=\norm{2g_\rho}_{\infty,\partial\Disk}.
\label{eq:log-normalization}
\end{equation}
The scalar remainder bound on the full unit disk gives
\begin{equation}
\abs{C_p-C_g}
\leq\frac{2\rho^{d+1}}{(d+1)(1-\rho)}.
\label{eq:log-Cp-Cg}
\end{equation}
Moreover, the disk centered at $1$ with radius $\rho$ has modulus in $[1-\rho,1+\rho]$ and argument bounded by $\arcsin\rho$, so
\begin{equation}
C_g
\leq
2\sqrt{
\log^2\!\frac1{1-\rho}
+(\arcsin\rho)^2
}.
\label{eq:log-target-normalization}
\end{equation}
Thus the lift normalization is bounded by the modulus of the rescaled target and is independent of $d$ up to the approximation error.  If
$P_d^{\log}(z)=\sum_{m=0}^dc_mz^m$, then
\begin{equation}
\frac12\sum_{m=1}^dm\abs{c_m}
=\sum_{m=1}^d\rho^m
\leq\frac{\rho}{1-\rho}.
\label{eq:log-coefficient-moment}
\end{equation}

\begin{corollary}[Matrix-logarithm complexity]
\label{cor:matrix-logarithm-complexity}
Let $d=d_{\log}(\rho,\varepsilon)$ from \eqref{eq:log-degree}, and suppose a $(1,a,\delta_B)$ block encoding of $B=A/\rho$ is available.  The Weyl LCHM circuit gives a $C_p$-normalized block encoding of $\Log(\Id+A)$ with unnormalized error at most
\begin{equation}
\frac{\rho^{d+1}}{(d+1)(1-\rho)}
+\frac{\rho}{1-\rho}\delta_B
+C_p\varepsilon_{\mathrm{circ}}.
\label{eq:log-total-error}
\end{equation}
It uses $\mathcal{O}(d)$ controlled calls to the block encoding of $B$ and
$\mathcal{O}(d\log(d+1))$ angle rotations.  Its normalization satisfies
\begin{equation}
C_p
\leq
2\sqrt{
\log^2\!\frac1{1-\rho}
+(\arcsin\rho)^2
}
+\frac{2\rho^{d+1}}{(d+1)(1-\rho)}.
\label{eq:log-cor-normalization}
\end{equation}
For a normalized input $\ket{\psi}$ with $p_d^{\log}(B)\ket{\psi}\neq0$, the normalized polynomial output is prepared using
\begin{equation}
\mathcal{O}\!\left(
\frac{dC_p}{\norm{p_d^{\log}(B)\ket{\psi}}}
\right)
\label{eq:log-state-prep}
\end{equation}
controlled matrix-oracle calls.  For every fixed $\rho<1$, one has
$d=\mathcal{O}(\log(1/\varepsilon))$ while $C_p=\mathcal{O}_\rho(1)$.
\end{corollary}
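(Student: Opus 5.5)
The plan is to obtain the corollary by plugging the Taylor data of the rescaled target $g_\rho(z)=\Log(1+\rho z)$ into \Cref{thm:coherent-algorithm}, applied with $B=A/\rho$ in place of $A$, and then checking each of the four inputs: approximation error, degree, normalization, and perturbation moment. Since $\norm{B}\leq1$ and $g_\rho$ is holomorphic on $\abs{z}<1/\rho$, $g_\rho$ lies in $\DiskAlg$ and $g_\rho(B)=\Log(\Id+A)$ by the holomorphic functional calculus.

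\textbf{Step 1 (approximation error).} I will apply \Cref{thm:coherent-algorithm} with $P_d=P_d^{\log}$. Then $p_d=(P_d+P_d(0))/2=p_d^{\log}$, because $P_d^{\log}(0)=0$. The approximation term is $\norm{F_{g_\rho}-P_d^{\log}}_{\infty,\partial\Disk}=2\norm{g_\rho-p_d^{\log}}_{\infty,\partial\Disk}$, since $g_\rho(0)=0$. The scalar Taylor tail gives at most $2\rho^{n}/(n(1-\rho))$ with $n=d+1$, which carries a factor of two. For the sharper constant in \eqref{eq:log-total-error}, I will instead use \Cref{rem:direct-approximation}: von Neumann's inequality for the contraction $B$ gives $\norm{g_\rho(B)-p_d^{\log}(B)}\leq\norm{g_\rho-p_d^{\log}}_{\infty,\overline\Disk}\leq\sum_{m>d}\rho^m/m\leq\rho^{n}/(n(1-\rho))$, which is \eqref{eq:log-error}. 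Feeding this bound directly into the error decomposition of the theorem's proof yields the first term of \eqref{eq:log-total-error}.

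\textbf{Step 2 (degree).} I need to verify that \eqref{eq:log-degree} enforces this tail bound below $\varepsilon$. The certified tail uses the cruder inequality $\rho^n/(n(1-\rho))\leq\varepsilon$. Setting $u=n\log(1/\rho)$, this becomes $u\ee^{u}\geq\log(1/\rho)/(\varepsilon(1-\rho))$. Since $u\ee^u$ is increasing, the condition is equivalent to $u\geq W_0(\cdot)$, so $n=d+1$ equal to the stated ceiling suffices. The $\max\{1,\cdot\}-1$ only guarantees $d\geq0$.

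\textbf{Step 3 (perturbation moment and normalization).} The perturbation moment is immediate: $c_m=2(-1)^{m+1}\rho^m/m$, so $\tfrac12\sum m\abs{c_m}=\sum_{m=1}^d\rho^m\leq\rho/(1-\rho)$. This gives the $\delta_B$ term. For the normalization, \eqref{eq:Cp-controlled-by-Cg} together with the scalar tail on $\partial\Disk$ gives $C_p\leq C_g+2\rho^{d+1}/((d+1)(1-\rho))$. It remains to bound $C_g=2\max_{\abs z=1}\abs{\Log(1+\rho z)}$. Writing $w=1+\rho z$, we have $\abs{\Log w}^2=\log^2\abs w+\arg^2 w$ with $\abs w\in[1-\rho,1+\rho]$ and $\abs{\arg w}\leq\arcsin\rho$. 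Moreover, $\abs{\log\abs w}\leq\max\{\log(1+\rho),\log\frac1{1-\rho}\}=\log\frac1{1-\rho}$. This yields \eqref{eq:log-cor-normalization}.

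\textbf{Step 4 (resources and state preparation).} The query, rotation, and state-preparation counts follow verbatim from \Cref{thm:coherent-algorithm} and \Cref{cor:output-state-preparation} with $p_d=p_d^{\log}$. Finally, $d=\mathcal O(\log(1/\varepsilon))$ because $W_0(x)\leq\log(1+x)$, and $C_p=\mathcal O_\rho(1)$ follows from Step 3.

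The main obstacle is the constant bookkeeping in Step 1. A naive use of the lifted transfer \eqref{eq:constant-one-transfer} doubles the tail, so the stated first term must be justified through von Neumann's inequality for $B$ rather than through the lift.
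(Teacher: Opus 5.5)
Your proposal is correct and follows the paper's route. You plug the Taylor data of $g_\rho$ into \Cref{thm:coherent-algorithm} for $B$, and use the direct operator tail \eqref{eq:log-error}, rather than the doubled lifted error, for the first term. The moment $\sum_{m=1}^d\rho^m$ gives the $\delta_B$ term, and $C_p\leq C_g+2\rho^{d+1}/((d+1)(1-\rho))$ together with \eqref{eq:log-target-normalization} gives the normalization. Your factor-of-two observation is exactly why the paper states \eqref{eq:log-error} separately; that bound also follows, without von Neumann's inequality, from the operator Taylor series: $\norm{\Log(\Id+A)-p_d^{\log}(B)}\leq\sum_{m>d}\norm{A}^m/m$.
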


Because $g_\rho\in\DiskAlg$, \Cref{thm:disk-algebra} gives the exact boundary formula
\begin{equation}
\begin{aligned}
\Log(\Id+A)
=\frac1\pi\int_0^\pi
\Biggl[{}
&\Log\!\left(
\Id+\rho\ee^{\ii\theta}
\left(X_\theta^{(B)}+\ii(\Id-(X_\theta^{(B)})^2)^{1/2}\right)
\right)\\
&+\Log\!\left(
\Id+\rho\ee^{\ii\theta}
\left(X_\theta^{(B)}-\ii(\Id-(X_\theta^{(B)})^2)^{1/2}\right)
\right)
\Biggr] \,\dd\theta,
\end{aligned}
\label{eq:log-exact-angle}
\end{equation}
where $X_\theta^{(B)}=\RePart(\ee^{-\ii\theta}B)$.  Both logarithms use the principal branch, since the disks centered at $1$ with radius $\rho<1$ lie in the open right half-plane.

\subsection{Shifted fractional powers}
\label{subsec:fractional-powers}

We consider $(\lambda\Id+A)^\nu$ under
\begin{equation}
0<\rho<\lambda,
\qquad
\norm{A}\leq\rho,
\qquad
B:=\frac{A}{\rho},
\qquad
\norm{B}\leq1,
\qquad
\nu\in\C,
\label{eq:fractional-shift-assumptions}
\end{equation}
and use the principal logarithm.  The generalized binomial coefficients are
\begin{equation}
\binom{\nu}{0}=1,
\qquad
\binom{\nu}{m}
=\frac{\nu(\nu-1)\cdots(\nu-m+1)}{m!}
=\frac{\Gamma(\nu+1)}{\Gamma(m+1)\Gamma(\nu-m+1)},
\label{eq:generalized-binomial}
\end{equation}
where the finite-product definition is used at removable singularities of the gamma quotient.  Define the disk-algebra target
\begin{equation}
g_{\rho,\lambda,\nu}(z)
:=(\lambda+\rho z)^\nu
=\lambda^\nu\left(1+\frac{\rho}{\lambda}z\right)^\nu,
\qquad
g_{\rho,\lambda,\nu}(B)=(\lambda\Id+A)^\nu.
\label{eq:shifted-fractional-target}
\end{equation}
Its degree-$d$ Taylor polynomial and lift are
\begin{align}
p_d^{\nu}(z)
&=\lambda^\nu\sum_{m=0}^{d}
\binom{\nu}{m}\left(\frac{\rho z}{\lambda}\right)^m,
\label{eq:shifted-fractional-polynomial}\\
P_d^{\nu}(z)
&=2p_d^{\nu}(z)-p_d^{\nu}(0)
=\lambda^\nu\left[
1+2\sum_{m=1}^{d}\binom{\nu}{m}
\left(\frac{\rho z}{\lambda}\right)^m
\right].
\label{eq:shifted-fractional-lift}
\end{align}

For $0<r<1$, let
\begin{equation}
M_\nu(r)=\max_{\abs{z}=r}\abs{(1+z)^\nu}.
\label{eq:fractional-Mr}
\end{equation}
If $\rho/\lambda<r<1$, Cauchy's coefficient estimate gives the uniform scalar and operator bounds
\begin{equation}
\norm{(\lambda\Id+A)^\nu-p_d^{\nu}(B)}
\leq
\abs{\lambda^\nu}
\frac{M_\nu(r)}{1-\rho/(\lambda r)}
\left(\frac{\rho}{\lambda r}\right)^{d+1}.
\label{eq:shifted-fractional-error}
\end{equation}
Thus it suffices to take
\begin{equation}
d\geq
\max\!\left\{0,
\left\lceil
\frac{
\log\!\left(
\dfrac{\abs{\lambda^\nu}M_\nu(r)}
{\varepsilon(1-\rho/(\lambda r))}
\right)
}{
\log(\lambda r/\rho)
}
\right\rceil-1
\right\}.
\label{eq:shifted-fractional-degree}
\end{equation}
The disk geometry gives the computable bound
\begin{equation}
M_\nu(r)
\leq
(1+r)^{\max\{\RePart\nu,0\}}
(1-r)^{-\max\{-\RePart\nu,0\}}
\exp\!\left(\abs{\ImPart\nu}\arcsin r\right).
\label{eq:shifted-fractional-M-bound}
\end{equation}

Define
\begin{equation}
C_p:=\norm{P_d^{\nu}}_{\infty,\partial\Disk},
\qquad
C_g:=\norm{2g_{\rho,\lambda,\nu}-g_{\rho,\lambda,\nu}(0)}_{\infty,\partial\Disk}.
\label{eq:fractional-Cd-sum}
\end{equation}
The same scalar tail estimate on $\partial\Disk$ gives
\begin{equation}
\abs{C_p-C_g}
\leq
2\abs{\lambda^\nu}
\frac{M_\nu(r)}{1-\rho/(\lambda r)}
\left(\frac{\rho}{\lambda r}\right)^{d+1},
\qquad
C_g\leq\abs{\lambda^\nu}
\left[1+2M_\nu\!\left(\frac{\rho}{\lambda}\right)\right].
\label{eq:fractional-Cd-uniform}
\end{equation}
Hence $C_p$ is controlled by the modulus of the target $g_{\rho,\lambda,\nu}$ for every $\lambda>\rho$, including $\lambda\leq1$.  If
$P_d^{\nu}(z)=\sum_{m=0}^dc_mz^m$, then
\begin{equation}
\frac12\sum_{m=1}^{d}m\abs{c_m}
\leq
\abs{\lambda^\nu}M_\nu(r)
\frac{\rho/(\lambda r)}{\left(1-\rho/(\lambda r)\right)^2}.
\label{eq:fractional-moment-uniform}
\end{equation}

For real exponents, the target bound can be made explicit.  If $0<\nu<1$, then
\begin{equation}
C_g
\leq\lambda^\nu
\left[1+2\left(1+\frac{\rho}{\lambda}\right)^\nu\right].
\label{eq:fractional-positive-C}
\end{equation}
If $\nu=-p$ with $p>0$, then
\begin{equation}
C_g
\leq\lambda^{-p}
\left[1+2\left(1-\frac{\rho}{\lambda}\right)^{-p}\right].
\label{eq:fractional-negative-C}
\end{equation}
For nonnegative integer $\nu$, the series terminates and the Weyl realization is exact with $d=\nu$.

\begin{corollary}[Shifted fractional-power complexity]
\label{cor:fractional-power-complexity}
Assume $0<\rho<\lambda$ and choose
\begin{equation}
\frac{\rho}{\lambda}<r<1.
\label{eq:fractional-cor-r-range}
\end{equation}
Let $d$ satisfy \eqref{eq:shifted-fractional-degree}, and suppose a $(1,a,\delta_B)$ block encoding of $B=A/\rho$ is available.  Then the coherent Weyl LCHM circuit gives a $C_p$-normalized block encoding of $(\lambda\Id+A)^\nu$ with unnormalized error at most
\begin{equation}
\abs{\lambda^\nu}
\frac{M_\nu(r)}{1-\rho/(\lambda r)}
\left(\frac{\rho}{\lambda r}\right)^{d+1}
+\delta_B\abs{\lambda^\nu}M_\nu(r)
\frac{\rho/(\lambda r)}{\left(1-\rho/(\lambda r)\right)^2}
+C_p\varepsilon_{\mathrm{circ}}.
\label{eq:fractional-total-error}
\end{equation}
A single attempt uses $\mathcal{O}(d)$ controlled calls to the block encoding of $B$ and $\mathcal{O}(d\log(d+1))$ angle rotations.  Its normalization obeys
\begin{equation}
C_p
\leq
C_g
+2\abs{\lambda^\nu}
\frac{M_\nu(r)}{1-\rho/(\lambda r)}
\left(\frac{\rho}{\lambda r}\right)^{d+1},
\qquad
C_g\leq\abs{\lambda^\nu}
\left[1+2M_\nu\!\left(\frac{\rho}{\lambda}\right)\right].
\label{eq:fractional-cor-normalization}
\end{equation}
For a normalized input $\ket{\psi}$ with $p_d^{\nu}(B)\ket{\psi}\neq0$, amplitude amplification prepares the normalized polynomial output using
\begin{equation}
\mathcal{O}\!\left(
\frac{dC_p}{\norm{p_d^{\nu}(B)\ket{\psi}}}
\right)
\label{eq:fractional-state-prep-general}
\end{equation}
controlled matrix-oracle calls.

For the square-root and inverse-square-root cases, let $s\in\{+1,-1\}$ and set $\nu=s/2$.  A sufficient degree is
\begin{equation}
d_s(\varepsilon;r)
=
\max\!\left\{0,
\left\lceil
\frac{
\log\!\left(
\dfrac{\lambda^{s/2}M_{s/2}(r)}
{\varepsilon(1-\rho/(\lambda r))}
\right)
}{\log(\lambda r/\rho)}
\right\rceil-1
\right\},
\label{eq:half-power-degrees}
\end{equation}
where
\begin{equation}
M_{1/2}(r)=\sqrt{1+r},
\qquad
M_{-1/2}(r)=\frac1{\sqrt{1-r}}.
\label{eq:sqrt-special-constants}
\end{equation}
The corresponding target lift normalizations satisfy
\begin{align}
C_{g_{\rho,\lambda,1/2}}
&\leq\sqrt\lambda\left[
1+2\sqrt{1+\frac{\rho}{\lambda}}
\right],
\label{eq:half-power-normalizations}\\
C_{g_{\rho,\lambda,-1/2}}
&\leq\lambda^{-1/2}\left[
1+\frac{2}{\sqrt{1-\rho/\lambda}}
\right].
\label{eq:invsqrt-special-constants}
\end{align}
Thus the total matrix-oracle complexity for preparing the normalized polynomial output is
\begin{equation}
\mathcal{O}\!\left(
\frac{d_s(\varepsilon;r)
\left(C_{g_{\rho,\lambda,s/2}}+2\varepsilon\right)}
{\norm{p_{d_s}^{s/2}(B)\ket{\psi}}}
\right).
\label{eq:half-power-total-complexity}
\end{equation}
For fixed $\lambda>\rho$, $\rho$, and $r$, both the block construction and the target-controlled normalization are independent of the approximation degree apart from the logarithmic degree $d_s=\mathcal O(\log(1/\varepsilon))$.
\end{corollary}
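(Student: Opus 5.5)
The proof reduces to \Cref{thm:coherent-algorithm} applied to the rescaled matrix $B=A/\rho$, which satisfies $\norm{B}\leq1$ and is available through a $(1,a,\delta_B)$ block encoding. The target is $g_{\rho,\lambda,\nu}\in\DiskAlg$. Since $\rho/\lambda<1$, the function $(1+w)^\nu$ with the principal logarithm is holomorphic on $\abs{w}<1$. Hence $g_{\rho,\lambda,\nu}$ is holomorphic on the disk of radius $\lambda/\rho>1$, and $g_{\rho,\lambda,\nu}(B)=(\lambda\Id+A)^\nu$ holds in the holomorphic functional calculus, which agrees with \Cref{def:disk-calculus}. The implemented lift is $P_d^\nu=2p_d^\nu-p_d^\nu(0)$, so the circuit realizes $p_d^\nu(B)/C_p$ exactly in the ideal case. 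The resource counts ($\mathcal O(d)$ controlled calls, $\mathcal O(d\log(d+1))$ rotations) and the state-preparation cost \eqref{eq:fractional-state-prep-general} are then read off directly from \Cref{thm:coherent-algorithm} and \Cref{cor:output-state-preparation}.

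For the error budget \eqref{eq:fractional-total-error}, I would bound the three terms of \eqref{eq:coherent-total-error} separately.

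\emph{Approximation term.} I would not use $\varepsilon_{\mathrm{app}}=\norm{F_g-P_d}_{\infty,\partial\Disk}$ here; it carries a factor of $2$. Instead I would bound $\norm{g(B)-p_d^\nu(B)}$ directly by the triangle inequality on the operator Taylor tail, using $\norm{B^m}\leq1$.

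Cauchy's estimate on $\abs{w}=r$ gives $\abs{\binom{\nu}{m}}\leq M_\nu(r)r^{-m}$. The tail is therefore at most
\[
\abs{\lambda^\nu}M_\nu(r)\sum_{m>d}(\rho/(\lambda r))^m,
\]
which is exactly the first term. This is \eqref{eq:shifted-fractional-error}.

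\emph{Block-encoding term.} Since $c_m=2\lambda^\nu\binom{\nu}{m}(\rho/\lambda)^m$, the same Cauchy bound gives
\[
\tfrac12\sum_m m\abs{c_m}\leq\abs{\lambda^\nu}M_\nu(r)\sum_m m q^m,\qquad q=\rho/(\lambda r).
\]
Summing, $\sum_m mq^m=q/(1-q)^2$, which is \eqref{eq:fractional-moment-uniform}.

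\emph{Circuit term.} The last term, $C_p\varepsilon_{\mathrm{circ}}$, is taken unchanged from \Cref{thm:coherent-algorithm}.

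The normalization bound \eqref{eq:fractional-cor-normalization} follows by the reverse triangle inequality, as in \eqref{eq:Cp-controlled-by-Cg}. The scalar lift error on $\partial\Disk$ is twice the scalar tail bound. For $C_g$, write
\[
2g-g(0)=\lambda^\nu\bigl[2(1+(\rho/\lambda)z)^\nu-1\bigr].
\]
On $\abs z=1$ the modulus is at most $\abs{\lambda^\nu}\bigl[1+2M_\nu(\rho/\lambda)\bigr]$.

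For the half-integer cases:
\begin{itemize}
\item I would evaluate $M_{\pm1/2}(r)$ from $\abs{1+w}\in[1-r,1+r]$, since $\abs{(1+w)^{\nu}}=\abs{1+w}^{\RePart\nu}$ for real $\nu$. This gives \eqref{eq:sqrt-special-constants}, and inserting it into \eqref{eq:shifted-fractional-degree} gives $d_s$.
\item The bounds \eqref{eq:half-power-normalizations} and \eqref{eq:invsqrt-special-constants} are $C_g$ evaluated with $M_{\pm1/2}(\rho/\lambda)$.
\item With the degree choice, the lift discrepancy is at most $2\varepsilon$, so $C_p\leq C_g+2\varepsilon$. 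Substituting into \eqref{eq:fractional-state-prep-general} gives \eqref{eq:half-power-total-complexity}.
\end{itemize}

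The final independence statement follows because $q<1$ is fixed, so the degree grows only as $\log(1/\varepsilon)$.

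The only step needing care is checking that the degree condition \eqref{eq:shifted-fractional-degree} makes the first error term at most $\varepsilon$. Solving $\abs{\lambda^\nu}M_\nu(r)q^{d+1}/(1-q)\leq\varepsilon$ for $d+1$ gives precisely the ceiling expression minus one. The general bound \eqref{eq:shifted-fractional-M-bound}, by contrast, is not needed for the statement.
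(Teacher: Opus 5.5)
Your proposal is correct and follows essentially the same route as the paper: Cauchy's estimate on $\abs{w}=r$ yields the operator Taylor tail \eqref{eq:shifted-fractional-error}, the coefficient moment \eqref{eq:fractional-moment-uniform}, and the lift discrepancy behind \eqref{eq:fractional-cor-normalization}, and \Cref{thm:coherent-algorithm} and \Cref{cor:output-state-preparation} then supply the resource and state-preparation counts. The lift discrepancy is twice the tail because $F_g-P_d^\nu=2(g-p_d^\nu)$. You are also right that the first error term must come from the direct contraction bound rather than the constant-one lift transfer, since the lift transfer would double it; that is what produces the stated factor.
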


\subsection{Half-plane sign, spectral projectors, and ReLU}
\label{subsec:sign-relu}

Let $A$ have no eigenvalues on the imaginary axis.  The half-plane matrix sign is the holomorphic functional-calculus transform
\begin{equation}
\Sign(A)=A(A^2)^{-1/2},
\label{eq:matrix-sign-definition}
\end{equation}
where the square root is the principal one.  Equivalently, $\Sign(z)=1$ on the open right half-plane and $\Sign(z)=-1$ on the open left half-plane.  It yields the spectral projectors
\begin{equation}
\Pi_\pm(A)=\frac12\left(\Id\pm\Sign(A)\right).
\label{eq:sign-projectors}
\end{equation}

\paragraph{Direct polynomial approximation.}
Suppose a compact polynomially convex set $E=E_+\cup E_-$ contains the spectrum, with $E_+$ and $E_-$ separated from the imaginary axis, and suppose a polynomial $s_d$ satisfies
\begin{equation}
\max_{z\in E_+}\abs{s_d(z)-1}
\leq\varepsilon_s,
\qquad
\max_{z\in E_-}\abs{s_d(z)+1}
\leq\varepsilon_s.
\label{eq:direct-sign-approximation}
\end{equation}
For a diagonalizable matrix $A=V\Lambda V^{-1}$,
\begin{equation}
\norm{\Sign(A)-s_d(A)}
\leq\kappa(V)\varepsilon_s.
\label{eq:direct-sign-diagonalizable}
\end{equation}
If a stronger spectral-set or resolvent certificate is available, it may replace $\kappa(V)$.  Once $s_d$ is selected, Weyl LCHM implements $s_d(A)$ exactly as a polynomial, with no additional angular error.  This route is especially attractive for real spectra contained in two separated intervals, where Chebyshev polynomial approximants are available.  Zolotarev rational approximants can instead be handled through the resolvent construction~\cite{NakatsukasaFreund2016}.

When $W(A)$ crosses the imaginary axis, its convexity prevents a numerical-range estimate from certifying a nontrivial two-sided approximation to the discontinuous sign.  In this genuinely non-normal regime, the resolvent formulation below supplies the appropriate robust certificate.

\paragraph{Resolvent and sign-embedding route.}
The matrix sign has the canonical integral representation~\cite{KenneyLaub1995,Higham2008}
\begin{equation}
\Sign(A)
=\frac{2}{\pi}\int_0^\infty
A(t^2\Id+A^2)^{-1}\,\dd t
=\frac1\pi\int_0^\infty
\left[(A-\ii t\Id)^{-1}+(A+\ii t\Id)^{-1}\right] \,\dd t.
\label{eq:sign-resolvent}
\end{equation}
After the logarithmic substitution $t=\ee^x$, sinc or trapezoidal quadrature gives an exponentially convergent rational sum under a strip-resolvent bound.  Each node is a shifted inverse and can be implemented directly by QSVT, or polynomialized using the higher-order resolvent formulas in \Cref{subsec:resolvents} when the poles remain outside the unit disk.  The sign-embedding framework of Wang and Liu~\cite{WangLiu2026} goes further for structured matrix equations: it first embeds the target operator into an off-diagonal block of a matrix sign and then rebalances the shifted inverse families so that the aggregate LCU weight can remain bounded.

\paragraph{ReLU and related transforms.}
For matrices with spectrum separated from the imaginary axis, define the half-plane ReLU by
\begin{equation}
\ReLU_{\mathrm{hp}}(A)
=\frac12 A\left(\Id+\Sign(A)\right)
=A\Pi_+(A).
\label{eq:halfplane-relu}
\end{equation}
For Hermitian $A$, this is the usual positive part $\max(A,0)$.  If $r(A)$ approximates $\Sign(A)$, then
\begin{equation}
\norm{\ReLU_{\mathrm{hp}}(A)-\tfrac12A(\Id+r(A))}
\leq\frac{\norm{A}}2\norm{\Sign(A)-r(A)}.
\label{eq:relu-error}
\end{equation}
The same sign approximant simultaneously yields
\begin{equation}
\abs{A}_{\mathrm{hp}}=A\Sign(A),
\qquad
A_+=A\Pi_+(A),
\qquad
A_-=-A\Pi_-(A),
\qquad
\operatorname{step}_{\mathrm{hp}}(A)=\Pi_+(A).
\label{eq:sign-derived-functions}
\end{equation}
Analytic smoothings such as $\tanh(\tau A)$, the logistic filter $(\Id+\ee^{-\tau A})^{-1}$, and the error-function filter $\operatorname{erf}(\tau A)$ can be treated by the Weyl polynomial route when their poles or growth lead to a controlled lift.  When Hamiltonian exponentials or resolvents give a smaller normalization, one can instead use vanilla LCHM or rational formulas.

\begin{corollary}[Complexity of sign, projector, and ReLU transforms]
\label{cor:sign-relu-complexity}
Let $s_d$ satisfy \eqref{eq:direct-sign-approximation}, define
\begin{equation}
P_{s,d}(z)=2s_d(z)-s_d(0),
\qquad
C_{s_d}:=\norm{P_{s,d}}_{\infty,\partial\Disk},
\label{eq:sign-lift-normalization}
\end{equation}
and assume a functional-calculus certificate
$\norm{\Sign(A)-s_d(A)}\leq\mathcal K_A\varepsilon_s$, with
$\mathcal K_A=\kappa(V)$ in \eqref{eq:direct-sign-diagonalizable}.  Then Weyl LCHM gives a $C_{s_d}$-normalized block encoding of $\Sign(A)$ with unnormalized error at most
\begin{equation}
\mathcal K_A\varepsilon_s
+\frac{\delta_A}{2}\sum_{m=1}^{d}m\abs{c_m^{(s)}}
+C_{s_d}\varepsilon_{\mathrm{circ}},
\label{eq:sign-total-error}
\end{equation}
using $\mathcal{O}(d)$ controlled matrix-oracle calls and
$a+\lceil\log_2(d+1)\rceil+2$ logical ancillas.  The projector polynomials $(1\pm s_d)/2$ have the same degree and ancilla count, while the ReLU polynomial
\begin{equation}
r_{d+1}(z)=\frac z2\left(1+s_d(z)\right)
\label{eq:relu-polynomial}
\end{equation}
has degree at most $d+1$, so all sign-derived transforms in \eqref{eq:sign-derived-functions} require $\mathcal{O}(d)$ matrix-oracle calls.  In the worst case, the ReLU construction uses
$a+\lceil\log_2(d+2)\rceil+2$ logical ancillas.  Rational sign approximants can instead be handled by \Cref{cor:partial-fraction-implementation}.
\end{corollary}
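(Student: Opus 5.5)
The corollary follows by applying \Cref{thm:coherent-algorithm} to the polynomial $P_{s,d}=2s_d-s_d(0)$, so that $p_d=s_d$, and splitting the error by the triangle inequality. Write $\widetilde G$ for the normalized success block. Then
\[
\norm{C_{s_d}\widetilde G-\Sign(A)}\le\norm{\Sign(A)-s_d(A)}+\norm{s_d(A)-s_d(\widetilde A)}+C_{s_d}\norm{\widetilde G-s_d(\widetilde A)/C_{s_d}}.
\]

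\emph{First term.} This is bounded by the assumed certificate $\mathcal K_A\varepsilon_s$. This replaces the disk-algebra transfer \eqref{eq:constant-one-transfer}, which is unavailable because $\Sign$ is not in $\DiskAlg$ when $W(A)$ meets the imaginary axis. The certificate may be \eqref{eq:direct-sign-diagonalizable} or any spectral-set substitute.

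\emph{Second term.} Expand $s_d$ coefficientwise and apply the telescoping perturbation estimate \eqref{eq:polynomial-perturbation} to $s_d(A)-s_d(\widetilde A)$. It uses $\norm{A^m-\widetilde A^m}\le m\delta_A$, valid since both matrices are contractions. The coefficients of $s_d$ in degrees $m\ge1$ are $c_m^{(s)}/2$, so the bound is $\frac{\delta_A}{2}\sum_m m\abs{c_m^{(s)}}$.

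\emph{Third term.} This is the circuit error times $C_{s_d}$.

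The resource counts ($\mathcal O(d)$ controlled calls, the angle rotations, and $a+\lceil\log_2(d+1)\rceil+2$ ancillas) are read off directly from \Cref{thm:coherent-algorithm} with degree $d$.

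\emph{Derived transforms.} For the projectors, $(1\pm s_d)/2$ is again a polynomial of degree $d$. Applying the theorem to it gives the same depth and ancillas, and the error is $\mathcal K_A\varepsilon_s/2$ by linearity, with the lift normalization changed accordingly. For ReLU, $r_{d+1}(z)=\tfrac z2(1+s_d(z))$ has degree at most $d+1$. Rerunning the theorem with $d+1$ gives $N=2^{\lceil\log_2(d+2)\rceil}>d+1$, hence the stated worst-case ancilla count, and $\mathcal O(d+1)=\mathcal O(d)$ queries. The functions $\abs{A}_{\mathrm{hp}}=A\,\Sign(A)$ and $A_\pm$ are similar: $zs_d(z)$ and $\pm\tfrac z2(1\pm s_d(z))$ all have degree at most $d+1$. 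Their approximation error follows from \eqref{eq:relu-error} and its analogues, using $\norm{A}\le1$.

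\emph{Rational approximants.} The final clause follows by expanding the rational approximant in partial fractions with poles outside $\overline\Disk$ and invoking \Cref{cor:partial-fraction-implementation}.

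The only delicate point is that none of the steps may use the disk-algebra bound \eqref{eq:disk-norm-bounds} for $\Sign-s_d$. The approximation error must come entirely from the external certificate $\mathcal K_A$. The normalization $C_{s_d}$ is simply the defined lift norm, and it is not compared to any target $C_g$. I therefore expect the main care to go into stating that \Cref{thm:coherent-algorithm}'s exact realization of $s_d(\widetilde A)/C_{s_d}$ is used rather than its $g$-approximation clause. Everything else is bookkeeping.
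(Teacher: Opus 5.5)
Your proposal is correct and follows essentially the paper's route: the corollary is \Cref{thm:coherent-algorithm} applied to $P_{s,d}$ (so that $p_d=s_d$), with the error split exactly as in the proof of \Cref{thm:total-error}, namely the telescoping perturbation \eqref{eq:polynomial-perturbation} plus $C_{s_d}\varepsilon_{\mathrm{circ}}$, and the external certificate $\mathcal K_A\varepsilon_s$ taking the place of the disk-algebra transfer \eqref{eq:constant-one-transfer}. One small correction to a side remark, which does not affect the argument: $\Sign$ is never in $\DiskAlg$, whatever $W(A)$ is, because it jumps across the imaginary axis, which passes through $\Disk$; so the transfer is always unavailable and the certificate is always needed.
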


\section{Faber--Weyl LCHM: identity and implementation}
\label{sec:faber}

Let $K\subset\C$ be a compact convex set with nonempty interior and connected complement, and suppose $W(A)\subset K$.  Let
\begin{equation}
\Phi:\widehat\C\setminus K\longrightarrow\{w:\abs{w}>1\}
\label{eq:faber-conformal-map}
\end{equation}
be the exterior conformal map normalized by $\Phi(\infty)=\infty$ and $\Phi'(\infty)>0$.  The $m$th Faber polynomial $F_m$ is the polynomial part of the Laurent expansion of $\Phi(z)^m$ at infinity.  For $R>1$, write
\begin{equation}
\Gamma_R=\{z:\abs{\Phi(z)}=R\},
\qquad
K_R=\text{the compact region enclosed by }\Gamma_R.
\label{eq:faber-level-curves}
\end{equation}
Suppose $f$ is holomorphic on a neighborhood of $K_R$ and let
\begin{equation}
f(z)=\sum_{m=0}^{\infty}a_mF_m(z),
\qquad
r_d(z)=\sum_{m=0}^{d}a_mF_m(z)
\label{eq:faber-series}
\end{equation}
be its Faber series and degree-$d$ truncation.

\subsection{Faber--Weyl LCHM identity}

\begin{theorem}[Faber--Weyl LCHM identity]
\label{thm:faber-weyl-identity}
Let the assumptions above hold and set
\begin{equation}
M_R=\max_{z\in\Gamma_R}\abs{f(z)}.
\label{eq:faber-MR}
\end{equation}
Then
\begin{equation}
\norm{f(A)-r_d(A)}
\leq\frac{2(1+\sqrt2)M_R}{R-1}R^{-d}.
\label{eq:faber-matrix-error}
\end{equation}
If
\begin{equation}
r_d(z)=\sum_{m=0}^{d}b_mz^m,
\label{eq:faber-monomial-expansion}
\end{equation}
then, for every $N>d$ and $\theta_j=\pi j/N$,
\begin{equation}
r_d(A)=\frac1N\sum_{j=0}^{N-1}
\left[b_0+2\sum_{m=1}^{d}
 b_m\ee^{\ii m\theta_j}T_m(X_{\theta_j})\right].
\label{eq:faber-weyl-exact}
\end{equation}
The only approximation error is the Faber truncation error in \eqref{eq:faber-matrix-error}; the angular formula is exact.
\end{theorem}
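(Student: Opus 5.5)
The proof has two independent parts: an error bound for the Faber truncation, and an exact angular identity for the resulting polynomial. The angular identity is immediate. Expanding $r_d(z)=\sum_{m=0}^d b_m z^m$ and applying the discrete formula \eqref{eq:discrete-monomial} of \Cref{thm:exact-projection} to each $A^m$ with $m\geq1$ (valid since $N>d\geq m$) gives \eqref{eq:faber-weyl-exact}. The constant term uses $\frac1N\sum_j b_0\Id=b_0\Id$. This is the same argument as \eqref{eq:polynomial-angular}--\eqref{eq:angle-polynomial}, with no restriction on $A$, so it contributes no error.

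For the truncation error, I will write the remainder as a Faber tail,
\[
f(A)-r_d(A)=\sum_{m>d}a_mF_m(A),
\]
and bound the coefficients and the Faber polynomials separately. The coefficient bound is the standard Cauchy-type estimate
\[
a_m=\frac1{2\pi\ii}\int_{\abs{w}=R}\frac{f(\Psi(w))}{w^{m+1}}\,\dd w,
\qquad \Psi=\Phi^{-1},
\]
which gives $\abs{a_m}\leq M_R R^{-m}$. The holomorphy of $f$ on a neighborhood of $K_R$ lets me use the level curve $\Gamma_R$ itself, possibly after a limiting argument with $R'\uparrow R$. For the matrix Faber polynomials I will invoke Beckermann's theorem for convex $K\supset W(A)$, namely $\norm{F_m(A)}\leq2$. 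This is the key external input.

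Granting both bounds, the tail gives $\norm{f(A)-r_d(A)}\leq 2M_R\sum_{m>d}R^{-m}=2M_R R^{-d}/(R-1)$. This is already inside the stated bound, since the target carries an extra factor $1+\sqrt2$.

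The factor $1+\sqrt2$ signals that the authors likely take a different route. The alternative is to use the Crouzeix--Palencia bound $\norm{h(A)}\leq(1+\sqrt2)\norm{h}_{\infty,W(A)}$, already cited in \Cref{thm:plain-entire}, with $h=f-r_d$. For this I would need the scalar tail bound $\norm{f-r_d}_{\infty,K}\leq 2M_R R^{-d}/(R-1)$. That follows from the scalar Faber estimate $\norm{F_m}_{\infty,K}\leq2$ for convex $K$ (Kövari--Pommerenke), together with the same coefficient bound. Since $W(A)\subset K$, this yields exactly the constant $2(1+\sqrt2)$.

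I will present the Crouzeix--Palencia route, because it uses only scalar facts plus a result already invoked in the paper. I expect the main obstacle to be justifying $\abs{a_m}\leq M_RR^{-m}$ carefully and the uniform convergence of the Faber series on $K$, which is needed to identify $f-r_d$ with the tail there. I would handle this by integrating on $\abs{w}=R'$ with $1<R'<R$ and letting $R'\to R$, using the continuity of $f$ up to $\Gamma_R$. The angular exactness then finishes the theorem.
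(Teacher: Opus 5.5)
Your proposal is correct and follows the paper's own proof. The paper likewise combines the scalar Faber estimates $\norm{F_m}_{\infty,K}\leq2$ and $\abs{a_m}\leq M_RR^{-m}$ into the scalar tail $2M_RR^{-d}/(R-1)$ on $K\supset W(A)$, applies Crouzeix--Palencia to get the factor $1+\sqrt2$, and obtains the angular identity from \Cref{thm:exact-projection} term by term. Your Beckermann aside ($\norm{F_m(A)}\leq2$ whenever $W(A)\subset K$ with $K$ convex) is also valid, and it would even remove the factor $1+\sqrt2$ from the bound.
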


\begin{proof}
For convex $K$, the standard Faber estimates are
\begin{equation}
\norm{F_m}_{\infty,K}\leq2,
\qquad
\abs{a_m}\leq M_RR^{-m}.
\label{eq:faber-basic-bounds}
\end{equation}
Therefore
\begin{equation}
\norm{f-r_d}_{\infty,K}
\leq\frac{2M_R}{R-1}R^{-d}.
\label{eq:faber-scalar-error}
\end{equation}
The Crouzeix--Palencia spectral-set estimate gives \eqref{eq:faber-matrix-error}.  Expanding $r_d$ into monomials and applying \Cref{thm:exact-projection} term by term gives \eqref{eq:faber-weyl-exact}.
\end{proof}

A sufficient nonnegative degree for operator error at most $\varepsilon$ is
\begin{equation}
d\geq d_{\mathrm F}(R,\varepsilon)
:=\max\!\left\{0,
\left\lceil
\frac{
\log\!\left(\dfrac{2(1+\sqrt2)M_R}{(R-1)\varepsilon}\right)
}{\log R}
\right\rceil
\right\}.
\label{eq:faber-degree}
\end{equation}
Faber approximation chooses the polynomial degree according to the conformal geometry of $K$; Weyl LCHM then extracts the resulting polynomial through a one-dimensional exact Fourier projection.  All conversion from the Faber basis to monomials is classical.

\subsection{Faber--Weyl quantum implementation}

\begin{theorem}[Faber--Weyl LCHM implementation]
\label{thm:faber-weyl-implementation}
Under the assumptions of \Cref{thm:faber-weyl-identity}, choose
\begin{equation}
A=c\Id+sB,
\qquad c\in\C,\quad s>0,\quad \norm{B}\leq1,
\label{eq:faber-affine-normalization}
\end{equation}
and write
\begin{equation}
2r_d(c+sz)-r_d(c)=\sum_{m=0}^{d}\widehat c_mz^m,
\qquad
C_{r_d}:=
\norm{2r_d(c+s\,\cdot)-r_d(c)}_{\infty,\partial\Disk}.
\label{eq:faber-normalized-polynomial-data}
\end{equation}
Given a $(1,a,\delta_B)$ block encoding of $B$, the coherent Weyl LCHM circuit gives a $C_{r_d}$-normalized block encoding $\widetilde G_d^{\mathrm F}$ satisfying
\begin{equation}
\norm{C_{r_d}\widetilde G_d^{\mathrm F}-f(A)}
\leq
\frac{2(1+\sqrt2)M_R}{R-1}R^{-d}
+\frac{\delta_B}{2}\sum_{m=1}^{d}m\abs{\widehat c_m}
+C_{r_d}\varepsilon_{\mathrm{circ}}.
\label{eq:faber-weyl-total-error}
\end{equation}
The circuit uses
\begin{equation}
\mathcal{O}(d)\ \text{controlled calls to the block encoding of $B$},
\qquad
\mathcal{O}\!\left(d\log(d+1)\right)\ \text{angle rotations},
\label{eq:faber-gate-complexity}
\end{equation}
and $a+\lceil\log_2(d+1)\rceil+2$ logical ancillas.  Clifford+$T$ synthesis to total rotation error $\varepsilon_{\mathrm{circ}}$ uses
\begin{equation}
\mathcal{O}\!\left(
 d\log(d+1)
 \log\!\frac{d\log(d+1)}{\varepsilon_{\mathrm{circ}}}
\right)
\label{eq:faber-T-count}
\end{equation}
$T$ gates apart from the matrix-oracle cost.  With $d=d_{\mathrm F}(R,\varepsilon)$, the matrix-query complexity is
\begin{equation}
\mathcal{O}\!\left(1+
\frac{\max\!\left\{0,\log\!\left(2(1+\sqrt2)M_R/((R-1)\varepsilon)\right)\right\}}{\log R}
\right).
\label{eq:faber-query-complexity}
\end{equation}
For a normalized input $\ket{\psi}$, amplitude amplification prepares the normalized Faber approximation using
\begin{equation}
\mathcal{O}\!\left(
\frac{dC_{r_d}}{\norm{r_d(A)\ket{\psi}}}
\right)
\label{eq:faber-state-prep}
\end{equation}
controlled matrix-oracle calls.  No truncation and angular quadrature error is introduced.
\end{theorem}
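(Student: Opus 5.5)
The plan is to reduce \Cref{thm:faber-weyl-implementation} to \Cref{thm:coherent-algorithm} applied to the contraction $B$, with the Faber polynomial rewritten in the variable of $B$. Define the scalar polynomial $\widehat r_d(z):=r_d(c+sz)$, which has degree at most $d$, and note that $\widehat r_d(B)=r_d(c\Id+sB)=r_d(A)$ exactly, since polynomial functional calculus commutes with affine substitution. Set $P_d(z):=2\widehat r_d(z)-\widehat r_d(0)=\sum_{m=0}^d\widehat c_mz^m$. Then $(P_d+P_d(0))/2=\widehat r_d$ and $\norm{P_d}_{\infty,\partial\Disk}=C_{r_d}$. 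So the data of \Cref{thm:coherent-algorithm} for the block encoding of $B$ (with $\norm{B}\leq1$ and oracle error $\delta_B$) is exactly $(P_d,\widehat r_d,C_{r_d})$. That theorem yields a circuit whose ideal success block is $\widehat r_d(\widetilde B)/C_{r_d}$. It uses $\mathcal O(d)$ controlled calls, $\mathcal O(d\log(d+1))$ angle rotations, $a+\lceil\log_2(d+1)\rceil+2$ ancillas, and has the stated $T$-count. By \eqref{eq:exact-finite-realization} applied to $B$, this angular realization is exact.

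For the error bound, I will split $C_{r_d}\widetilde G_d^{\mathrm F}-f(A)$ into three pieces. The first is $C_{r_d}\widetilde G_d^{\mathrm F}-\widehat r_d(\widetilde B)$, which is bounded by $C_{r_d}\varepsilon_{\mathrm{circ}}$ (circuit synthesis error). The second is $\widehat r_d(\widetilde B)-\widehat r_d(B)$, which is bounded by $\frac{\delta_B}{2}\sum_m m\abs{\widehat c_m}$ via the telescoping perturbation estimate \eqref{eq:polynomial-perturbation}. That estimate applies because $\widehat r_d$ has coefficients $\widehat c_m/2$ for $m\geq1$ and $\norm{A^k-\widetilde A^k}\leq k\delta$ for contractions. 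The third is $\widehat r_d(B)-f(A)=r_d(A)-f(A)$, which is bounded by \eqref{eq:faber-matrix-error} of \Cref{thm:faber-weyl-identity}. The triangle inequality gives \eqref{eq:faber-weyl-total-error}.

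Unlike \Cref{thm:coherent-algorithm}, I will not route the approximation error through the lifted disk bound. The Faber estimate controls $f(A)-r_d(A)$ directly via Crouzeix--Palencia on $K\supset W(A)$, so no approximation on $\partial\Disk$ of a lift of $f$ is needed. For the query complexity, I will substitute $d=d_{\mathrm F}(R,\varepsilon)$ from \eqref{eq:faber-degree} into $\mathcal O(d)$, adding the $1$ to cover $d=0$. The state-preparation count follows from \Cref{cor:output-state-preparation} with $p_d=\widehat r_d$ and $\widehat r_d(B)\ket{\psi}=r_d(A)\ket{\psi}$, which gives \eqref{eq:faber-state-prep}.

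The main subtlety I expect is the mismatch between the Faber polynomial and the Weyl circuit. The exact identity \eqref{eq:faber-weyl-exact} is stated for $A$ itself, but the circuit needs $\norm{\cdot}\leq1$ and a unit block encoding. I must therefore check that performing the Weyl projection on $B$ rather than on $A$ still reproduces $r_d(A)$ exactly. This holds because \Cref{thm:exact-projection} is purely algebraic and is applied to the reparametrized polynomial $\widehat r_d$. The normalization $C_{r_d}$ is then measured on $\partial\Disk$ in the $B$-variable, which is why it is defined through $r_d(c+s\,\cdot)$. Once this reparametrization is in place, everything else is inherited verbatim.
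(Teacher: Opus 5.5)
Your proposal is correct and follows the paper's proof. The paper likewise rewrites $r_d(A)$ as the degree-$d$ polynomial $\widehat r_d(B)$ with lift $2\widehat r_d-\widehat r_d(0)$, takes the gate counts and the $\delta_B$ perturbation term from \Cref{thm:coherent-algorithm}, takes the truncation term directly from \Cref{thm:faber-weyl-identity} rather than from a lifted disk bound, and takes the state-preparation cost from \Cref{cor:output-state-preparation}; your three-term triangle-inequality split spells out what the paper's short proof leaves implicit.
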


\begin{proof}
The affine normalization converts $r_d(A)$ into the degree-$d$ polynomial $r_d(c\Id+sB)$.  Its lifted polynomial is the left-hand side of \eqref{eq:faber-normalized-polynomial-data}.  Applying \Cref{thm:coherent-algorithm} gives the gate counts and the block-encoding perturbation term, while \Cref{thm:faber-weyl-identity} supplies the Faber truncation error.  The state-preparation bound follows from \Cref{cor:output-state-preparation}.
\end{proof}

The implementation depends only on the final degree, the affine lift normalization
\begin{equation}
C_{r_d}
=\norm{2r_d(c+s\,\cdot)-r_d(c)}_{\infty,\partial\Disk},
\label{eq:faber-lift-normalization}
\end{equation}
and the transformed-output norm.  It requires neither a quantum Faber recurrence nor a generating-function linear system.  For elliptical $K$, the Faber polynomials reduce to scaled Chebyshev polynomials, giving an explicit noncircular specialization~\cite{MoretNovati2001,LowSu2026}.

\section[Comparison with prior methods]{Comparison with prior methods}
\label{sec:comparison}

\subsection[Polynomial matrix-function comparison]{Comparison of degree-$d$ polynomial matrix functions}

We reproduce the comparison table of different methods in \Cref{tab:intro-comparison} here. \Cref{tab:method-comparison} compares methods for degree-$d$ polynomial matrix functions under the common assumption $\norm{A}\leq1$ and and given a $(1,a,\delta_A)$ block encoding of $A$.  It records the additional method-specific hypotheses, the circuit depth of one coherent attempt, and the coherent repetition count after amplitude amplification.

\begin{table}[H]
\centering
\scriptsize
\renewcommand{\arraystretch}{1.5}
\setlength{\tabcolsep}{0.8pt}
\begin{tabularx}{\textwidth}{|>{\centering\arraybackslash}p{2.3cm}|>{\centering\arraybackslash}X|>{\centering\arraybackslash}p{2.2cm}|>{\centering\arraybackslash}p{3.35cm}|>{\centering\arraybackslash}p{3.1cm}|}
\hline
Method & Assumptions & Circuit depth & Amplified repetitions & Ancilla qubits \\
\hline
\makecell[c]{Contour integral\\\cite{TakahiraEtAl2020,TakahiraEtAl2022,JiangAn2026}}
& $\displaystyle
\begin{gathered}
\spec(A)\subset\operatorname{int}(\Gamma),\quad
R_\Gamma=\max_{z\in\Gamma}\abs z,\\[-1pt]
B_{p,\Gamma}=\max_{z\in\Gamma}\abs{p_d(z)},\quad
\ell_\Gamma=\operatorname{len}(\Gamma),\\[-1pt]
\gamma_\Gamma=\sup_{z\in\Gamma}\norm{(z\Id-A)^{-1}}
\end{gathered}$
& \makecell[c]{$\widetilde{\mathcal O}\!\left(\gamma_\Gamma(1+R_\Gamma)\right)$}
& $\displaystyle \mathcal O\!\left(
\frac{B_{p,\Gamma}\ell_\Gamma\gamma_\Gamma}
{\norm{p_d(A)\ket{\psi}}}
\right)$
& \makecell[c]{$a+\lceil\log_2 M_\Gamma\rceil +2$} \\
\hline
\makecell[c]{\\ QEP\\\cite{LowSu2026}\\}
& $ \displaystyle
\begin{gathered}
A=S\Lambda S^{-1},\quad \spec(A)\subset\R,\\[-1pt]
\kappa_S=\norm S\norm{S^{-1}}
\end{gathered}$
& $\widetilde{\mathcal O}(\kappa_S d)$
& $\displaystyle \widetilde{\mathcal O}\!\left(
\frac{\kappa_S\norm{p_d}_{\infty,[-1,1]}}
{\norm{p_d(A)\ket{\psi}}}
\right)$
& $a+\mathcal O(\log d)$ \\
\hline
\makecell[c]{\\ $d$-regular GQSP\\\cite{GutierrezEtAl2026}\\ }
& none
& $\widetilde{\mathcal O}(d)$
& $\displaystyle \mathcal O\!\left(
\frac{\norm{p_d}_{\infty,\partial\Disk}}
{\norm{p_d(A)\ket{\psi}}}
\right)$
& $a+\mathcal O(\log d)$ \\
\hline
\makecell[c]{\\Weyl LCHM\\ (this work)\\ }
& none
& $\widetilde{\mathcal O}(d)$
& $ \displaystyle \mathcal O\!\left(
\frac{\norm{p_d}_{\infty,\partial\Disk}}
{\norm{p_d(A)\ket{\psi}}}
\right)$
& \makecell[c]{$a+\lceil\log_2(d+1)\rceil +2$} \\
\hline
\end{tabularx}
\caption{Comparison for degree-$d$ polynomial matrix functions $p_d(A)$ with $d\geq1$ under the common condition $\norm{A}\leq1$ and given a $(1,a,\delta_A)$ block encoding of $A$.  Circuit depth counts sequential matrix-oracle layers in one coherent attempt, and the fourth column assumes amplitude amplification.  Ancilla counts include the $a$ qubits of the supplied block encoding but exclude the system register and workspace internal to supplied oracles.  The $d$-regular entry uses the $\mathcal O(\log d)$ ancillary-qubit bound, and the Weyl entry uses $a+\lceil\log_2(d+1)\rceil+2$.  Here $M_\Gamma$ is the number of contour quadrature nodes, and the $\mathcal O(\log d)$ term in QEP includes its history and linear-system work registers.}
\label{tab:method-comparison}
\end{table}

The notation $\widetilde{\mathcal O}$ suppresses logarithmic factors.  Representative polynomial regimes make the resource parameters in \Cref{tab:method-comparison} explicit.

\paragraph{Weyl LCHM vs.~contour-integral methods.}
For a prescribed polynomial, Weyl LCHM fixes the root-of-unity nodes and weights algebraically and gives an exact angular projection for every $N>d$.  Its cost depends directly on the polynomial degree and transformed-output norm, with no contour, quadrature rule, or shifted-resolvent condition.  Multiplying the two contour entries in \Cref{tab:method-comparison} gives the total matrix-oracle complexity
\begin{equation}
\widetilde{\mathcal O}\!\left(
\frac{
B_{p,\Gamma}\ell_\Gamma\gamma_\Gamma^2(1+R_\Gamma)
}{
\norm{p_d(A)\ket{\psi}}
}
\right),
\label{eq:comparison-contour-total}
\end{equation}
in addition to the choice of $M_\Gamma$ quadrature nodes.  The factor $\gamma_\Gamma$ can be large for a highly non-normal matrix even when its spectrum appears benign, because it probes the resolvent and hence the pseudospectral geometry.  Weyl LCHM removes this dependence for polynomial targets.  For example, when $p_m(z)=z^m$, it implements $A^m$ exactly with $\widetilde{\mathcal O}(m)$ circuit depth and
$\mathcal O(1/\norm{A^m\ket{\psi}})$ amplified repetitions, with neither polynomial-approximation nor contour-quadrature error.

\paragraph{Weyl LCHM vs.~QEP.}
The QEP row assumes a real spectrum and a diagonalization $A=S\Lambda S^{-1}$, and its depth and post-selection bounds inherit the eigenvector condition number $\kappa_S$.  Weyl LCHM requires only a block encoding of a contraction and applies directly to complex spectra and defective matrices, with no diagonalization or $\kappa_S$ dependence.  The nilpotent Jordan shift
\begin{equation}
A=\eta J_n,\qquad 0<\eta<1.
\label{eq:comparison-jordan-shift}
\end{equation}
has $\spec(A)=\{0\}$ but $A^m=\eta^mJ_n^m\neq0$ for every $m<n$.  Weyl LCHM retains this ordered-power information and constructs the powers exactly with linear depth up to logarithmic factors.  This feature is especially useful near defectivity, where the contraction block encoding remains well defined independently of an eigenbasis condition number.

\paragraph{Weyl LCHM vs.~$d$-regular GQSP.}
A strict numerical-radius gap strengthens the baseline constant-factor comparison into a Weyl-specific advantage.  The identity $\sup_\theta\norm{X_\theta}=w(A)$ allows Weyl LCHM to use information beyond the contraction bound, while the standard $d$-regular normalization is controlled by the unit-disk polynomial norm.  The $d$-regular normalization is
$\norm{p_d}_{\infty,\partial\Disk}$, while the explicit Weyl normalization satisfies
\begin{equation}
\norm{p_d}_{\infty,\partial\Disk}
\leq
\norm{2p_d-p_d(0)}_{\infty,\partial\Disk}
\leq
3\norm{p_d}_{\infty,\partial\Disk}.
\label{eq:comparison-direct-normalizations}
\end{equation}
The direct normalizations are therefore equivalent up to a universal constant.  Weyl LCHM additionally provides a closed-form one-angle Hermitian decomposition: its angular LCU coefficients have unit $\ell_1$-norm, the root-of-unity projection is exact, and the nodes and weights require neither regularization nor optimization.

A uniform numerical-radius gap can strengthen this constant-factor comparison to an \emph{exponential advantage} for Weyl LCHM.  Given a normalized matrix family with $\norm{A_d}=1$, $\sup_{m\geq2}w(A_m)<1$, and a nonzero, genuinely degree-$d$ power, the rescaled Weyl construction in \Cref{app:rescaled-weyl-comparison} coherently amplifies $X_\theta$ to $X_\theta/s$ while damping the degree-$m$ coefficient by $s^m$.  For the explicit weighted-shift family $A_{d,\eta}$ and the degree-$d$ polynomial $p_d(z)=z^d$ analyzed there, the total state-preparation query ratio, including the coherent-amplification cost, satisfies
\begin{equation}
\frac{Q_{\mathrm{Weyl}}^{(s)}}{Q_{d\text{-reg}}}
=\widetilde{\mathcal O}\!\left(
d\left(\frac{1+\eta}{2}\right)^d
\right),
\qquad 0<\eta<1,
\label{eq:comparison-rescaled-ratio}
\end{equation}
which decreases exponentially in $d$ for every fixed $\eta<1$ and reduces to
$\widetilde{\mathcal O}(d/2^d)$ in the small-$\eta$ limiting regime.  The weighted-shift construction therefore proves an explicit family separation: a uniform numerical-radius gap and nontrivial degree-$d$ action make numerical-radius-rescaled Weyl LCHM exponentially more efficient than $d$-regular GQSP in total state-preparation queries.

\subsection{Relation with Laplace and generalized LCHS}

Laplace-LCHS extends the original LCHS method from dissipative semigroups $\ee^{-tA}$ to targets with a suitable inverse-Laplace representation.  This class includes shifted inverse powers and functions arising in mass-matrix differential equations~\cite{AnChildsLinYing2026}.  Its cost is governed by the inverse-Laplace kernel, its truncation and discretization, and the norm of the transformed output.  Weyl LCHM complements this global representation on bounded matrix domains by implementing polynomials exactly and approximating logarithms and shifted powers through explicit finite-degree formulas.

The main Laplace-LCHS analysis assumes an $L^1$ inverse-Laplace density and also permits finitely many atomic terms~\cite{AnChildsLinYing2026}.  The following elementary limit distinguishes this global function class from the bounded-domain targets treated by Weyl LCHM.

\begin{proposition}[Separation from finite-measure inverse-Laplace representations]
\label{prop:laplace-separation}
Let $\mu$ be a finite complex Borel measure on $[0,\infty)$ and suppose
\begin{equation}
 f(x)=\int_{[0,\infty)}\ee^{-xt}\,\dd\mu(t),
 \qquad x>0.
\label{eq:integrable-inverse-laplace}
\end{equation}
Then
\begin{equation}
\lim_{x\to+\infty}f(x)=\mu(\{0\}).
\label{eq:finite-measure-laplace-limit}
\end{equation}
In particular, if $\dd\mu(t)=q(t)\dd t$ with $q\in L^1(0,\infty)$, then $f(x)\to0$.  Consequently, no nonconstant polynomial, $\Log(1+x)$, or $(\lambda+x)^\nu$ with $\lambda>0$ and $\RePart\nu>0$ admits such a finite-measure representation; in the purely $L^1$ class, no nonzero polynomial does.
\end{proposition}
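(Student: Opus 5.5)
The limit \eqref{eq:finite-measure-laplace-limit} follows from dominated convergence, and the consequences follow by comparing growth at infinity. Split the measure as $\mu=\mu(\{0\})\delta_0+\mu|_{(0,\infty)}$. The atom contributes the constant $\mu(\{0\})$ for every $x>0$. For the remaining part, the integrand $\ee^{-xt}$ satisfies $\abs{\ee^{-xt}}\leq1$, and the constant $1$ is integrable against the total variation $\abs{\mu}$, which is finite by hypothesis. For each fixed $t>0$ we have $\ee^{-xt}\to0$ as $x\to+\infty$. Dominated convergence, applied to $\abs{\mu}$ after writing $\dd\mu=h\,\dd\abs{\mu}$ with $\abs{h}=1$ (polar decomposition), therefore gives $\int_{(0,\infty)}\ee^{-xt}\,\dd\mu(t)\to0$. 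This proves \eqref{eq:finite-measure-laplace-limit}.

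In the $L^1$ case the measure $q(t)\,\dd t$ has no atoms, so $\mu(\{0\})=0$ and $f(x)\to0$.

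For the consequences, it suffices to note that every $f$ of the form \eqref{eq:integrable-inverse-laplace} has a finite limit at $+\infty$, and in particular is bounded for large $x$.
\begin{itemize}
\item A nonconstant polynomial satisfies $\abs{p(x)}\to\infty$ as $x\to+\infty$, because its leading term dominates.
\item $\Log(1+x)\to\infty$.
\item For $(\lambda+x)^\nu$ with $\lambda>0$ and principal branch, $\abs{(\lambda+x)^\nu}=(\lambda+x)^{\RePart\nu}\to\infty$ when $\RePart\nu>0$, since $\lambda+x$ is real and positive.
\end{itemize}
None of these functions has a finite limit, so none admits such a representation.

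In the purely $L^1$ class the limit must equal $0$. A nonzero constant polynomial $c$ has limit $c\neq0$, and every nonconstant polynomial diverges, so no nonzero polynomial is representable.

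The argument is elementary. The only point requiring care is that $\mu$ is complex, which is why dominated convergence is applied with respect to the total variation measure $\abs{\mu}$ rather than to $\mu$ itself.
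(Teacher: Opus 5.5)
Your proof is correct and follows the paper's argument: dominated convergence against the total-variation measure $\abs{\mu}$ with the bound $\abs{\ee^{-xt}}\leq1$, then unboundedness of the listed functions on the positive axis. The only cosmetic difference is that you split off the atom at $0$ explicitly, whereas the paper uses the pointwise limit $\ee^{-xt}\to\mathbf 1_{\{0\}}(t)$ directly. You also spell out the nonzero-constant case for the $L^1$ class, which the paper leaves implicit.
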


\begin{proof}
For every $t\geq0$, $\ee^{-xt}\to\mathbf 1_{\{0\}}(t)$ as $x\to+\infty$, while $\abs{\ee^{-xt}}\leq1$.  Dominated convergence with respect to the total-variation measure $\abs{\mu}$ gives \eqref{eq:finite-measure-laplace-limit}.  The listed nonconstant functions are unbounded on the positive real axis.
\end{proof}

The proposition concerns global inverse-Laplace representations.  On the bounded matrix domains considered here, Weyl LCHM implements nonconstant polynomials exactly and approximates the logarithm and shifted fractional powers as described in \Cref{subsec:matrix-logarithm,subsec:fractional-powers}.

Generalized LCHS via Weyl calculus instead replaces the one-dimensional inverse-Laplace ansatz by Weyl calculus and provides a broader route to analytic matrix functions on bounded domains~\cite{NiYing2026}.  Its discrete representation is a two-dimensional plane-wave expansion
\begin{equation}
 \sum_{j}w_j\ee^{-\ii(u_jL+v_jH)},
 \qquad (u_j,v_j)\in\R^2,
\label{eq:plane-wave-lchs}
\end{equation}
where $A=L+\ii H$.

The Weyl-calculus theorem converts a scalar Fourier approximation on the numerical range into an operator approximation, and the closed-form construction gives upper bounds for general analytic functions.  After restricting the frequencies to a radius $R$ and discretizing, a coherent implementation has LCU normalization
\begin{equation}
 \lambda=\sum_j\abs{w_j},
 \qquad
 \Lambda=R\lambda ,
\label{eq:plane-wave-lchs-cost}
\end{equation}
where $R$ controls the longest Hamiltonian-simulation time and $\lambda$ controls the block normalization and amplification overhead.  The optimized quantum cost therefore depends on both the scalar approximation error and the admissible frequency measure through $(R,\lambda)$.

Explicit kernels or numerical calculations are given for selected targets, including the dissipative matrix exponential, a shifted square root, and the illustrative monomial $z^4$~\cite{NiYing2026}.  For fixed $R$, $\lambda$, and a frequency grid
\begin{equation}
\mathcal G_R=\{(u_j,v_j)\}_j
\subset\{(u,v):u^2+v^2\leq R^2\},
\label{eq:plane-wave-frequency-grid}
\end{equation}
the ansatz-free generalized-LCHS construction obtains target-optimized discrete weights for a prescribed analytic function, including an arbitrary input polynomial $p_d$, from the convex program
\begin{equation}
 \min_{\{w_j\}}\ 
 \norm{
 \sum_{(u_j,v_j)\in\mathcal G_R}
 w_j\ee^{-\ii(u_jx+v_jy)}
 -p_d(x+\ii y)
 }_{W(\Omega)}
\quad\text{subject to}\quad
 \sum_{(u_j,v_j)\in\mathcal G_R}\abs{w_j}\leq\lambda .
\label{eq:plane-wave-lchs-convex-program}
\end{equation}
Together with an outer search over $(R,\lambda)$, the convex program produces ansatz-free, target-optimized discrete formulas and an a posteriori residual certificate.  Generalized LCHS also provides closed-form continuous kernels that can be discretized by quadrature.  Weyl LCHM specializes the polynomial case further by eliminating both the radius search and the weight optimization.

The one-angle Weyl LCHM representation gives explicit polynomial weights and normalization.  If
$p_d(z)=\sum_{m=0}^d a_mz^m$ and $N>d$, the root-of-unity formula fixes the angular nodes and weights algebraically, with no truncation-radius search, quadrature, or convex optimization.  The angular LCU coefficients have unit $\ell_1$-norm, and the only target-dependent block normalization is the explicit lift
\begin{equation}
 C_p=\norm{2p_d-p_d(0)}_{\infty,\partial\Disk},
\label{eq:comparison-explicit-lift}
\end{equation}
which is within universal constants of the minimax normalization
$\norm{p_d}_{\infty,\partial\Disk}$ and yields the post-selection dependence displayed in \Cref{tab:method-comparison}.  Thus polynomial targets admit a fully explicit Weyl construction: a one-dimensional grid, unit angular $\ell_1$-weight, closed-form normalization, and an exact degree-$d$ projection.

\section{Conclusion and discussion}
\label{sec:conclusion}

The central Weyl LCHM identity realizes every matrix power as an exact noncommutative Fourier coefficient of Chebyshev transforms of the Hermitian family $X_\theta=\RePart(\ee^{-\ii\theta}A)$:
\begin{equation}
A^m=\frac{2}{N}\sum_{j=0}^{N-1}
\ee^{\ii m\theta_j}T_m(X_{\theta_j}),
\qquad N>m.
\label{eq:conclusion-power-identity}
\end{equation}
The formula holds for arbitrary matrices, including non-normal ones, uses closed-form equispaced angles, and introduces no truncation or angular quadrature error.  The disk-algebra extension transfers uniform approximation of the scalar lift $F_g=2g-g(0)$ to operator approximation with constant one, after which the angular realization remains exact.

The coherent implementation evaluates all angles in superposition, applies one GQSP circuit, and performs the root-of-unity projection by uncomputing the angle register.  Its matrix-oracle circuit depth is $\Theta(d)$ for a degree-$d$ target, matching the signal-query lower bound.  The angular LCU coefficients have unit $\ell_1$-norm, while the target-dependent lift normalization satisfies
\[
\norm{p_d}_{\infty,\overline\Disk}
\leq C_p
\leq3\norm{p_d}_{\infty,\overline\Disk},
\qquad
\abs{C_p-C_g}\leq\varepsilon_{\mathrm{app}}.
\]
The relevant state-preparation parameters are the QET constants
\begin{equation}
q_{\mathrm{et}}(p_d;A,\psi)
:=\frac{\norm{p_d}_{\infty,\overline\Disk}}
{\norm{p_d(A)\ket{\psi}}},
\qquad
q_{\mathrm{et}}(g;A,\psi)
:=\frac{\norm{g}_{\infty,\overline\Disk}}
{\norm{g(A)\ket{\psi}}}.
\label{eq:conclusion-qet-constants}
\end{equation}
Scalar contractions show that any post-selection-based QET valid uniformly for arbitrary contractions must incur
$\Omega(q_{\mathrm{et}}(g;A,\psi))$ coherent repetitions in the worst case, and the Weyl circuit attains the matching $\mathcal O(q_{\mathrm{et}})$ dependence.  For exact powers, $C_{p_m}=C_g=2$ while
$q_{\mathrm{et}}(p_m;A,\psi)=q_{\mathrm{et}}(g;A,\psi)=1/\norm{A^m\ket{\psi}}$; the factor-two Weyl lift is independent of $m$, and total constant repetitions occur whenever $\norm{A^m\ket{\psi}}=\Omega(1)$.

The equality $\sup_\theta\norm{X_\theta}=w(A)$ exposes numerical-radius information that norm-only polynomial bounds do not use.  Rescaling by this radius damps the degree-$k$ coefficient by $s^k$; for the explicit weighted-shift family, the resulting end-to-end state-preparation query ratio relative to $d$-regular GQSP decreases exponentially with $d$, including coherent-amplification overhead.

The framework gives explicit resource bounds for powers, finite stationary iterations, exponentials, constant-source driven ODEs, resolvents, logarithms, shifted fractional powers, sign/projector/ReLU transforms, and partial-fraction rational approximations.  Faber--Weyl adapts the approximation degree to the conformal geometry of a noncircular numerical-range enclosure, then implements the resulting polynomial with $\mathcal{O}(d)$ matrix-oracle calls and no truncation and angular quadrature error.

Vanilla and Weyl LCHM are complementary.  Vanilla LCHM reduces the target to functions of Hermitian pencils and specializes to Hamiltonian simulation for dissipative evolution; Weyl LCHM implements polynomial approximations through exact angular projection.  Both preserve the non-normal matrix structure without diagonalization.

Vanilla and Weyl LCHM formulas may lead to new understanding and insights in non-normal matrix theory, numerical analysis, and quantum linear algebra. LCHM-based QET algorithms may provide simple, ubiquitous, and practical algorithmic design principles towards the era of fault-tolerant quantum computing.

\section*{Acknowledgments}
JPL thanks valuable discussions with Dong An, Lin Lin, and Shi Jin. JPL acknowledges support from Quantum Science and Technology--National Science and Technology Major Project under Grant No.~2024ZD0300500, Excellent Young Scientists Fund Program, start-up funding from Tsinghua University and Beijing Institute of Mathematical Sciences and Applications.

Human authors came up with the LCHM formulas. All mathematical statements, proofs, algorithms, complexity estimates, citations, and final manuscript content were independently created, reviewed, verified, and approved by the human authors, who take full responsibility for the work.

\bibliographystyle{unsrturl}
\bibliography{references}

\appendix

\section{Numerical-radius-rescaled Weyl LCHM}
\label{app:rescaled-weyl-comparison}

Numerical-radius rescaling converts a uniform gap $w(A)<1$ into degree-dependent coefficient damping.  We derive the resulting end-to-end query bound and then specialize it to $p_d(z)=z^d$.  An explicit weighted-shift family has both a uniform numerical-radius gap and nonzero degree-$d$ action, and its state-preparation query ratio relative to $d$-regular GQSP decreases exponentially with $d$.

\subsection{Numerical-radius-rescaled Weyl LCHM: method and advantage}
\label{app:rescaled-weyl-method}

Let $A$ be a contraction, let
\begin{equation}
p_d(z)=a_0+a_1z+\cdots+a_dz^d,
\qquad a_d\neq0,
\label{eq:rescaled-weyl-polynomial}
\end{equation}
and fix an input state $\ket{\psi}$.  Write
\begin{equation}
X_\theta
:=\RePart(\ee^{-\ii\theta}A)
=\frac{\ee^{-\ii\theta}A+\ee^{\ii\theta}A^\dagger}{2}.
\label{eq:rescaled-weyl-definitions}
\end{equation}
Suppose that, in addition to $\norm A\leq1$, the numerical radius is known to satisfy $w(A)<1$.  Choose $s$ directly from
\begin{equation}
w(A)=\sup_\theta\norm{X_\theta}<s<1.
\label{eq:rescaled-weyl-radius-promise}
\end{equation}
Then $\norm{X_\theta/s}\leq w(A)/s<1$.  Applying the exact angular identity to the rescaled Hermitian matrices gives, for every $N>d$, $\theta_j=\pi j/N$, and $1\leq k\leq d$,
\begin{equation}
A^k
=\frac{2s^k}{N}\sum_{j=0}^{N-1}
\ee^{\ii k\theta_j}
T_k\!\left(\frac{X_{\theta_j}}s\right).
\label{eq:rescaled-weyl-power-identity}
\end{equation}
Consequently,
\begin{equation}
p_d(A)
=\frac1N\sum_{j=0}^{N-1}
\left[
a_0\Id+2\sum_{k=1}^d
a_ks^k\ee^{\ii k\theta_j}
T_k\!\left(\frac{X_{\theta_j}}s\right)
\right].
\label{eq:rescaled-weyl-polynomial-identity}
\end{equation}
The angular averaging weights retain unit $\ell_1$-norm.  The polynomial presented to GQSP is $2p_d(s z)-p_d(0)$, with normalization
\begin{equation}
\norm{2p_d(s\,\cdot)-p_d(0)}_{\infty,\partial\Disk}
=\max_{\abs z=1}
\abs{a_0+2a_1sz+\cdots+2a_ds^dz^d}.
\label{eq:rescaled-weyl-normalization}
\end{equation}
Thus every degree-$k$ coefficient acquires a factor $s^k$.  This coefficient damping is the source of the improvement, and it can be exponentially strong for high-degree targets.

The supplied oracle directly block-encodes $X_\theta$, rather than $X_\theta/s$.  Uniform singular-value amplification~\cite{GilyenEtAl2019} approximates the odd map $x\mapsto x/s$ on $\abs{x}\leq w(A)$ while remaining bounded by one on $[-1,1]$.  Up to logarithmic dependence on the amplification accuracy, its degree is
\begin{equation}
q_{\mathrm{amp}}
=\widetilde{\mathcal O}\!\left(\frac1{s-w(A)}\right).
\label{eq:rescaled-weyl-amplification-degree}
\end{equation}
The same amplification polynomial acts coherently on all values of the angle register.  One rescaled-Weyl attempt therefore uses
\begin{equation}
\widetilde{\mathcal O}\!\left(\frac d{s-w(A)}\right)
\label{eq:rescaled-weyl-per-attempt}
\end{equation}
queries to the original block encoding of $A$, and amplitude amplification gives
\begin{equation}
Q_{\mathrm{Weyl}}^{(s)}
=\widetilde{\mathcal O}\!\left(
\frac{
d\norm{2p_d(s\,\cdot)-p_d(0)}_{\infty,\partial\Disk}
}{
(s-w(A))\norm{p_d(A)\ket{\psi}}
}
\right).
\label{eq:rescaled-weyl-total}
\end{equation}
The logarithms hidden in $\widetilde{\mathcal O}$ include the accuracy needed for the inner amplification to remain valid through the outer degree-$d$ signal-processing sequence.

For comparison, the $d$-regular construction produces the success block
$p_d(A)/\norm{p_d}_{\infty,\partial\Disk}$ using exactly $d$ calls to the supplied block encoding, so its amplified state-preparation complexity is~\cite{GutierrezEtAl2026}
\begin{equation}
Q_{d\text{-reg}}
=\mathcal O\!\left(
\frac{
d\norm{p_d}_{\infty,\partial\Disk}
}{
\norm{p_d(A)\ket{\psi}}
}
\right).
\label{eq:rescaled-weyl-dregular-total}
\end{equation}
The unscaled Weyl circuit of \Cref{thm:coherent-algorithm} has complexity
\begin{equation}
Q_{\mathrm{Weyl}}^{(1)}
=\mathcal O\!\left(
\frac{
d\norm{2p_d-p_d(0)}_{\infty,\partial\Disk}
}{
\norm{p_d(A)\ket{\psi}}
}
\right).
\label{eq:rescaled-weyl-unscaled-total}
\end{equation}
Numerical-radius rescaling improves on the norm-only $d$-regular estimate whenever the reduction from
$\norm{p_d}_{\infty,\partial\Disk}$ to
$\norm{2p_d(s\,\cdot)-p_d(0)}_{\infty,\partial\Disk}$
dominates the coherent-amplification factor $(s-w(A))^{-1}$.

\subsection[Power polynomial and exponential advantage]{The power $p_d(z)=z^d$: exponential advantage from a numerical-radius gap}
\label{app:rescaled-weyl-power}

Consider a family of normalized contractions $A_d$ and input states $\ket{\psi_d}$ satisfying
\begin{equation}
\norm{A_d}=1,
\qquad
\sup_{m\geq2}w(A_m)<1,
\qquad
A_d^d\ket{\psi_d}\notin
\operatorname{span}\!\left\{
\ket{\psi_d},A_d\ket{\psi_d},\ldots,A_d^{d-1}\ket{\psi_d}
\right\}.
\label{eq:rescaled-weyl-power-hypotheses}
\end{equation}
The last condition ensures a nonzero, genuinely degree-$d$ action, and in particular
$\norm{A_d^d\ket{\psi_d}}>0$.  For $p_d(z)=z^d$, the three normalizations can be read off directly:
\begin{equation}
\norm{p_d}_{\infty,\partial\Disk}=1,
\qquad
\norm{2p_d-p_d(0)}_{\infty,\partial\Disk}=2,
\qquad
\norm{2p_d(s\,\cdot)-p_d(0)}_{\infty,\partial\Disk}=2s^d.
\label{eq:rescaled-weyl-power-normalizations}
\end{equation}
Their state-preparation query costs therefore obey
\begin{align}
Q_{d\text{-reg}}
&=\Theta\!\left(
\frac d{\norm{A_d^d\ket{\psi_d}}}
\right),
\label{eq:rescaled-weyl-power-dregular}\\
Q_{\mathrm{Weyl}}^{(1)}
&=\mathcal O\!\left(
\frac{2d}{\norm{A_d^d\ket{\psi_d}}}
\right),
\label{eq:rescaled-weyl-power-unscaled}\\
Q_{\mathrm{Weyl}}^{(s)}
&=\widetilde{\mathcal O}\!\left(
\frac{2ds^d}{
(s-w(A_d))\norm{A_d^d\ket{\psi_d}}
}
\right).
\label{eq:rescaled-weyl-power-scaled}
\end{align}
Any fixed
$s\in(\sup_{m\geq2}w(A_m),1)$ makes the third exponentially smaller.  A sharper choice that uses the numerical radius directly is
\begin{equation}
s_d=w(A_d)\left(1+\frac1d\right),
\label{eq:rescaled-weyl-power-s-choice}
\end{equation}
which is below one for all sufficiently large $d$.  Since
$s_d-w(A_d)=w(A_d)/d$, the common factor
$\norm{A_d^d\ket{\psi_d}}^{-1}$ cancels in the comparison and
\begin{equation}
\frac{Q_{\mathrm{Weyl}}^{(s_d)}}{Q_{d\text{-reg}}}
=\widetilde{\mathcal O}\!\left(
2d\,w(A_d)^{d-1}
\left(1+\frac1d\right)^d
\right)
=\widetilde{\mathcal O}\!\left(d\,w(A_d)^d\right).
\label{eq:rescaled-weyl-power-ratio}
\end{equation}
Here $\norm{A_d}=1$ implies $w(A_d)\geq1/2$, and
$(1+1/d)^d\leq\ee$, so the last simplification changes only a constant factor.  Because $\sup_{d\geq2}w(A_d)<1$, the ratio decreases exponentially in $d$ even after including coherent amplification.  It remains to exhibit a matrix family satisfying \eqref{eq:rescaled-weyl-power-hypotheses}; the following weighted shifts provide such a witness.

\subsection{A weighted-shift family realizing the hypotheses}
\label{app:rescaled-weyl-weighted-shift}

Fix $d\geq2$ and $0<\eta<1$, and consider
\begin{equation}
A_{d,\eta}=
\begin{pmatrix}
0&1&0&\cdots&0\\
0&0&\eta&\ddots&\vdots\\
\vdots&\ddots&\ddots&\ddots&0\\
0&\cdots&0&0&\eta\\
0&\cdots&\cdots&0&0
\end{pmatrix}
\in\C^{(d+1)\times(d+1)}.
\label{eq:rescaled-shift-matrix}
\end{equation}
Thus
\begin{equation}
A_{d,\eta}\ket{2}=\ket{1},
\qquad
A_{d,\eta}\ket{j}=\eta\ket{j-1}
\quad(3\leq j\leq d+1).
\label{eq:rescaled-shift-action}
\end{equation}
Its singular values are $1,\eta,\ldots,\eta,0$, and hence
\begin{equation}
\norm{A_{d,\eta}}=1.
\label{eq:rescaled-shift-operator-norm}
\end{equation}
Moreover,
\begin{equation}
A_{d,\eta}^{d}=\eta^{d-1}\ket{1}\!\bra{d+1}\neq0,
\qquad
A_{d,\eta}^{d+1}=0.
\label{eq:rescaled-shift-power}
\end{equation}
The minimal polynomial is therefore $z^{d+1}$, so a degree-$d$ polynomial does not collapse to a lower-degree polynomial when evaluated at $A_{d,\eta}$.

Let
\begin{equation}
X_\theta=\RePart(\ee^{-\ii\theta}A_{d,\eta})
=\frac{\ee^{-\ii\theta}A_{d,\eta}
+\ee^{\ii\theta}A_{d,\eta}^\dagger}{2}.
\label{eq:rescaled-shift-angle}
\end{equation}
For
\begin{equation}
D_\theta=\diag(1,\ee^{-\ii\theta},\ee^{-2\ii\theta},\ldots,\ee^{-d\ii\theta}),
\label{eq:rescaled-shift-diagonal-unitary}
\end{equation}
one has $X_\theta=D_\theta^\dagger X_0D_\theta$.  Thus all angular matrices are unitarily similar, and
\begin{equation}
w(A_{d,\eta})
=\norm{X_0}
=\frac12\lambda_{\max}
\begin{pmatrix}
0&1&0&\cdots&0\\
1&0&\eta&\ddots&\vdots\\
0&\eta&0&\ddots&0\\
\vdots&\ddots&\ddots&\ddots&\eta\\
0&\cdots&0&\eta&0
\end{pmatrix}.
\label{eq:rescaled-shift-numerical-radius-exact}
\end{equation}
The leading $2\times2$ principal submatrix gives the lower bound, and the maximum absolute row sum gives the upper bound,
\begin{equation}
\frac12\leq w(A_{d,\eta})
\leq\frac{1+\eta}{2}<1=\norm{A_{d,\eta}}.
\label{eq:rescaled-shift-numerical-radius-bound}
\end{equation}
In particular,
\begin{equation}
w(A_{d,\eta})\longrightarrow\frac12
\qquad(\eta\to0),
\label{eq:rescaled-shift-numerical-radius-limit}
\end{equation}
while the operator norm remains one.  For $d=2$, the exact value is
\begin{equation}
w(A_{2,\eta})=\frac{\sqrt{1+\eta^2}}{2}.
\label{eq:rescaled-shift-d2-radius}
\end{equation}

The remaining hypotheses of \eqref{eq:rescaled-weyl-power-hypotheses} hold with
\begin{equation}
p_d(z)=z^d,
\qquad
\ket{\psi_d}=\ket{d+1},
\qquad
s_{d,\eta}
=w(A_{d,\eta})\left(1+\frac1d\right).
\label{eq:rescaled-shift-power-data}
\end{equation}
The vectors $\ket{\psi_d},A_{d,\eta}\ket{\psi_d},\ldots,
A_{d,\eta}^d\ket{\psi_d}$ are nonzero multiples of distinct computational-basis states.  Hence the degree-$d$ action is genuinely new, and
\begin{equation}
p_d(A_{d,\eta})\ket{\psi_d}
=A_{d,\eta}^d\ket{d+1}
=\eta^{d-1}\ket1.
\label{eq:rescaled-shift-power-output}
\end{equation}
In particular,
$\norm{p_d(A_{d,\eta})\ket{\psi_d}}=\eta^{d-1}$.
For fixed $\eta\in(0,1)$, \eqref{eq:rescaled-shift-numerical-radius-bound} gives a uniform strict gap from one.  Therefore, for all sufficiently large $d$,
\begin{equation}
w(A_{d,\eta})<s_{d,\eta}<1,
\qquad
s_{d,\eta}-w(A_{d,\eta})
=\frac{w(A_{d,\eta})}{d}.
\label{eq:rescaled-shift-direct-gap}
\end{equation}
Substituting these expressions directly into
\eqref{eq:rescaled-weyl-power-dregular}--\eqref{eq:rescaled-weyl-power-scaled} gives
\begin{align}
Q_{d\text{-reg}}
&=\Theta\!\left(\frac d{\eta^{d-1}}\right),
\label{eq:rescaled-shift-power-dregular}\\
Q_{\mathrm{Weyl}}^{(1)}
&=\mathcal O\!\left(\frac{2d}{\eta^{d-1}}\right),
\label{eq:rescaled-shift-power-unscaled}\\
Q_{\mathrm{Weyl}}^{(s_{d,\eta})}
&=\widetilde{\mathcal O}\!\left(
\frac{
2d^2w(A_{d,\eta})^{d-1}
}{
\eta^{d-1}
}
\left(1+\frac1d\right)^d
\right).
\label{eq:rescaled-shift-power-scaled-explicit}
\end{align}
Consequently,
\begin{equation}
\frac{Q_{\mathrm{Weyl}}^{(s_{d,\eta})}}{Q_{d\text{-reg}}}
=\widetilde{\mathcal O}\!\left(
2d\,w(A_{d,\eta})^{d-1}
\left(1+\frac1d\right)^d
\right)
=\widetilde{\mathcal O}\!\left(
d\,w(A_{d,\eta})^d
\right)
=\widetilde{\mathcal O}\!\left(
d\left(\frac{1+\eta}{2}\right)^d
\right).
\label{eq:rescaled-shift-query-ratio}
\end{equation}
The first simplification uses $w(A_{d,\eta})\geq1/2$ and
$(1+1/d)^d\leq\ee$, while the final bound uses
$w(A_{d,\eta})\leq(1+\eta)/2$.  Thus the ratio decreases exponentially in $d$ for every fixed $\eta<1$.  In the small-$\eta$ limiting regime it becomes
\begin{equation}
\frac{Q_{\mathrm{Weyl}}^{(s_{d,\eta})}}{Q_{d\text{-reg}}}
=\widetilde{\mathcal O}\!\left(\frac d{2^d}\right).
\label{eq:rescaled-shift-query-ratio-limit}
\end{equation}
The common factor $\eta^{-(d-1)}$ is intrinsic to the small output norm
$\norm{A_{d,\eta}^d\ket{d+1}}=\eta^{d-1}$ and cancels from the relative comparison.  Thus the exponential advantage comes from the numerical-radius-dependent normalization, rather than from omitting the output-amplitude cost.

\subsection[General polynomial output on the weighted shift]{General degree-$d$ polynomial output on the weighted-shift witness}

Let
\begin{equation}
p_d(z)=a_0+a_1z+\cdots+a_dz^d,
\qquad a_d\neq0,
\label{eq:rescaled-shift-polynomial}
\end{equation}
and choose the input state $\ket{\psi}=\ket{d+1}$.  Directly from \eqref{eq:rescaled-shift-action},
\begin{equation}
A_{d,\eta}^k\ket{d+1}=
\begin{cases}
\eta^k\ket{d+1-k},&0\leq k\leq d-1,\\
\eta^{d-1}\ket{1},&k=d.
\end{cases}
\label{eq:rescaled-shift-basis-powers}
\end{equation}
Consequently,
\begin{equation}
\begin{aligned}
p_d(A_{d,\eta})\ket{d+1}
={}&a_0\ket{d+1}+a_1\eta\ket d+\cdots
+a_{d-1}\eta^{d-1}\ket2\\
&+a_d\eta^{d-1}\ket1,
\end{aligned}
\label{eq:rescaled-shift-polynomial-output}
\end{equation}
and
\begin{equation}
\norm{p_d(A_{d,\eta})\ket{d+1}}^2
=\abs{a_0}^2+
\sum_{k=1}^{d-1}\abs{a_k}^2\eta^{2k}
+\abs{a_d}^2\eta^{2d-2}.
\label{eq:rescaled-shift-output-norm}
\end{equation}
The same output norm appears in all three methods above.  It therefore cancels from their relative comparison, leaving the per-attempt cost and the block normalization as the distinguishing quantities.

\subsection{Regime of exponential advantage}

Suppose the matrix family has a uniform numerical-radius gap,
\[
\sup_d w(A_d)<1.
\]
Numerical-radius rescaling then trades the coherent-amplification factor $(s-w(A_d))^{-1}$ for the degree-dependent damping $s^k$.  For $p_d(z)=z^d$, this damping is exponential in $d$.  Equation~\eqref{eq:rescaled-weyl-total} includes the inner amplification and its precision dependence, so the resulting separation applies to the complete state-preparation query cost.  The weighted-shift family realizes the required uniform gap together with genuinely degree-$d$ action and therefore exhibits the claimed exponential advantage in block normalization, post-selection, and total state-preparation queries.

\section{Boundary semisimplicity}
\label{app:boundary}

\subsection{Accretive matrices}

\begin{lemma}[Boundary eigenvalues of an accretive matrix]
\label{lem:accretive-boundary}
Let $B=L+\ii H$ with $L=L^\dagger\succeq0$ and $H=H^\dagger$.  Every eigenvalue of $B$ lies in the closed right half-plane.  If $Bv=\ii\alpha v$ for a real $\alpha$, then
\begin{equation}
Lv=0,
\qquad
Hv=\alpha v,
\qquad
B^\dagger v=-\ii\alpha v.
\label{eq:accretive-boundary-relations}
\end{equation}
Consequently, every eigenvalue of $B$ on the imaginary axis is semisimple.
\end{lemma}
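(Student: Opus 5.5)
The plan is to work entirely with quadratic forms, since the hypothesis $L\succeq0$ is only accessible through $\ip{v}{Lv}\geq0$. First, for an eigenpair $Bv=\lambda v$ with $\norm{v}=1$, I would write
\[
\lambda=\ip{v}{Bv}=\ip{v}{Lv}+\ii\ip{v}{Hv}.
\]
Because $L$ and $H$ are Hermitian, both inner products are real. Hence $\RePart\lambda=\ip{v}{Lv}\geq0$, and every eigenvalue lies in the closed right half-plane.

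Next suppose $\lambda=\ii\alpha$ with $\alpha$ real. The same computation gives $\ip{v}{Lv}=0$. Since $L\succeq0$, I can write $\ip{v}{Lv}=\norm{L^{1/2}v}^2$, so $L^{1/2}v=0$ and therefore $Lv=0$. Substituting into $Bv=Lv+\ii Hv=\ii\alpha v$ gives $Hv=\alpha v$. Then $B^\dagger v=Lv-\ii Hv=-\ii\alpha v$, which is \eqref{eq:accretive-boundary-relations}.

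For semisimplicity, I would rule out a Jordan chain of length two at $\ii\alpha$. Suppose $(B-\ii\alpha\Id)w=v$ with $v\neq0$ an eigenvector. Taking the inner product with $v$ gives
\[
\norm{v}^2=\ip{v}{(B-\ii\alpha\Id)w}=\ip{(B^\dagger+\ii\alpha\Id)v}{w}=0,
\]
where the last equality uses $B^\dagger v=-\ii\alpha v$. This contradicts $v\neq0$. Hence $\ker(B-\ii\alpha\Id)^2=\ker(B-\ii\alpha\Id)$, so algebraic and geometric multiplicities coincide and the eigenvalue is semisimple.

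There is no real obstacle here. The only point needing care is the step from $\ip{v}{Lv}=0$ to $Lv=0$, which uses positive semidefiniteness through the square root $L^{1/2}$ (or Cauchy--Schwarz for the semi-inner product $\ip{\cdot}{L\,\cdot}$). That fact is exactly what makes the left eigenvector coincide with the right one, and the Jordan-chain argument then follows.
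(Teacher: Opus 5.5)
Your proposal is correct and follows essentially the same argument as the paper: the quadratic form gives $\RePart\lambda=\ip{v}{Lv}\geq0$, positivity turns $\ip{v}{Lv}=0$ into $Lv=0$, and pairing a putative Jordan-chain vector against $v$ via $(B-\ii\alpha\Id)^\dagger v=0$ gives the contradiction. Your explicit use of $L^{1/2}$ and the kernel identity $\ker(B-\ii\alpha\Id)^2=\ker(B-\ii\alpha\Id)$ only spells out steps the paper leaves implicit.
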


\begin{proof}
For a normalized eigenvector $Bv=\lambda v$,
$\RePart\lambda=\ip{v}{Lv}\geq0$.  If $\lambda=\ii\alpha$, then
$\ip{v}{Lv}=0$ and positivity gives $Lv=0$.  Hence $Hv=\alpha v$ and $B^\dagger v=-\ii\alpha v$.  If a generalized eigenvector $u$ satisfied $(B-\ii\alpha\Id)u=v$, then
\[
\norm{v}^2
=\ip{v}{(B-\ii\alpha\Id)u}
=\ip{(B-\ii\alpha\Id)^\dagger v}{u}=0,
\]
a contradiction.
\end{proof}

\subsection{Numerical-radius contractions}
\label{app:semisimple}

\begin{lemma}
\label{lem:boundary-semisimple}
Let $w(A)\leq1$, and suppose $Av=\lambda v$ with $\norm{v}=1$ and
$\abs{\lambda}=1$.  Then
\begin{equation}
A^\dagger v=\overline\lambda v.
\label{eq:boundary-adjoint}
\end{equation}
Consequently, eigenvalues of $A$ on the unit circle have no nontrivial Jordan blocks.
\end{lemma}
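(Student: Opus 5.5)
The proof will use the angle matrices. After a phase rotation we may assume $\lambda=1$: replace $A$ by $\overline\lambda A$, which has the same numerical radius and satisfies $(\overline\lambda A)v=v$. The target relation is then $A^\dagger v=v$.

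With $\lambda=1$ we have $\ip{v}{Av}=1$. Take the angle $\theta=0$, so that $X_0=L=(A+A^\dagger)/2$. Then
\[
\ip{v}{X_0v}=\RePart\ip{v}{Av}=1 .
\]
By \Cref{lem:angle-radius}, $\norm{X_0}\leq w(A)\leq1$. Hence $\Id-X_0\succeq0$ and $\ip{v}{(\Id-X_0)v}=0$. Positive semidefiniteness then forces $(\Id-X_0)v=0$, that is, $X_0v=v$. This gives $A^\dagger v=2X_0v-Av=2v-v=v$, which is \eqref{eq:boundary-adjoint} after undoing the rotation.

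For the Jordan statement I argue as in \Cref{lem:accretive-boundary}. Suppose there were a generalized eigenvector $u$ with $(A-\lambda\Id)u=v$. Then
\[
\norm{v}^2=\ip{v}{(A-\lambda\Id)u}=\ip{(A^\dagger-\overline\lambda\Id)v}{u}=0,
\]
which is a contradiction.

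The main step is the passage from a zero expectation of a positive semidefinite operator to its kernel. This needs $\norm{X_0}\leq1$ rather than only $\norm{A}\leq1$, and \Cref{lem:angle-radius} supplies exactly that bound. Everything else is routine.
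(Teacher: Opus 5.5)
Your proposal is correct and follows essentially the same route as the paper: your rotation to $\lambda=1$ followed by $X_0$ is exactly the paper's $Y=\RePart(\overline\lambda A)$, for which $Y\preceq\Id$ and $\ip{v}{Yv}=1$ force $Yv=v$ and hence $A^\dagger v=\overline\lambda v$. Your Jordan-block argument matches the paper's verbatim.
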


\begin{proof}
Set
\[
Y=\RePart(\overline\lambda A)
=\frac{\overline\lambda A+\lambda A^\dagger}{2}.
\]
The numerical-radius condition gives $Y\preceq\Id$, while
\[
\ip{v}{Yv}=\RePart(\overline\lambda\ip{v}{Av})=1.
\]
Hence $(\Id-Y)v=0$.  Using $Av=\lambda v$,
\[
v=Yv=\frac12(v+\lambda A^\dagger v),
\]
which proves \eqref{eq:boundary-adjoint}.

If a generalized eigenvector $u$ satisfied $(A-\lambda\Id)u=v$, then
\[
1=\norm{v}^2
=\ip{v}{(A-\lambda\Id)u}
=\ip{(A-\lambda\Id)^\dagger v}{u}=0,
\]
a contradiction.
\end{proof}

This semisimplicity explains why disk-algebra boundary values, rather than boundary derivatives, suffice at eigenvalues of modulus one.

\section{Two-dimensional proof of the qubitized Chebyshev identity}
\label{app:qubitization}

Let $\mathcal V=\mathcal V^\dagger$, $\mathcal V^2=\Id$, and let $\Pi$ be an orthogonal projector.  Define
\begin{equation}
X=\Pi\mathcal V\Pi|_{\Ran\Pi},
\qquad
R=2\Pi-\Id,
\qquad
W=R\mathcal V.
\label{eq:abstract-qubitization}
\end{equation}
Then $X=X^\dagger$ and $\norm{X}\leq1$.  Let $\ket{x}\in\Ran\Pi$ be a normalized eigenvector of $X$ with eigenvalue $x\in[-1,1]$.  For $\abs{x}<1$, define
\begin{equation}
\ket{x^\perp}
=\frac{(\Id-\Pi)\mathcal V\ket{x}}{\sqrt{1-x^2}}.
\label{eq:x-perp}
\end{equation}
In the basis $\{\ket{x},\ket{x^\perp}\}$,
\begin{equation}
\mathcal V=
\begin{pmatrix}
x&\sqrt{1-x^2}\\
\sqrt{1-x^2}&-x
\end{pmatrix},
\qquad
R=
\begin{pmatrix}1&0\\0&-1\end{pmatrix},
\label{eq:VR-matrices}
\end{equation}
and hence
\begin{equation}
W=
\begin{pmatrix}
x&\sqrt{1-x^2}\\
-\sqrt{1-x^2}&x
\end{pmatrix}.
\label{eq:W-matrix}
\end{equation}
Writing $x=\cos\varphi$, this is a planar rotation through angle $-\varphi$, so
\begin{equation}
\bra{x}W^m\ket{x}=\cos(m\varphi)=T_m(x).
\label{eq:scalar-chebyshev}
\end{equation}
The endpoints $x=\pm1$ follow by continuity.  Summing over the spectral decomposition of $X$ gives
\begin{equation}
\Pi W^m\Pi=T_m(X)\Pi.
\label{eq:abstract-chebyshev}
\end{equation}
Taking $\mathcal V=\mathcal V_j$, $\Pi=\Pi_X$, and $X=\widetilde X_j$ proves \eqref{eq:qubitized-chebyshev}.

\section{Operator-norm error and resource accounting}\label{app:error-resources}
\label{subsec:error-resources}

The ideal circuit acts on the actual principal block $\widetilde A$.  Since $\norm{A}\leq1$ and $\norm{\widetilde A}\leq1$, the telescoping identity gives
\begin{equation}
\norm{\widetilde A^m-A^m}\leq m\delta_A.
\label{eq:power-perturbation}
\end{equation}
For $P_d(z)=\sum_{m=0}^dc_mz^m$ and $p_d$ from \eqref{eq:pd-from-Pd},
\begin{equation}
\norm{p_d(\widetilde A)-p_d(A)}
\leq\frac{\delta_A}{2}\sum_{m=1}^d m\abs{c_m}.
\label{eq:polynomial-perturbation}
\end{equation}
Parseval's identity and Cauchy--Schwarz imply
\begin{equation}
\sum_{m=1}^d m\abs{c_m}
\leq\norm{P_d}_{\infty,\partial\Disk}
\sqrt{\frac{d(d+1)(2d+1)}6}.
\label{eq:coefficient-free}
\end{equation}

Let $\varepsilon_{\mathrm{circ}}$ denote the operator-norm difference between the implemented circuit and the ideal circuit.  If ideal unitaries $V_1,\ldots,V_q$ are replaced by $\widehat V_1,\ldots,\widehat V_q$, the telescoping product identity gives
\begin{equation}
\norm{\widehat V_q\cdots\widehat V_1-V_q\cdots V_1}
\leq\sum_{k=1}^q\norm{\widehat V_k-V_k}.
\label{eq:unitary-telescoping}
\end{equation}
Grouping the elementary errors by operation type yields the convenient bound
\begin{equation}
\varepsilon_{\mathrm{circ}}
\leq\varepsilon_{\mathrm{gqsp}}
+q_{\mathrm{sig}}\eta_{\mathrm{sig}}
+q_{\mathrm{rot}}\eta_{\mathrm{rot}},
\label{eq:circuit-error}
\end{equation}
where all errors are operator-norm errors of unitaries.  Here $q_{\mathrm{sig}}$ and $q_{\mathrm{rot}}$ are the numbers of signal calls and approximated one-qubit rotations, $\eta_{\mathrm{sig}}$ and $\eta_{\mathrm{rot}}$ are uniform per-operation bounds, and $\varepsilon_{\mathrm{gqsp}}$ is the residual GQSP implementation error.  The uniform angle state in \eqref{eq:angle-state} is prepared exactly by Hadamard gates in the ideal circuit; any physical implementation error may be included in the same telescoping sum.

\begin{theorem}[End-to-end operator-norm bound]
\label{thm:total-error}
Let $g\in\DiskAlg$, let $P_d(z)=\sum_{m=0}^dc_mz^m$ satisfy \eqref{eq:approximation-assumption}, and let $\widetilde G$ be the normalized success block of the implemented circuit in \Cref{thm:coherent-algorithm}.  Then
\begin{equation}
\norm{\widetilde G-\frac{g(A)}{C_p}}
\leq
\frac{\varepsilon_{\mathrm{app}}}{C_p}
+\frac{\delta_A}{2C_p}\sum_{m=1}^d m\abs{c_m}
+\varepsilon_{\mathrm{circ}}.
\label{eq:total-error-coefficients}
\end{equation}
In particular,
\begin{equation}
\norm{\widetilde G-\frac{g(A)}{C_p}}
\leq
\frac{\varepsilon_{\mathrm{app}}}{C_p}
+\frac{\delta_A}{2}
\sqrt{\frac{d(d+1)(2d+1)}6}
+\varepsilon_{\mathrm{circ}}.
\label{eq:total-error-simple}
\end{equation}
If $R>1$ and $g$ is holomorphic on a neighborhood of the closed disk $\{z:\abs{z}\leq R\}$, then with $M_g(R)=\max_{\abs{z}=R}\abs{g(z)}$ one also has
\begin{equation}
\norm{\widetilde G-\frac{g(A)}{C_p}}
\leq
\frac1{C_p}\left[
\varepsilon_{\mathrm{app}}
+\delta_AM_g(R)\frac{R}{(R-1)^2}
\right]
+\varepsilon_{\mathrm{circ}}.
\label{eq:total-error-analytic}
\end{equation}
\end{theorem}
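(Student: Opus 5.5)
The plan is to split the normalized error by the triangle inequality into three pieces: circuit error, block-encoding perturbation, and scalar approximation. Let $G_{\mathrm{ideal}}$ denote the success block of the ideal circuit acting on the actual principal block $\widetilde A$. By \Cref{thm:coherent-algorithm}, $G_{\mathrm{ideal}}=p_d(\widetilde A)/C_p$ exactly. The success block is a compression of the full circuit unitary by isometries, so its distance from $\widetilde G$ is at most the unitary distance $\varepsilon_{\mathrm{circ}}$. We then write
\[
\widetilde G-\frac{g(A)}{C_p}
=\bigl(\widetilde G-G_{\mathrm{ideal}}\bigr)
+\frac{p_d(\widetilde A)-p_d(A)}{C_p}
+\frac{p_d(A)-g(A)}{C_p}.
\]
The first term is bounded by $\varepsilon_{\mathrm{circ}}$. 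The second is bounded by $\frac{\delta_A}{2C_p}\sum_m m\abs{c_m}$ via \eqref{eq:polynomial-perturbation}. That bound itself comes from \eqref{eq:power-perturbation}, using $p_d=c_0+\tfrac12\sum_{m\ge1}c_mz^m$ and the telescoping estimate $\norm{\widetilde A^m-A^m}\le m\delta_A$, valid because both matrices are contractions. The third term is bounded by $\varepsilon_{\mathrm{app}}/C_p$ via \eqref{eq:constant-one-transfer} in \Cref{thm:lifted-approximation}. That step needs $w(A)\le\norm{A}\le1$, which holds. Together these give \eqref{eq:total-error-coefficients}.

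For \eqref{eq:total-error-simple}, we replace the coefficient moment using \eqref{eq:coefficient-free}. Parseval gives $\sum\abs{c_m}^2\le\norm{P_d}^2_{\infty,\partial\Disk}=C_p^2$, and Cauchy--Schwarz against the weights $m$ gives $\sum m\abs{c_m}\le C_p\sqrt{\sum m^2}$. The factor $C_p$ then cancels the $1/C_p$, leaving $\frac{\delta_A}{2}\sqrt{d(d+1)(2d+1)/6}$.

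For \eqref{eq:total-error-analytic}, we assume $P_d$ is the Taylor truncation of $F_g=2g-g(0)$, which is the choice behind \eqref{eq:degree-bound}. This is the delicate point: the bound only makes sense for that choice of $P_d$, and I would state the assumption explicitly. Then $c_m=2a_m$ for $m\ge1$, where $a_m$ are the Taylor coefficients of $g$, so $\frac12\sum m\abs{c_m}=\sum m\abs{a_m}$. Cauchy's estimate $\abs{a_m}\le M_g(R)R^{-m}$ gives
\[
\sum_{m\ge1}m\abs{a_m}\le M_g(R)\sum_{m\ge1}mR^{-m}=M_g(R)\frac{R}{(R-1)^2}.
\]
Substituting this into \eqref{eq:total-error-coefficients} finishes the proof.
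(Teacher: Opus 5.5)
Your derivations of \eqref{eq:total-error-coefficients} and \eqref{eq:total-error-simple} match the paper's argument. The problem is \eqref{eq:total-error-analytic}.

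\textbf{The gap in the analytic bound.} You prove \eqref{eq:total-error-analytic} only when $P_d$ is the Taylor truncation of $F_g$, and you state that the bound only makes sense for that choice. That is not so. The theorem asserts the estimate for every $P_d$ satisfying \eqref{eq:approximation-assumption}, and it holds in that generality.

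Your route cannot reach the general case. It controls the $\delta_A$ term through the coefficient moment $\frac12\sum_m m\abs{c_m}$ of $P_d$, and for an arbitrary admissible $P_d$ this moment need not be bounded by $M_g(R)R/(R-1)^2$. For example, take $g\equiv\gamma$ and $P_d(z)=\gamma+\varepsilon_{\mathrm{app}}z^d$. This $P_d$ satisfies \eqref{eq:approximation-assumption}, and $\frac12\sum_m m\abs{c_m}=d\varepsilon_{\mathrm{app}}/2$. That exceeds $M_g(R)R/(R-1)^2=\abs{\gamma}R/(R-1)^2$ once $\abs{\gamma}$ is small. So the analytic bound is not a corollary of \eqref{eq:total-error-coefficients} in general.

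\textbf{The missing idea.} Let the block-encoding perturbation act on $g$ rather than on $p_d$. Write
$p_d(\widetilde A)-g(A)=[p_d(\widetilde A)-g(\widetilde A)]+[g(\widetilde A)-g(A)]$.
\begin{itemize}
\item Since $w(\widetilde A)\leq\norm{\widetilde A}\leq1$, \Cref{thm:lifted-approximation} applied at $\widetilde A$ bounds the first bracket by $\varepsilon_{\mathrm{app}}$.
\item Expand $g=\sum_m b_mz^m$, which converges at both contractions because $R>1$. The telescoping bound \eqref{eq:power-perturbation} and Cauchy's estimate $\abs{b_m}\leq M_g(R)R^{-m}$ give $\norm{g(\widetilde A)-g(A)}\leq\delta_A\sum_{m\geq1}m\abs{b_m}\leq\delta_AM_g(R)R/(R-1)^2$.
\end{itemize}
Dividing by $C_p$ and adding $\varepsilon_{\mathrm{circ}}$ gives \eqref{eq:total-error-analytic}. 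This uses only the Taylor coefficients of $g$, never the coefficients $c_m$ of $P_d$, so it needs no assumption on how $P_d$ was constructed. This is how the paper proves it. Your Taylor-truncation computation is correct, but it covers only one special case.
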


\begin{proof}
The ideal success block is $p_d(\widetilde A)/C_p$.  Add and subtract
$p_d(A)/C_p$, apply \eqref{eq:constant-one-transfer} and
\eqref{eq:polynomial-perturbation}, and then add the circuit error.  Equation~\eqref{eq:total-error-simple} follows from \eqref{eq:coefficient-free} and
$\norm{P_d}_{\infty,\partial\Disk}\leq C_p$.

For the analytic bound, write $g(z)=\sum_{m\geq0}b_mz^m$.  Cauchy's estimate and \eqref{eq:power-perturbation} give
\[
\norm{g(\widetilde A)-g(A)}
\leq\delta_A\sum_{m=1}^\infty m\abs{b_m}
\leq\delta_AM_g(R)\frac{R}{(R-1)^2}.
\]
Because $w(\widetilde A)\leq1$, \Cref{thm:lifted-approximation} also gives
$\norm{p_d(\widetilde A)-g(\widetilde A)}\leq\varepsilon_{\mathrm{app}}$.  Adding the circuit error proves \eqref{eq:total-error-analytic}.
\end{proof}

\paragraph{A computable normalization bound.}
The coefficient estimate
\begin{equation}
\norm{P_d}_{\infty,\partial\Disk}\leq\sum_{m=0}^d\abs{c_m}
\label{eq:l1-bound}
\end{equation}
is immediate.  A sharper bound is obtained from a uniform grid
$t_k=2\pi k/K$.  Define
\begin{equation}
M_K=\max_{0\leq k<K}\abs{P_d(\ee^{\ii t_k})},
\qquad
D_1(P_d)=\sum_{m=1}^d m\abs{c_m}.
\label{eq:grid-quantities}
\end{equation}
Every $t$ lies within $\pi/K$ of a grid point, while
$\abs{\frac{\dd}{\dd t}P_d(\ee^{\ii t})}\leq D_1(P_d)$.  Hence
\begin{equation}
\norm{P_d}_{\infty,\partial\Disk}
\leq M_K+\frac{\pi}{K}D_1(P_d).
\label{eq:grid-bound}
\end{equation}
The right-hand side gives a certified computable upper bound on $C_p$.

\paragraph{Normalized output-state preparation.}
For a normalized input state $\ket{\psi}$, the ideal success probability is
\begin{equation}
P_{\mathrm{succ}}
=\frac{\norm{p_d(\widetilde A)\ket{\psi}}^2}{C_p^2}.
\label{eq:success-probability}
\end{equation}
When reflections about the input state and success subspace are available, amplitude amplification prepares the normalized output using
\begin{equation}
\mathcal{O}\!\left(
\frac{dC_p}{\norm{p_d(\widetilde A)\ket{\psi}}}
\right)
\label{eq:state-preparation}
\end{equation}
controlled matrix-oracle calls.  The distinction between block construction and normalized output-state preparation is explicit in this factor.

If
\begin{equation}
\epsilon_G=\norm{\widetilde G-\frac{g(A)}{C_p}}
<\frac{\norm{g(A)\ket{\psi}}}{C_p},
\label{eq:normalized-state-condition}
\end{equation}
then the normalized postselected state obeys
\begin{equation}
\norm{
\frac{\widetilde G\ket{\psi}}{\norm{\widetilde G\ket{\psi}}}
-
\frac{g(A)\ket{\psi}}{\norm{g(A)\ket{\psi}}}
}
\leq
\frac{2C_p\epsilon_G}
{\norm{g(A)\ket{\psi}}-C_p\epsilon_G}.
\label{eq:normalized-state-error}
\end{equation}
This follows from the elementary normalization inequality
$\norm{x/\norm{x}-y/\norm{y}}
\leq2\norm{x-y}/(\norm{y}-\norm{x-y})$.

\end{document}